\title{On the Parameterized Complexity of Contraction to Generalization of Trees}
\titlerunning{Contraction to Generalization of Trees} 
\author[1]{Akanksha Agrawal}
\author[1,2]{Saket Saurabh}
\author[2]{Prafullkumar Tale}
\affil[1]{Department of Informatics, University of Bergen, Bergen, Norway \\
  \texttt{akanksha.agrawal@uib.no}}
\affil[2]{The Institute of Mathematical Sciences, HBNI, Chennai, India\\
  \texttt{{saket|pptale}@imsc.res.in}}
\authorrunning{A. Agrawal, S. Saurabh and P. Tale} 
\subjclass{G.2.2 Graph Algorithms, I.1.2 Analysis of Algorithms}
\keywords{Graph Contraction, Fixed Parameter Tractability, Graph Algorithms, Generalization of Trees}
\newcommand{\calW}{\mathcal{W}}
\newcommand{\bbT}{\mathbb{T}_{\ell}}
\newcommand{\treelshort}{\mathbb{T}_{\ell}\textsc{C}}
\newtheorem{observation}{Observation}
\newtheorem{proposition}{Proposition}
\newtheorem{reduction rule}{Reduction Rule}[section]
\newtheorem{branching rule}{Branching Rule}[section]
\newcommand{\FPT}{\textsf{FPT}}
\newcommand{\NP}{\textsf{NP}}
\newcommand{\NPH}{\textsf{NP}-hard}
\newcommand{\NPC}{\textsf{NP}-complete}
\newcommand{\CONPpoly}{\textsf{coNP/poly}}
\newcommand{\no}{\emph{no}}
\newcommand{\yes}{\emph{yes}}
\newcommand{\ncol}{2\sqrt{\ell}+2}
\newcommand{\cvcfull}{\textsc{Connected Vertex Cover}}
\newcommand{\cvc}{\textsc{CVC}}
\newcommand{\treec}{$\mathbb{T}$-\textsc{Contraction}}
\newcommand{\treeloc}{$\mathbb{T}_{\ell_1}$-\textsc{Contraction}}
\newcommand{\treeltc}{$\mathbb{T}_{\ell_2}$-\textsc{Contraction}}
\newcommand{\calF}{\mathcal{F}}
\newcommand{\OO}{\mathcal{O}}
\newcommand{\treelc}{$\mathbb{T}_\ell$-\textsc{Contraction}}
\newcommand{\treelpc}{$\mathbb{T}_{\ell'}$-\textsc{Contraction}}
\newcommand{\defparproblem}[4]{
 \vspace{3mm}
\noindent\fbox{
  \begin{minipage}{.95\textwidth}
  \begin{tabular*}{\textwidth}{@{\extracolsep{\fill}}lr} \textsc{#1}  & {\bf{Parameter:}} #3 \\ \end{tabular*}
  {\bf{Input:}} #2  \\
  {\bf{Question:}} #4
  \end{minipage}
  }
  \vspace{2mm}
}
\begin{document}

\maketitle

\begin{abstract}
For a family of graphs $\cal F$, the $\mathcal{F}$-\textsc{Contraction} problem takes as an input a graph $G$ and an integer $k$, and the goal is to decide if there exists $S \subseteq E(G)$ of size at most $k$ such that $G/S$ belongs to $\cal F$. Here, $G/S$ is the graph obtained from $G$ by contracting all the edges in $S$. Heggernes et al.~[{\em Algorithmica (2014)}] were the first to study edge contraction problems in the realm of Parameterized Complexity. They studied $\cal F$-\textsc{Contraction}  when $\cal F$ is a simple family of graphs such as trees and paths. In this paper, we study the $\mathcal{F}$-\textsc{Contraction} problem, where $\cal F$ generalizes the family of trees. In particular, we define this generalization in a ``parameterized way''. Let $\mathbb{T}_\ell$ be the family of graphs such that each graph in $\mathbb{T}_\ell$ can be made into a tree by deleting at most $\ell$ edges. Thus, the problem we study is \treelc. We design an \FPT\ algorithm for  \treelc\ running in time $\mathcal{O}((\ncol)^{\mathcal{O}(k + \ell)} \cdot n^{\mathcal{O}(1)})$. Furthermore, we show that the problem does not admit a polynomial kernel when parameterized by $k$. Inspired by the negative result for the kernelization, we design a lossy kernel for \treelc\ of size $ \mathcal{O}([k(k + 2\ell)] ^{(\lceil {\frac{\alpha}{\alpha-1}\rceil + 1)}})$.
\end{abstract}

\section{Introduction}
Graph editing problems are one of the central problems in graph theory that have been extensively studied in the realm of Parameterized Complexity. Some of the important graph editing operations are vertex deletion, edge deletion, edge addition, and edge contraction. For a family of graphs $\cal F$, the $\cal F$-\textsc{Editing} problem takes as an input a graph $G$ and an integer $k$, and the goal is to decide whether or not we can obtain a graph in $\cal F$ by applying at most $k$ edit operations on $G$. In fact, the $\cal F$-\textsc{Editing} problem, where the edit operations are restricted to one of vertex deletion, edge deletion, edge addition, or edge contraction have also received a lot of attention in Parameterized Complexity. When we restrict the operations to only deletion operation (vertex/edge deletion) then the corresponding problem is called $\cal F$-\textsc{Vertex (Edge) Deletion} problem.  On the other hand if we only allow edge contraction then the corresponding problem is called $\mathcal{F}$-\textsc{Contraction}. The \textsc{$\mathcal{F}$-Editing} problem generalizes several \NPH\ problems such as  \textsc{Vertex Cover}, \textsc{Feedback vertex set}, {\sc Planar $\cal F$-Deletion}, {\sc Interval Vertex Deletion}, {\sc Chordal Vertex Deletion}, \textsc{Odd cycle transversal}, {\sc Edge Bipartization}, {\sc Tree Contraction}, {\sc Path Contraction}, {\sc Split Contraction}, {\sc Clique Contraction}, etc. Most of the studies in the Parameterized Complexity or the classical Complexity Theory have been restricted to combination of vertex deletion, edge deletion or edge addition. Only recently, edge contraction as an edit operation has started to gain attention in the realm of Parameterized Complexity.  In this paper, we add another family of graphs $\cal F$ -- a parameterized generalization of trees --  such that $\mathcal{F}$-\textsc{Contraction} is fixed parameter tractable (\FPT). We also explore the problem  from the viewpoints of  Kernelization Complexity as well as its new avatar the Lossy Kernelization. For more details on Parameterized Complexity we refer to the books of Downey and Fellows~\cite{DF97,DF-new}, Flum and Grohe~\cite{flumgrohe}, Niedermeier~\cite{niedermeier2006}, and Cygan et al.~\cite{saurabh-book}. 

Our starting point is the result of Heggernes et al.~\cite{treePathContract} who studied $\mathcal{F}$-\textsc{Contraction} when  $\cal F$ is the family of paths ($\mathbb{P}$) and trees ($\mathbb{T}$). To the best of our knowledge these were the first results concerning Parameterized Complexity of $\mathcal{F}$-\textsc{Contraction}  problems. They showed that $\mathbb{P}$-\textsc{Contraction} and 
$\mathbb{T}$-\textsc{Contraction} are \FPT. Furthermore, they showed that $\mathbb{T}$-\textsc{Contraction} does not admit a polynomial kernel. On the other hand $\mathbb{P}$-\textsc{Contraction} admits a polynomial kernel with at most  $5k+3$ vertices  (see~\cite{LiFCH17} for an improved bound of $3k+4$ on the number of vertices). Moreover, $\mathcal{F}$-\textsc{Contraction} is not \FPT (unless some unlikely collapse in Parameterized Complexity happens) even for simple family of graphs such as $P_t$-free graphs for some $t \geq 5$,  the family of $C_t$-free graphs for some $t \geq 4$~\cite{CliqueContractUpp,elimNew}, and the family of split graphs~\cite{AgrawalLSZ17}. Here, $P_t$ and $C_t$ denotes the path and cycle on $t$ vertices. In light of these mixed answers, two natural questions are: 
\begin{enumerate}
\item What additional parameter we can associate with $\mathbb{T}$-\textsc{Contraction} such that it admits a polynomial kernel? 
\item What additional parameter we can associate with $\mathbb{T}$-\textsc{Contraction} such that an \FPT\ algorithm with combination of these parameterizations leads to an algorithm that generalizes the \FPT\ algorithm on trees? 
\end{enumerate}
In our earlier paper (a superset of authors) we addressed the first question~\cite{AgarwalLST17}. In particular we studied  $\calF$-\textsc{Contraction}, where $\cal F$ is the {\em family of trees with at most $\ell$ leaves} (together with some other problems), and designed a polynomial kernel (hence an \FPT\ algorithm) with $\OO(k \ell)$ vertices. This was complimented by a matching kernel lower bound result. In this paper we focus on the second question. 

\subparagraph*{Our Problem and Results.} To define our problem formally let us define $\mathbb{T}_\ell$ to be the family of graphs such that each graph in $\mathbb{T}_\ell$ can be made into a tree by deleting at most $\ell$ edges. Thus the problem we study will be called \treelc. 

\defparproblem{\treelc}{A graph $G$ and an integer $k$.}{$k$}{Does there exist $S \subseteq E(G)$ of size at most $k$ such that $G/S \in \mathbb{T}_\ell$?}

Observe that for $\ell=0$, \treelc\ is the usual \treec. 
We design an \FPT\ algorithm for \treelc\ running in time $\mathcal{O}((\ncol)^{\mathcal{O}(k + \ell)} \cdot n^{\mathcal{O}(1)})$. Our algorithm follows the general approach of Heggernes et al.~\cite{treePathContract} for designing the algorithm for \treec. Also, we show that the problem does not admit a polynomial kernel, when parameterized by $k$, for any (fixed) $\ell \in \mathbb{N}$. Inspired by the negative result on kernelization, we design a lossy kernel for \treelc.

\subparagraph*{Related Works.} For several families of graphs $\cal F$, early papers by Watanabe et al.~\cite{contractEarly2,contractEarly3} and Asano and Hirata \cite{asano1983edge} showed that {\sc ${\cal F}$-Contraction} is \NP-complete. From the viewpoint of Parameterized Complexity these problems exhibit properties that are quite different from the problems where  the edit operations are restricted to deleting or adding vertices or edges. For instance, deleting $k$ edges from a graph such that the resulting graph is a tree is polynomial time solvable. On the other hand, Asano and Hirata showed that \treec\  is \NPH~\cite{asano1983edge}. Furthermore, a well-known result by Cai~\cite{cai1996} states that when $\cal F$ is a hereditary family of graphs with a finite set of forbidden induced subgraphs  then the graph modification problem defined by $\cal F$ and the edit operations restricted to vertex deletion, edge deletion, or edge addition admits an \FPT\ algorithm. Moreover, this result does not hold when the edit operation is edge contraction. Lokshtanov et al.~\cite{elimNew} and Cai and Guo \cite{CliqueContractUpp} independently showed that if $\cal F$ is either the family of $P_\ell$-free graphs for some $\ell\geq 5$ or the family of $C_\ell$-free graphs for some $\ell\geq 4$ then {\sc ${\cal F}$-Contraction} is W[2]-hard. Golovach et al.~\cite{planarContract} proved that if $\cal F$ is the family of planar graphs then {\sc ${\cal F}$-Contraction} is \FPT.  Belmonte et al.~\cite{Belmonte:2014} proved that the problem is \FPT\ for $\cal F$ being the family of degree constrained graphs like bounded degree,  (constant) degenerate and (constant) regular graphs. Moreover, Cai and Guo \cite{CliqueContractUpp} showed that in case $\cal F$ is the family of cliques, {\sc ${\cal F}$-Contraction} is solvable in time $2^{\OO(k\log k)}\cdot n^{\OO(1)}$, while in case $\cal F$ is the family of chordal graphs, the problem is W[2]-hard. Heggernes et al.~\cite{bipartiteContract} developed an \FPT\ algorithm for the case where $\cal F$ is the family of bipartite graphs (see~\cite{bipartiteContract2} for a faster algorithm). 

\section{Preliminaries} \label{sec:prelims-treelc}
In this section, we state some basic definitions and introduce terminologies from graph theory and algorithms. We also establish some of the notations that will be used throughout. We denote the set of natural numbers by $\mathbb{N}$ (including $0$). For $k \in \mathbb{N}$, by $[k]$ we denote the set $\{1,2,\ldots, k\}$. Let $X,Y$ be two sets. For a function $\varphi: X \rightarrow Y$ and $y \in Y$, by $\varphi^{-1}(y)$ we denote the set $\{x \in X \mid \varphi(x)=y\}$.

\subparagraph*{Graphs.} We use standard terminologies from the book of Diestel~\cite{diestel-book} for the graph related terms which are not explicitly defined here. We consider simple graphs. For a graph $G$, by $V(G)$ and $E(G)$ we denote the vertex and edge sets of $G$, respectively. For a vertex $v \in V(G)$, we use $deg_G(v)$ to denote the degree of $v$ in $G$, \emph{i.e.} the number of edges in $G$ that are incident to $v$. For $v \in V(G)$, by $N_G(v)$ we denote the set $\{u \in V(G) \mid vu \in E(G)\}$. We drop the subscript $G$ from $deg_G(v)$ and $N_G(v)$ whenever the context is clear. For a vertex subset $S \subseteq V(G)$, by $G[S]$ we denote the graph with the vertex set $S$ and the edge set as $\{vu \in E(G) \mid v, u \in S\}$. By $G - S$ we denote the graph $G[V(G) \setminus S]$. We say $S, S' \subseteq V(G)$ are \emph{adjacent} if there is $v \in S$ and $v' \in S'$ such that $vv' \in E(G)$. Further, an edge $uv \in E(G)$ is \emph{between} $S$ and $S'$ if $u \in S$ and $v \in S'$.

A sequence of vertices $P=(v_1,v_2, \ldots,v_q)$ is a \emph{path} in a graph $G$ if for all $i \in [q-1]$, $v_iv_{i+1} \in E(G)$. Furthermore, we call such a path as a path between $v_1$ and $v_q$. A graph is \emph{connected} if there is a path between every pair of its vertices. Otherwise, we call it a \emph{disconnected} graph. A maximal connected subgraph is called a \emph{component} in a graph. A graph is called $k$-\emph{vertex connected} or $k$-connected if for all $S' \subseteq V(G)$ such that the number of components in $G - S$ is more than the number of components in $G$ we have $|S'| \geq k$. That is, deleting a vertex subset of size less than $k$ cannot result in a graph with strictly more components.

A sequence of vertices $C=(v_1,v_2, \ldots,v_q)$ is a \emph{cycle} in a graph $G$ if for all $i \in [q]$, $v_iv_{i+1} \in E(G)$ (index computed modulo $q$). A graph is called a \emph{forest} or an \emph{acyclic graph} if it does not contain any cycle. A \emph{tree} is a connected acyclic graph. A vertex in a tree with exactly one neighbor is called a \emph{leaf}. The vertices in a tree which are not leaves are \emph{internal} vertices. 

For $\ell \in \mathbb{N}$, by $\bbT$ we denote the family of graphs from which we can obtain a tree using at most $\ell$ edge deletions. 
Observe that for any graph $G \in \bbT$, we have $|E(G)| \leq |V(G)| - 1 + \ell$. Moreover, for any connected graph $G$, if $|E(G)| \leq |V(G)| - 1 + \ell$ then $G \in \bbT$.

A vertex subset $S \subseteq V(G)$ is said to \emph{cover} an edge $uv \in E(G)$ if $S \cap \{u,v\} \neq \emptyset$. A vertex subset $S \subseteq V(G)$ is called a \emph{vertex cover} in $G$ if it covers all the edges in $G$. A \emph{minimum vertex cover} is a set $S\subseteq V(G)$ such that $S$ is a \emph{vertex cover} and for all $S'\subseteq V(G)$ such that $S'$ is a \emph{vertex cover}, we have $|S| \leq |S'|$. A \emph{vertex cover} $S$ in $G$ is said to be a \emph{connected-vertex cover} if $G[S]$ is a \emph{connected} graph. 

For $uv \in E(G)$, by \emph{contracting} the edge $uv$ in $G$ we mean the graph obtained by the following operation. We add a vertex $uv^{\star}$ and make it adjacent to all the vertices in $(N(v) \cup N(u)) \setminus \{v,u\}$ and delete $u,v$ from the graph. We often call such an operation as \emph{contraction} of the edge $uv$. For $E' \subseteq E(G)$, by $G/E'$ we denote the graph obtained from $G$ by contracting edges in $E'$. 

A graph $G$ is \emph{isomorphic} to a graph $H$ if there exists a \emph{bijective} function $\phi : V(G) \rightarrow V(H)$ such that for all $u,v \in V(G)$, $uv \in E(G)$ if and only if $\phi(u)\varphi(v) \in E(H)$. A graph $G$ is \emph{contractible} to a graph $H$, if their exists $E' \subseteq E(G)$ such that $G/E'$ is \emph{isomorphic} to $H$. In other words, $G$ is \emph{contractible} to $H$ if there exists a \emph{surjective} function $\varphi: V(G) \rightarrow V(H)$ with $W(h)= \{v \in V(G) \mid \varphi(v) =h\}$, for $h \in V(H)$ and the following property holds.
\begin{itemize}
\item For all $h,h' \in V(H)$, $hh' \in E(H)$ if and only if $W(h), W(h')$ are \emph{adjacent} in $G$.
\item For all $h \in V(H)$, $G[W(h)]$ is \emph{connected}.
\end{itemize}

Let $\mathcal{W}=\{W(h) \mid h \in V(H)\}$. Observe that $\mathcal{W}$ defines a partition of vertices in $G$. We call $\mathcal{W}$ as an $H$-\emph{witness structure} of $G$. The sets in $\mathcal{W}$ are called \emph{witness sets}. If a \emph{witness set} contains more than one vertex then we will call it a \emph{big} witness-set, otherwise it is a \emph{small} witness set. A graph $G$ is said to be $k$-contractible to a graph $H$ if there exists $E' \subseteq E(G)$ such that $G/E'$ is isomorphic to $H$ and $|E'| \leq k$. We use the following observation while designing our algorithms.

\begin{observation}
Let $G$ be a graph which is $k$-contractible to a graph $H$ and $\calW$ be an $H$-witness structure of $G$. Then the following holds.
\begin{itemize}
\item $|V(G)| \leq |V(H)|+ k$. 
\item For all $W \in {\cal W}$ we have $|W| \leq k+1$.
\item $\calW$ has at most $k$ big witness sets.
\item The union of the big witness sets in $\calW$ contains at most $2k$ vertices.
\end{itemize}
\end{observation}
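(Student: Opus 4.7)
The plan is to exploit one simple fact: contracting a connected graph on $m$ vertices down to a single vertex requires exactly $m-1$ edge contractions (pick any spanning tree of that subgraph and contract its edges). Since $\calW$ is a partition of $V(G)$ and $G[W]$ is connected for every $W \in \calW$, any edge set $E' \subseteq E(G)$ whose contraction yields $H$ must, restricted to each block $G[W]$, connect it down to one vertex. Hence $E'$ contains at least $|W|-1$ edges inside each $W$, and these contributions are disjoint (no edge lies in two witness sets). Summing,
\[
k \;\geq\; |E'| \;\geq\; \sum_{W \in \calW}(|W|-1) \;=\; |V(G)| - |\calW| \;=\; |V(G)| - |V(H)|,
\]
which is item (1).

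For item (2), I would apply the same counting locally: if some $W \in \calW$ had $|W| \geq k+2$, then already the contractions inside $G[W]$ would number at least $|W|-1 \geq k+1$, exceeding the overall budget -- contradiction. So $|W| \leq k+1$ for every witness set.

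For items (3) and (4), I would restrict the sum in the first display to the \emph{big} witness sets, since small ones contribute $|W|-1 = 0$. Each big set has $|W| \geq 2$, hence contributes at least $1$ to $\sum_W(|W|-1)$, giving at most $k$ big sets in total -- that is item (3). For item (4), the elementary inequality $|W| \leq 2(|W|-1)$, valid whenever $|W| \geq 2$, yields
\[
\sum_{W \text{ big}} |W| \;\leq\; 2 \sum_{W \text{ big}} (|W|-1) \;\leq\; 2 \sum_{W \in \calW} (|W|-1) \;\leq\; 2k,
\]
which is the desired bound on the vertices lying in big witness sets.

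I do not anticipate any real obstacle: the entire observation is a direct bookkeeping consequence of the single inequality $|E'| \geq \sum_W(|W|-1)$. The only point worth stating carefully is that the edges contracted \emph{inside} different witness sets form disjoint subsets of $E'$, which is immediate from $\calW$ being a vertex partition and $G$ being a simple graph.
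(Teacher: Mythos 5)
Your proof is correct, and since the paper states this Observation without an explicit proof, there is nothing to compare it against beyond noting that your bookkeeping is the natural and intended argument. The whole observation indeed reduces to the single inequality $\sum_{W\in\calW}(|W|-1) = |V(G)| - |V(H)| \leq k$, from which all four items follow exactly as you derive them, including the clean use of $|W|\leq 2(|W|-1)$ for $|W|\geq 2$ in item (4).

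One small imprecision worth tightening: the sentence claiming that \emph{any} edge set $E'$ with $G/E'\cong H$ must, restricted to each block $G[W]$, connect it to a point is not literally true, since a different $E'$ could produce a different $H$-witness structure (e.g.\ for a path $a$--$b$--$c$ contracted to an edge, either of the two edges works, giving two distinct witness structures). What you actually use, and what is correct, is the universal counting bound $|E'| \geq |V(G)| - |V(G/E')| = |V(G)| - |V(H)|$, which holds for every $E'$ with $G/E'\cong H$ and hence in particular for the one of size at most $k$ witnessing $k$-contractibility; alternatively, one can say that the $E'$ \emph{corresponding to} $\calW$ (a union of spanning trees of the blocks) has size exactly $\sum_W(|W|-1)$, which must therefore be at most $k$. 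Either phrasing closes the small gap without changing anything substantive in your argument.
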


A $k$-coloring of a graph $G$ is a function $\phi: V(G) \rightarrow [k]$. A $k$-coloring $\phi$ of $G$ is a \emph{proper coloring} if for all $uv \in E(G)$ we have $\phi(u) \neq \phi(v)$. The \emph{chromatic number} of a graph is the minimum number of colors needed for its proper coloring. For a subset $S \subseteq V(G)$ and a $k$-coloring $\phi$ of $G$, $S$ is said to be \emph{monochromatic} with respect to $\phi$ if for all $s,s' \in S$, $\phi(s)=\phi(s')$. Observe that $\phi$ partitions $V(G)$ into (at most) $k$ pairwise disjoint sets.
A subset $S \subseteq V(G)$ is said to be \emph{monochromatic component} with respect to $\phi$ if $S$ is monochromatic and $G[S]$ is connected.

\begin{lemma}\label{lemma:treel-witness-prop1} Let $F$ be an (inclusion-wise) minimal set of edges in a graph $G$ such that $T=G/F \in \bbT$ and $|V(G/F)| \geq 3$, and $\mathcal{W}$ be a $T$-witness structure of $G$. Then, the following properties hold.
\begin{enumerate}
\item There exists a set $F'$ of at most $|F|$ edges in $G$ such that $G/F'$ is in $\bbT$ and the $G/F'$-witness structure $\mathcal{W}'$ of $G$ satisfies the property that for every leaf $t$ in $G/F'$, $W'(t) \in \mathcal{W}'$ is a singleton set.
\item If $G$ is $2$-connected and $t$ is cut vertex in $T$ then $|W(t)| > 1$. 
\end{enumerate}
\end{lemma}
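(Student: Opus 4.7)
The plan is to prove the two parts separately.

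For Part (1), the strategy is to iteratively push non-singleton leaf witness sets into singletons while preserving $|F|$. By minimality of $F$, for each $s\in V(T)$ the restriction $F\cap E(G[W(s)])$ is a spanning tree of $G[W(s)]$ and $F$ has no edges between distinct witness sets, so $|F|=\sum_s(|W(s)|-1)$. Fix a leaf $t$ of $T$ with $|W(t)|\ge 2$ and let $t'$ be its unique neighbor. I would look for a vertex $u\in W(t)$ such that setting $W''(t):=\{u\}$, $W''(t'):=(W(t)\cup W(t'))\setminus\{u\}$, and leaving all other witness sets unchanged, yields a valid witness structure of a graph isomorphic to $T$. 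The nontrivial requirement is that $W''(t')$ induces a connected subgraph of $G$. To pick such a $u$, take spanning trees $T_t$ of $G[W(t)]$ and $T_{t'}$ of $G[W(t')]$, fix any edge $vw\in E(G)$ with $v\in W(t)$, $w\in W(t')$, and form $T^{*}:=T_t\cup T_{t'}\cup\{vw\}$, a spanning tree of $G[W(t)\cup W(t')]$; since $|W(t)|\ge 2$ the tree $T_t$ has a leaf $u\ne v$, which is also a leaf of $T^{*}$, so removing it leaves a spanning tree of $G[W''(t')]$. Because $t$ is a leaf of $T$, no vertex of $W(t)$ has a neighbor outside $W(t)\cup W(t')$, so no new adjacencies between witness sets appear; a counting check then gives $|F''|=|F|$ for the updated edge set $F''$. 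Iterating across all leaves yields the required $F'$ and $\mathcal{W}'$.

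For Part (2), I would argue by contradiction. Suppose $t$ is a cut vertex of $T$ but $W(t)=\{v\}$, and let $C_1,\dots,C_r$ (with $r\ge 2$) be the components of $T-t$. Setting $V_i:=\bigcup_{s\in V(C_i)}W(s)$, each $V_i$ is connected in $G$ as the union of the witness sets of a connected subgraph of $T$, and no edge of $G$ joins $V_i$ to $V_j$ for $i\ne j$: such an edge would project to an edge of $T$ between the distinct components $C_i$ and $C_j$ of $T-t$. Therefore $G-v$ has at least $r\ge 2$ components, contradicting the $2$-connectedness of $G$.

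The main obstacle lies in Part (1), where one must check that the rewired partition genuinely yields a graph isomorphic to $T$ (hence in $\bbT$) with $|F''|\le|F|$, so that the iteration is safe. The leaf property of $t$ precludes new external adjacencies, and choosing $u$ as a leaf of the combined spanning tree $T^{*}$ certifies the connectivity of $W''(t')$; together these ingredients make the induction on non-singleton leaves go through.
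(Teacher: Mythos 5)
Your proof is correct and follows essentially the same approach as the paper's. For Part (1) both arguments pick a spanning tree of $G[W(t)]$, identify a leaf distinct from the vertex anchoring the connection to $W(t')$, and migrate that leaf into the neighboring witness set while keeping the edge count unchanged; your explicit construction of the combined spanning tree $T^*$ just makes the connectivity of the enlarged set a bit more visible than the paper's terse statement. Part (2) is the same contradiction argument via the partition of $G - v$ induced by the components of $T - t$.
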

\begin{proof} 
\emph{(Proof of Part 1.)} If for each leaf $t \in V(T)$ we have $|W(t)| = 1$ then $F' = F$ is a desired solution. Otherwise, consider a leaf $t$ in $T$ such that $|W(t)| > 1$. Let $t'$ be the unique neighbour of $t$ in $T$. Notice that $W(t)$ and $W(t')$ are adjacent in $G$, and $G[W(t) \cup W(t')]$ is connected. Fix a spanning tree $Q$ of $G[W(t)]$, and (arbitrarily) choose a vertex $u^*\in V(Q)$ that is adjacent to a vertex in $W(t')$, which exists since $tt' \in E(T)$. Furthermore, choose a leaf $v^* \in V(Q) \setminus \{u^*\}$, which exists as $|V(Q)| >1$. Let $W'(t')=(W(t') \cup W(t)) \setminus \{v^*\}$, $W'(t)=\{v^*\}$, and ${\cal W}'=({\cal W} \setminus \{W(t), W(t')\}) \cup \{W'(t), W'(t')\}$. Notice that ${\cal W}'$ is a $T$-witness structure of $G$, and the number of leaves corresponding to singleton witness sets is strictly more than that of $\cal W$. Hence, by repeating this argument for each (non-adjacent) leaves in $T$ and their corresponding witness set in $\cal W$, we can obtain the desired result. 

\emph{(Proof of Part 2.)} Let $t$ be a cut vertex in $T$ such that $W(t) = \{u\}$, where $u \in V(G)$. Notice that $T - \{t\}$ has at least two components, say $T_1$ and $T_2$. Consider $U_1 = \bigcup_{t \in V(T_1)} W(t)$ and $U_2 = \bigcup_{t \in V(T_2)} W(t)$. As $\mathcal{W}$ is a $T$-witness structure of $G$, it follows that there is no edge between a vertex in $U_1$ and a vertex in $U_2$ in $G$. This contradicts the fact that $G$ is $2$-connected.
\end{proof}

\subparagraph*{Lossy Kernelization.} In lossy kernelization, we work with optimzation analogue of parameterized problem. Along with an instance and a parameter, optimization analogue of the problem also has a string called \emph{solution}. For a parameterized problem $Q$, its optimzation analogue is a computable function $\Pi: \Sigma^* \times \mathbb{N} \times \Sigma^* \rightarrow \mathbb{R} \cup \{\pm \infty\}$.
The {\em value} of a solution $S$ to an instance $(I,k)$ of $Q$ is $\Pi(I,k,S)$. In this paper, we will be dealing only with parameterized minimization problems. The {\em optimum value} for an instance $(I,k)$ of a \emph{minimization problem} $Q$ is denoted by $\textsc{OPT}_{\Pi}(I, k)$ and defined as $\min_{S \in \Sigma^*}\{ \Pi(I,k,S)\}$. An {\em optimum solution} for $(I,k)$ is a solution $S$ for which this optimum value is achieve. We omit the subscript $\Pi$ in the notation for optimum value if the problem under consideration is clear from the context.

\begin{definition}[$\alpha$-PTAS] An {\em $\alpha$-approximate polynomial-time preprocessing algorithm} ($\alpha$-PTAS) for a parameterized minimization problem $\Pi$ is pair of two polynomial time algorithms as follows:
\begin{enumerate}
\item \emph{Reduction Algorithm :} Given an instance $(I,k)$ of $\Pi$ it outputs an instance $(I',k')$ of $\Pi$. 
\item \emph{Solution Lifting Algorithm :} Given instances $(I,k)$ and $(I',k')$ of $\Pi$, and a solution $S'$ to $(I',k')$, it computes a solution $S$ to $(I,k)$ such that the following holds. $$\frac{\Pi(I,k,S)}{\textsc{OPT}(I,k)} \leq \alpha \cdot \frac{\Pi(I',k',S')}{\textsc{OPT}(I',k')}$$
\end{enumerate}
\end{definition}
\begin{definition}[$\alpha$-approximate kernelization] An {\em $\alpha$-approximate kernelization (or $\alpha$-approximate kernel)} for a parameterized minimization problem $\Pi$ is an $\alpha$-approximate polynomial-time preprocessing algorithm such that the size of the output instance is upper bounded by a computable function $g:\mathbb{N} \rightarrow \mathbb{N}$ of $k$.
\end{definition}

An $\alpha$-approximate kernelization for a parameterized minimization problem is said to be {\em strict} if the solution lifting algorithm returns a solution $S$ such that the following condition is satisfied.
$$\frac{\Pi(I,k,S)}{\textsc{OPT}(I,k)} \leq \max \{ \frac{\Pi(I',k',S')}{\textsc{OPT}(I',k')} , \alpha \}$$

A reduction rule is said to be {\em $\alpha$-safe} for $\Pi$ if there is a solution lifting algorithm such that the rule together with this algorithm constitute a strict $\alpha$-approximate polynomial-time preprocessing algorithm for $\Pi$. 
A {\em polynomial-size approximate kernelization scheme (PSAKS)} for $\Pi$ is a family of $\alpha$-approximate polynomial kernelization algorithms for each $\alpha >1$.
Since we are interested in solutions of size at most $k$, we work with following definition.
$$\Pi(I, k, S) = \left\{ \begin{array}{rl} \infty &\mbox{\text{ if } S \text{ is not a solution }} \\ \min\{|S|, k + 1\} &\mbox{ otherwise} \end{array} \right.$$ 
With this definition, if the solution lifting algorithm is given a solution of value $k+1$ or more, it simply returns any trivial feasible solution to the instance. We encourage the reader to see~\cite{lossy} for a more comprehensive discussion of these ideas and definitions.

\section{\FPT\ Algorithm for \treelc} \label{sec:rand-fpt-algo-treelc}
In this section, we design an \FPT\ algorithm for \treelc. Our algorithm proceeds as follows. We start by applying some simple reduction rules. Then by branching we ensure that the resulting graph is $2$-connected. Finally, we give an \FPT\ algorithm running in time $\mathcal{O}((\ncol)^{\mathcal{O}(k + \ell)} \cdot n^{\mathcal{O}(1)})$ on $2$-connected graphs. The approach we use for designing the algorithm for the case when the input graph is $2$-connected follows the approach of Heggernes et al.~\cite{treePathContract} for designing an \FPT\ algorithm for contracting to trees. Also, whenever we are dealing with an instance of \treelc\ we assume that we have an algorithm running in time
$\mathcal{O}((2\sqrt{\ell'} + 2)^{\mathcal{O}(k + \ell')} \cdot n^{\mathcal{O}(1)})$  for \treelpc, for every $\ell' <\ell$. That is, we give family of algorithms inductively for each $\ell' \in \mathbb{N}$, where the algorithm for \textsc{Tree Contraction} by Heggernes et al. forms the base case of our inductive hypothesis. 

We start with few observation regarding the graph class $\bbT$, which will be useful while designing the algorithm.

\begin{observation}
    \label{graph-class-prop}
    For each $T \in \bbT$ the following statements hold.
    \begin{enumerate}
        \item The chromatic number of $T$ is at most $\ncol$.
        \item If $T'$ is a graph obtained by subdividing an edge in $T$ then $T' \in \bbT$.
        \item If $T'$ is a graph obtained by contracting an edge in $T$ then $T' \in \bbT$.
    \end{enumerate}
\end{observation}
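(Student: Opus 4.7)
The plan is to dispatch Parts 2 and 3 almost immediately using the characterization recorded just above the observation (a connected graph $G$ lies in $\bbT$ iff $|E(G)|\leq |V(G)|-1+\ell$) and to spend the real work on Part 1 via a degeneracy estimate.

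For Part 2, let $T'$ be obtained from $T$ by subdividing an edge of $T$ with a new vertex. Subdivision preserves connectedness and increases both $|V|$ and $|E|$ by exactly one, so $|E(T)|\leq |V(T)|-1+\ell$ carries over to $|E(T')|\leq |V(T')|-1+\ell$ and the characterization gives $T'\in\bbT$. For Part 3, let $T'=T/e$. Contracting an edge in a connected graph produces a connected graph, decreases $|V|$ by $1$, and decreases $|E|$ by at least $1$ (more if parallel edges appear and are merged, since we work with simple graphs). Hence $|E(T')|\leq |E(T)|-1 \leq |V(T)|-2+\ell = |V(T')|-1+\ell$, and $T'\in\bbT$.

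For Part 1 the plan is to show that $T$ is $(2\sqrt{\ell}+1)$-degenerate, from which $\chi(T)\leq 2\sqrt{\ell}+2$ follows by the usual greedy colouring. The key observation is that every subgraph $H$ of $T$ still satisfies $|E(H)|\leq |V(H)|-1+\ell$: if $F\subseteq E(T)$ with $|F|\leq \ell$ is a set such that $T-F$ is a tree, then $H-(F\cap E(H))$ is a subgraph of that tree, hence a forest on $|V(H)|$ vertices. To produce a vertex of $H$ of degree at most $2\sqrt{\ell}+1$ I would split on $|V(H)|$. If $|V(H)|\leq 2\sqrt{\ell}+2$, the trivial bound $\delta(H)\leq |V(H)|-1\leq 2\sqrt{\ell}+1$ already suffices. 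Otherwise, the average degree of $H$ satisfies
\[
\frac{2|E(H)|}{|V(H)|}\;\leq\; 2+\frac{2\ell}{|V(H)|}\;<\; 2+\frac{2\ell}{2\sqrt{\ell}+2}\;=\;2+\frac{\ell}{\sqrt{\ell}+1}\;\leq\; 2+\sqrt{\ell}\;\leq\; 2\sqrt{\ell}+1,
\]
where the last inequality uses $\ell\geq 1$; the case $\ell=0$ is just the statement that trees have chromatic number at most $2$.

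The only step requiring a little care is the constant-chasing at the end of Part 1; Parts 2 and 3 are essentially automatic from the edge-count characterization of $\bbT$, so no real conceptual obstacle arises.
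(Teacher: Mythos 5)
Your Parts 2 and 3 match the paper's argument: both reduce to the edge-count characterization (a connected graph lies in $\bbT$ iff $|E|\le |V|-1+\ell$) and track how subdivision or contraction changes $|V|$ and $|E|$ while preserving connectedness.

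For Part 1 you take a genuinely different route. The paper first observes that any graph on $m\ge 1$ edges is $2\sqrt{m}$-colorable (between any two color classes of an optimal proper coloring there is at least one edge, so $\binom{q}{2}\le m$), then fixes a spanning tree of $T$, colors the vertices incident to the $\le\ell$ non-tree edges with $\le 2\sqrt{\ell}$ colors, and colors the remaining vertices (which induce a forest) with two fresh colors. You instead show that every subgraph $H$ of $T$ inherits $|E(H)|\le|V(H)|-1+\ell$ (remove the $\le\ell$ excess edges and you are inside a forest), then bound the average degree of $H$ via a case split on $|V(H)|$ to conclude that $T$ is $(2\sqrt{\ell}+1)$-degenerate, hence $(\ncol)$-colorable by greedy. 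Your degeneracy route is self-contained and, incidentally, more careful than the paper's: the induced subgraph on the endpoints of the non-tree edges also contains tree edges between those endpoints, so it can have up to roughly $3\ell$ edges, and the claimed bound of $2\sqrt{\ell}$ colors for it does not literally follow from $\binom{q}{2}\le m$. Your average-degree computation and the separate treatment of $\ell=0$ are both correct, so the proposal is sound and arguably the cleaner of the two.
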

\begin{proof}
\emph{(Proof of Part 1.)} We first prove that for any graph $G$ with at least one edge, its chromatic number is upper bounded by $2\sqrt{|E(G)|}$. Let $C_1, C_2, \dots, C_q$ be the color classes in a proper coloring of $G$ which uses the minimum number of colors. Observe that there is at least one edge between $C_i, C_j$, where $i,j \in [q]$, $i \neq j$. This implies that $\binom{q}{2} \le |E(G)|$, which proves the claim. Next, consider $T_{\ell} \in \mathbb{T}_\ell$, and fix a spanning tree $T$ of $T_\ell$. Let $T' = E(T_{\ell})\setminus E(T)$. If $\ell > 0$ then from the claim above, we can properly color graph $T_{\ell}[V(T')]$ using at most $2\sqrt{\ell}$ many colors. Since $T_{\ell} - T'$ is a tree, we can properly color $T_{\ell}$ by coloring the vertices in $T_\ell -V(T')$ using two new colors.

\emph{(Proof of Part 2.)} For any connected graph $T$ if $|E(T)| \le |V(T)| - 1 + \ell$ then $T$ is contained in $\bbT$. Subdividing an edge adds a new vertex and an edge and hence this inequality is satisfied while maintaining the connectivity of graph. This implies $T'\in \bbT$, where $T'$ is obtained from $T$ by sub-dividing an edge in $T$.

\emph{(Proof of Part 3.)} Similar to the proof of part 2, contracting an edge decreases the number of vertices by one and number of edges by at least one. This implies $|E(T')| \le |V(T')| - 1 + \ell$. Contracting an edge maintains the connectivity of the graph and hence $T'\in \bbT$. 
  
\end{proof}

\begin{observation}\label{graph-class-prop2}
For a graph $T \in \bbT$, the graph $T' \in \bbT$ whenever $T'$ is obtained from $T$ as follows. Consider a vertex $v \in V(T)$, and a partition $N_1,N_2$ of $N_{T}(v)$. Let $V(T')= (V(T) \setminus \{v\}) \cup \{v_1,v_2\}$ and $E(T')= E(T-\{v\}) \cup \{(v_1, u) \mid u \in N_1\} \cup \{(v_2, u) \mid u \in N_2\} \cup \{(v_1,v_2)\}$. 
\end{observation}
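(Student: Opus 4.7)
The plan is to use the edge-count characterization of $\bbT$ stated just before the previous lemma: for a connected graph $G$, we have $G \in \bbT$ if and only if $|E(G)| \le |V(G)| - 1 + \ell$. So I need to check two things about $T'$: that it is connected, and that its edge count satisfies the inequality above.

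First I would verify connectivity. Since $T \in \bbT$, $T$ is connected (edge deletions cannot destroy disconnection, and trees are connected). Every vertex of $T - \{v\}$ lies in a path to $v$ in $T$; after the operation that same path ends at some $u \in N_1 \cup N_2$, which is adjacent either to $v_1$ or to $v_2$. Together with the new edge $v_1 v_2$, this gives a path in $T'$ from any vertex of $V(T)\setminus\{v\}$ to both $v_1$ and $v_2$, so $T'$ is connected.

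Next I would count. The operation removes one vertex and adds two, so $|V(T')| = |V(T)| + 1$. For edges, we delete the $\deg_T(v)$ edges incident to $v$, then add $|N_1| + |N_2| = \deg_T(v)$ new edges from $v_1$ and $v_2$ to the old neighbourhood, plus the single edge $v_1 v_2$. Hence $|E(T')| = |E(T)| + 1$. Using $T \in \bbT$, which gives $|E(T)| \le |V(T)| - 1 + \ell$, we conclude
\[
|E(T')| \;=\; |E(T)| + 1 \;\le\; (|V(T)| - 1 + \ell) + 1 \;=\; |V(T')| - 1 + \ell.
\]
Combined with the connectivity of $T'$, the characterization yields $T' \in \bbT$.

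There is no real obstacle here: the operation is precisely the inverse of an edge contraction applied to the edge $v_1 v_2$ (where $v$ in $T$ is the result of contracting $v_1 v_2$ in $T'$), and contraction reduces both vertex count and edge count by exactly one. The only thing worth stating carefully in the written proof is why the new graph is connected, since the edge-count inequality alone is not sufficient for membership in $\bbT$.
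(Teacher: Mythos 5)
Your proof is correct and follows the same route as the paper: verify that $T'$ is connected, then use the characterization that a connected graph with $|E| \le |V| - 1 + \ell$ lies in $\bbT$, noting $|V(T')| = |V(T)| + 1$ and $|E(T')| = |E(T)| + 1$. The paper's proof is nearly identical (it simply asserts connectivity without the brief argument you supply), so there is nothing to flag.
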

\begin{proof}
Consider a vertex $v \in V(T)$, and a partition $N_1, N_2$ of $N_{T}(v)$. Let $V(T')= (V(T) \setminus \{v\}) \cup \{v_1,v_2\}$ and $E(T')= E(T-\{v\}) \cup \{(v_1, u) \mid u \in N_1\} \cup \{(v_2, u) \mid u \in N_2\} \cup \{(v_1,v_2)\}$. Notice that $T'$ is a connected graph. We have $|V(T')|= |V(T)|+1$ and $|E(T')|= |E(T)|+1\leq |V(T)|-1+\ell+1= |V(T')|-1 +\ell$. This concludes the proof. 
 
 \end{proof}


Let $(G,k)$ be an instance of \treelc. The measure we use for analysing the running time of our algorithm is $\mu= \mu(G,k)=k$. We start by applying some simple reduction rules. 
%
\begin{reduction rule} \label{rr:k-no-tco}
If $k < 0$ then return that $(G,k)$ is a {\rm \no} instance of {\rm \treelc}.
\end{reduction rule}

\begin{reduction rule} \label{rr:k-yes-rr-tco}
If $k = 0$ and $G \in \mathbb{T}_\ell$ then return that $(G,k)$ is a {\rm \yes} instance of {\rm \treelc}.
\end{reduction rule}

\begin{reduction rule} \label{rr:disconn-tco}
If $G$ is a disconnected or $k=0$ and $G \notin \bbT$ then return that $(G,k)$ is a {\rm \no} instance.
\end{reduction rule}

We assume that the input graph is $2$-connected, and design an algorithm for input restricted to $2$-connected graphs. Later, we will show how we can remove this constraint. The key idea behind the algorithm is to use a coloring of $V(G)$ with at most $\ncol$ colors to find a $T$-witness structure (if it exists) of $G$, where $G$ is contractible to $T \in \bbT$ using at most $k$ edge contractions (see Observation~\ref{graph-class-prop}). Moreover, if such a $T$ does not exist then we must correctly conclude that $(G,k)$ is a \no\ instance of \treelc. Towards this, we introduce the following notion.

 
\begin{definition}\label{def:compatibility-ws-tc}
Let $G$ be a $2$-connected graph, $T$ be a graph in $\bbT$, $\cal W$ be a $T$-witness structure of $G$, and $\phi: V(G) \rightarrow [\ncol]$ be a coloring of $V(G)$. Furthermore, let $T_S$ be a (fixed) spanning tree of $T$, $M=\{t,t' \mid tt' \in E(T) \setminus E(T_S)\} \cup \{t \in V(T) \mid d_{T}(t) \geq 3\}$, and $B=\{t \in V(T) \mid |W(t)| \geq 2\}$. We say that $\phi$ is $\cal W$-compatible if the following conditions are satisfied. 

\begin{enumerate}
\item For all $W \in \mathcal{W}$, and $w,w' \in W$ we have $\phi(w)=\phi(w')$. 
\item For all $t,t' \in M \cup B$ such that $tt' \in E(T)$ we have $\phi(W(t)) \neq \phi(W(t'))$. 
\item For all $t,t' \in M \cup B$ (not necessarily distinct), and a path $P=(t,t_1, \ldots, t_z,t')$, where $z \in \mathbb{N}$ such that for all $i \in [z]$ we have $t_i \notin M\cup B$ then $\phi(W(t)) \neq  \phi(W(t_1))$ and $\phi(W(t_z)) \neq \phi(W(t'))$.
\end{enumerate}
We refer to the set $M \cup B$ as the set of marked vertices. 
\end{definition}

Assume that $(G,k)$ is a \yes\ instance of \treelc, and $F$ be one of its (inclusion-wise) minimal solution. Furthermore, let $T=G/F$, and $\cal W$ be the $T$-witness structure of $G$. Suppose we are given $G$ and a $\cal W$ compatible coloring $\phi: V(G) \rightarrow [\ncol]$ of $G$, but we are neither given $\cal W$ nor $T$. We will show how we can compute a $T'$ witness structure ${\cal W}'$ of $G$ such that $|V(T')| \geq |V(T)|$, where $T' \in \bbT$. Informally, we will find such a witness structure by either concluding that none of the edges are part of the solution, some specific set of edges are part of the solution, or finding a star-like structure of the monochromatic components of size at least $2$ in $G$, with respect to $\phi$. Towards this, we will employ the algorithm for \cvcfull\ (\cvc) by Cygan~\cite{cvc-algo-cygan}.

\begin{proposition}[\cite{cvc-algo-cygan}]\label{lem:cvc-algo-cygan}\label{prop:cvc-algo}
\cvc\ admits an algorithm running in time $2^{k}n^{\OO(1)}$. Here, $k$ is the size of a solution and $n$ is the number of vertices in the input graph. 
\end{proposition}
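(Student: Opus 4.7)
The plan is to use iterative compression. I would process the vertices of $G$ one at a time, maintaining a CVC of the subgraph induced by the vertices seen so far. Upon introducing a new vertex $v$, appending $v$ to the previous solution produces a CVC of size at most $k+1$ for the current subgraph, and a compression routine is then invoked to decide whether it can be shrunk to size $k$. Iterating this across all $n$ vertices of $G$ reduces the original problem to $n$ calls of the compression subproblem, so the total running time is bounded by $n$ times the cost of that subproblem.

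For the compression step I would enumerate all $2^{k+1}$ subsets $X \subseteq W$ of the given $(k+1)$-CVC $W$; each $X$ plays the role of the candidate intersection $S \cap W$, where $S$ is the unknown CVC of size at most $k$. For a fixed $X$, discard it immediately if $G[W \setminus X]$ contains an edge, since that edge would be uncovered. Since $W$ is a vertex cover, $V(G) \setminus W$ is an independent set, and hence every vertex of $V(G) \setminus W$ adjacent to $W \setminus X$ must lie in $S$; mark these vertices as forced. It then remains to select a minimum number of additional vertices from $(V(G) \setminus W) \setminus N(W \setminus X)$ so that the combined selected set induces a connected subgraph of $G$.

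The main obstacle is this residual subproblem: it is a Steiner-tree-like question on $G$ in which the candidate Steiner vertices form an independent set, each adjacent only to the ``terminal'' side $W$. Handling it in polynomial time is the technical core of Cygan's result, achieved by exploiting the bipartite independent-set structure to reduce the connectivity task to a minimum-cost primitive (a matroid- or matching-based formulation) that is solvable in polynomial time. Combining the $2^{k+1}$ enumeration with polynomial work per guess, wrapped inside the iterative-compression loop, yields the claimed $2^{k} \cdot n^{\OO(1)}$ bound.
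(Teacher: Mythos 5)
This proposition is not proved in the paper; it is stated with a citation to Cygan's work, so the question is whether your blind reconstruction is a sound argument. The iterative-compression scaffold and the enumeration of $X = S \cap W$ with forcing of $N(W\setminus X) \setminus W$ into the solution is a standard and correct setup. The gap is in the residual step, which you yourself flag as ``the technical core'' and then wave away with ``a matroid- or matching-based formulation solvable in polynomial time.'' That residual problem --- choose a minimum subset $Z'$ of the independent set $Z$ so that $G[F \cup Z']$ is connected, where every vertex of $Z$ has all of its neighbours inside the required side --- is \emph{not} polynomial-time solvable in general. After contracting each connected component of $G[F]$ to a single node, the task becomes: in a bipartite graph with required side $A$ and optional independent side $B$, pick a minimum $B'\subseteq B$ so that $A\cup B'$ is connected. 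Taking $A = \{r, a_1,\dots,a_n\}$ with $r$ adjacent to every $b_j$ and $a_i b_j \in E$ iff $i \in S_j$ encodes \textsc{Set Cover}: a minimum $B'$ making $A\cup B'$ connected is exactly a minimum set cover. So the asserted polynomial primitive does not exist, and the argument does not go through as written.

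To salvage an iterative-compression proof you would have to explain what additional structure of the compression instance (for example, the budget $|S|\le k$ with $|W|=k+1$, which tightly bounds $|Z'|$ once $|X|$ is fixed, or the fact that $W$ itself is a \emph{connected} vertex cover) prevents the Set-Cover-type hard instance from arising, and you have not done this. In fact Cygan's $2^k n^{\OO(1)}$ algorithm does not take this route: rather than iterative compression plus a polynomial residual solver, it combines branching on the vertex-cover side with a Steiner-tree-style connectivity subroutine, and the $2^k$ bound comes from balancing those two phases rather than from an enumeration of subsets of a $(k+1)$-sized modulator. As it stands, the key step of your proposal is missing, and the specific claim on which it rests is false.
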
   

We note that we use the algorithm of Cygan~\cite{cvc-algo-cygan} instead of the algorithm by Cygan et al.~\cite{DBLP:conf/focs/CyganNPPRW11}, because the latter algorithm is a randomized algorithm. Also, the algorithm given by Proposition~\ref{prop:cvc-algo} can be used to output a solution. 

Consider the case when $G$ is $k$-contractable to a graph, say $T \in \bbT$, and let $\cal W$ be a $T$-witness structure of $G$. Furthermore, let $\phi: V(G) \rightarrow [\ncol]$ be a $\cal W$-compatible coloring of $G$, and $\cal X$ be the set of monochromatic components of $\phi$. We prove some lemmata showing useful properties of $\cal X$.

\begin{lemma}\label{lem:ws-calX-at-most-vertices}
Let $T'$ be the graph with $\cal X$ as the $T'$-witness structure of $G$. Then $T' \in \bbT$ and $|V(T')| \leq |V(T)|$.
\end{lemma}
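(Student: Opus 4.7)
The plan is to show that $\mathcal{W}$ is a refinement of $\mathcal{X}$ as partitions of $V(G)$, and then argue that $T'$ is obtained from $T$ by edge contractions. First I would observe that condition~1 of $\mathcal{W}$-compatibility says each $W \in \mathcal{W}$ is monochromatic, and by definition of a witness structure each $W$ also induces a connected subgraph of $G$. Hence $W$ is a connected monochromatic subgraph and is therefore contained in exactly one monochromatic component of $\phi$, i.e.\ in a unique element $X(W) \in \mathcal{X}$. Since $\mathcal{W}$ and $\mathcal{X}$ both partition $V(G)$, every $X \in \mathcal{X}$ is the union of those witness sets $W \in \mathcal{W}$ with $X(W) = X$. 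This immediately gives a well-defined surjection $f : V(T) \twoheadrightarrow V(T')$ sending $t \in V(T)$ to the vertex $t' \in V(T')$ with $W(t) \subseteq X(t')$, so $|V(T')| = |\mathcal{X}| \leq |\mathcal{W}| = |V(T)|$.

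Next I would show that $T'$ is in fact a contraction of $T$, by proving that for every $t' \in V(T')$ the preimage $f^{-1}(t')$ induces a connected subgraph of $T$. Fix $t' \in V(T')$ and put $X = X(t')$; then $X$ is the disjoint union of the witness sets $\{W(t) : t \in f^{-1}(t')\}$, and $G[X]$ is connected because $X$ is a monochromatic component. Given any two such witness sets $W(t_1), W(t_2)$, connectivity of $G[X]$ yields a sequence $W(s_0) = W(t_1), W(s_1), \ldots, W(s_m) = W(t_2)$ of witness sets in $f^{-1}(t')$ such that each consecutive pair is joined by an edge of $G$ lying in $X$. By the defining property of the $T$-witness structure $\mathcal{W}$, adjacency between $W(s_i)$ and $W(s_{i+1})$ in $G$ implies $s_i s_{i+1} \in E(T)$, so $s_0, s_1, \ldots, s_m$ is a walk in $T[f^{-1}(t')]$ from $t_1$ to $t_2$, proving connectivity.

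With that in hand, let $F \subseteq E(T)$ be the set of edges of $T$ whose endpoints lie in a common preimage $f^{-1}(t')$. Connectivity of each $T[f^{-1}(t')]$ means that $T/F$ is well-defined and has the same vertex set (after identification) and the same adjacencies as $T'$; that is, $T' \cong T/F$. Iteratively applying Observation~\ref{graph-class-prop}(3) along the edges of $F$ then yields $T' \in \mathbb{T}_\ell$, as required. The only mildly delicate step is step two, the connectivity of the preimages in $T$, but it follows directly from transporting the connectivity of $G[X]$ through the witness structure; everything else is bookkeeping.
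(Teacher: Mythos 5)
Your proof is correct and follows essentially the same approach as the paper: use compatibility condition~1 to see that each witness set lies inside a single monochromatic component, so that $\cal X$ coarsens $\cal W$, and then invoke closure of $\mathbb{T}_\ell$ under edge contraction. The paper's proof is considerably terser and treats as immediate the step you (rightly) flag as the delicate one, namely that each preimage $f^{-1}(t')$ induces a connected subgraph of $T$ so that $T'$ really is a contraction of $T$; your explicit transfer of connectivity from $G[X]$ through the witness structure is the right way to fill that gap.
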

\begin{proof}
  Every witness set of $\cal W$ is monochromatic with respect to $\phi$ (see item 1 of Definition~\ref{def:compatibility-ws-tc}).
  Therefore, for every $W \in {\cal W}$ there exists $X \in {\cal X}$ such that $W \subseteq X$. Moreover, by the definition of $\cal X$ we have that $G[X]$ is connected.
  Since $\bbT$ is closed under edge contraction, (item 3 of Observation~\ref{graph-class-prop}) therefore, $T' \in \bbT$ with $|V(T')| \leq |V(T)|$.
   
 \end{proof}

Next, we proceed to show how we can partition each $X \in {\cal X}$ into many smaller witness sets such that either we obtain $\cal W$ or a $T'$-witness structure of $G$ for some $T' \in \bbT$ which has at least as many vertices as $T$. Towards this, we introduce the following notions. 

For $X \in {\cal X}$, by $\hat X$ we denote the set of vertices that have a neighbor outside of $X$, \emph{i.e.} $\hat X= N(V(G) \setminus X)$. A \emph{shatter} of $X$ is a partition of $X$ into sets such that one of them is a connected vertex cover $C$ of $G[X]$ containing all the vertices in $\hat X$ and all other sets are of size $1$. The size of a shatter of $X$ is the of size of $C$. Furthermore, a shatter of $X$ is minimum if there is no other shatter with strictly smaller size. 

From Lemma~\ref{lem:ws-calX-at-most-vertices} (and Definition~\ref{def:compatibility-ws-tc}) it follows that for each $X \in {\cal X}$ there is ${\cal W}_X \subseteq {\cal W}$ such that $X = \cup_{Y \in {\cal W}_X} Y$. In the following lemma, we prove some properties of sets in ${\cal W}_X$, which will be useful in the algorithm design. 

\begin{lemma} \label{lem:contract-path}
Consider $X \in {\cal X}$ with $|X| \geq 2$, ${\cal W}_X \subseteq {}\cal W$ such that $X=\cup_{Y \in {\cal W}_X} Y$, and all of the following conditions are satisfied.
\begin{itemize}
\item $G[X]=(u, v_1, \ldots, v_q,v)$ is an induced path, where $q \in \mathbb{N}$.
\item For each $i \in [q]$ we have $deg(v_i)=2$. 
\item There exists $X' \in {\cal X}\setminus \{X\}$ such that $N(u) \cap X' \neq \emptyset$ and $N(v) \cap X' \neq \emptyset$. 
\end{itemize}
Then $|{\cal W}_X|=1$. 
\end{lemma}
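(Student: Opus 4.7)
The plan is to argue by contradiction. Suppose $r := |\mathcal{W}_X| \geq 2$; I will show that then $\phi$ violates condition~$2$ or condition~$3$ of $\mathcal{W}$-compatibility. Since each element of $\mathcal{W}_X$ induces a connected subgraph of the induced path $G[X] = (u, v_1, \ldots, v_q, v)$, these elements are $r$ consecutive subpaths, which I label $W(t_1), \ldots, W(t_r)$ so that $u \in W(t_1)$, $v \in W(t_r)$ and $t_1, t_2, \ldots, t_r$ is a path in $T$. Using that $deg(v_i) = 2$ for every internal $v_i$, each $t_j$ with $2 \le j \le r-1$ has exactly $t_{j-1}$ and $t_{j+1}$ as neighbours in $T$; in particular the only edges of $T$ inside $\{t_1, \ldots, t_r\}$ are the consecutive ones, and every $t_j$ shares the common colour $c := \phi(X)$.

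Next I would construct a cycle in $T$ passing through these witness sets. By hypothesis $u$ and $v$ each have a neighbour in $X'$, so there exist $t^*_u, t^*_v \in \mathcal{W}_{X'}$ with $t^*_u t_1, t_r t^*_v \in E(T)$. Because $G[X']$ is connected, $\mathcal{W}_{X'}$ induces a connected subgraph of $T$, and I may pick a path $Q$ in this subgraph from $t^*_v$ to $t^*_u$ (a single vertex if they coincide). Concatenating $t^*_u, t_1, t_2, \ldots, t_r, t^*_v$ with $Q$ yields a cycle $C$ in $T$; every vertex of $C$ that lies in $\mathcal{W}_{X'}$ has colour $c' := \phi(X') \neq c$, since the edge of $G$ from $u$ into $X'$ forces the monochromatic components $X$ and $X'$ to use different colours.

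Since $T_S$ is a spanning tree of $T$, the cycle $C$ contains some non-tree edge $e$; by the definition of $M$ both endpoints of $e$ lie in $M \subseteq M \cup B$. I would case-split on the location of $e$. If both endpoints of $e$ lie in $\mathcal{W}_X$, i.e.\ $e = t_j t_{j+1}$ for some $j$, then the endpoints are adjacent in $T$, marked, and both coloured $c$, directly violating condition~$2$; symmetrically, if $e$ lies on $Q$, its endpoints are both in $\mathcal{W}_{X'}$ and coloured $c'$, again violating condition~$2$. The remaining case is when $e$ is a junction edge, say $e = t^*_u t_1$ (the other is symmetric). Then $t_1 \in M$, but condition~$2$ does not fail on $e$ itself since its endpoints have different colours. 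I would then walk along $C$ from $t_1$ in the direction of $t_2$, the long way round avoiding $e$; since $t^*_u \in M$ also lies on this walk, there is a first vertex $t^* \in M \cup B$ met after $t_1$. The sub-walk $t_1, t_2, \ldots, t^*$ is a simple path whose first intermediate vertex is $t_2 \in \mathcal{W}_X$ (colour $c$) and whose intermediate vertices are all unmarked; applying condition~$2$ if $t^* = t_2$, or condition~$3$ with the pair $(t_1, t^*)$ otherwise, forces $\phi(W(t_1)) \neq \phi(W(t_2))$, contradicting that both sides equal $c$. The crux of the argument is exactly this junction case: one must find a marked target further along $C$ in order to invoke condition~$3$, and such a target is supplied automatically by the other endpoint $t^*_u$ of the very non-tree edge $e$ that placed us in this case.
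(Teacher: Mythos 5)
Your proof is correct, and it follows the same overall strategy as the paper's: construct a cycle in $T$ through the consecutive witness sets of $X$ and a return path through $\mathcal{W}_{X'}$, locate the non-tree edge to obtain marked vertices, and derive a violation of condition~2 or~3. However, your execution is considerably more careful than the paper's. The paper asserts that the marked vertices on the cycle ``can not belong to $X$ or $X'$'' and then, apparently inconsistently, proceeds ``without loss of generality'' to assume $t_{Y_1}$ (which lies in $\mathcal{W}_X$) is marked, without justifying that this is indeed without loss of generality or handling the case where the non-tree edge lies entirely inside $\mathcal{W}_{X'}$. Your explicit three-way case split on the location of the non-tree edge (inside $\mathcal{W}_X$, inside $\mathcal{W}_{X'}$, or a junction edge) fills exactly these gaps, and the observation that in the junction case the far endpoint of the non-tree edge supplies the second marked vertex needed to invoke condition~3 is the key point the paper leaves implicit. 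Both arguments buy the same conclusion; yours simply makes the cycle-and-non-tree-edge reasoning airtight where the paper relies on the reader to reconstruct the case analysis.
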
 
\begin{proof}
  Let $X=(u,v_1, v_2, \ldots, v_q, v)$, where for each $i \in [q]$ we have $deg_G(v_i)=2$. Also, let $X' \in {\cal X} \setminus \{X\}$ such that $N(u) \cap X' \neq \emptyset$ and $N(v) \cap X' \neq \emptyset$. Assume that $|{\cal W}_X| \geq 2$.
  Let $Y_1$ and $Y_2$ be the witness sets containing $u$ and $v$, respectively. Since, $|{\cal W}_X| \geq 2$, and each of the witness sets are connected therefore, we have $Y_1 \neq Y_2$. Notice that in $T$, for which $\cal W$ is a $T$-witness structure of $G$ there is a cycle $C$ containing $t_{Y_1}$, $t_{Y_2}$, and vertices corresponding to some of the witness sets in $X' \cup (X \setminus (Y_1 \cup Y_2))$. Here, $t_{Y_1}$ and $t_{Y_2}$ are vertices in $T$ such that $W(t_{Y_1})=Y_1$ and $W(t_{Y_2})=Y_2$. Notice that $C$ must contain at least $2$ marked vertices (see Definition~\ref{def:compatibility-ws-tc}). By definition, $X$ and $X'$ are monochromatic, and therefore, these marked vertices can not belong to $X$ or $X'$.
  Without loss of generality assume that $t_{Y_1}$ is one of the marked vertex and one of its neighbor, say $t'$, such that $W(t') \subseteq X'$ is another marked vertex on this cycle. This implies that $t_{Y_2}$ is contained in a path between two marked vertices namely, $t_{Y_1}$ and $t'$. But all the nodes on the path between $t_{Y_1}$ and $t_{Y_2}$ have the same color. This contradicts the fact that $\phi$ is a $\cal W$-compatible coloring of $G$ (see item 3 of Definition~\ref{def:compatibility-ws-tc}). 
 \end{proof}


\begin{lemma} \label{lem:excludes-path}
Consider $X \in {\cal X}$ with $|X| \geq 2$, ${\cal W}_X$ such that $X=\cup_{Y \in {\cal W}_X} Y$, and all the following conditions are satisfied.
\begin{itemize}
\item $G[X]=(v_0, v_1, \ldots, v_q, v)$ is an induced path, where $q \in \mathbb{N}$.
\item For each $i \in [q]$ we have $deg(v_i)=2$. 
\item There exists no $X' \in {\cal X}$ such that $N(u) \cap X' \neq \emptyset$ and $N(v) \cap X' \neq \emptyset$. 
\end{itemize}
Then $|{\cal W}_X|=|X|$. 
\end{lemma}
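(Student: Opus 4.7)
The plan is to argue by contradiction, exploiting the inclusion-wise minimality of $F$ to rule out any nontrivial contraction inside $X$. Concretely, suppose $|\mathcal{W}_X| < |X|$. Then some $Y \in \mathcal{W}_X$ has $|Y| \geq 2$, and because $G[X]$ is an induced path and every witness set induces a connected subgraph, $Y$ must be a contiguous sub-path $\{v_a, v_{a+1}, \ldots, v_b\}$ of $X$ with $b > a$. I would fix such a $Y$, pick any edge $e = v_m v_{m+1}$ with $a \leq m < b$ (this edge necessarily lies in $F$, since the path edges of $G[Y]$ are the unique spanning tree of $G[Y]$), and set $F' = F \setminus \{e\}$ and $T' = G/F'$. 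The $T'$-witness structure of $G$ then coincides with $\mathcal{W}$ except that $Y$ has been split into $Y' = \{v_a, \ldots, v_m\}$ and $Y'' = \{v_{m+1}, \ldots, v_b\}$. The goal is to show $T' \in \bbT$, which immediately contradicts the minimality of $F$.

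The key structural observation driving the verification is that because each $v_i$ with $i \in [q]$ has $\deg_G(v_i) = 2$ with both neighbours on the path, every edge of $G$ leaving $Y$ is incident to the endpoints $u = v_0$ or $v = v_{q+1}$. Hence in $T'$ the external neighbours of $t_{Y'}$ come only through $u$ (only if $u \in Y'$) and those of $t_{Y''}$ only through $v$ (only if $v \in Y''$), while the single new edge $t_{Y'}t_{Y''}$ is the unique connection between them. This pins down $|V(T')| = |V(T)| + 1$ and reduces the question of whether $T' \in \bbT$ to comparing $|E(T')|$ with $|E(T)|$ by a short case analysis.

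The routine case is when $Y$ avoids at least one of $u,v$: then $T'$ is obtained from $T$ simply by subdividing one edge, so $|E(T')| = |E(T)| + 1$ and $T' \in \bbT$ by Observation~\ref{graph-class-prop}(2). The main obstacle is the case $Y = X$ (so $|\mathcal{W}_X| = 1$ and $Y$ contains both endpoints), where an external witness set $Z$ adjacent in $G$ to both $u$ and $v$ would produce two distinct edges $t_{Y'}t_Z$ and $t_{Y''}t_Z$ in $T'$, pushing $|E(T')|$ past $|V(T')| - 1 + \ell$ and potentially destroying $\bbT$-membership. This is precisely what the third hypothesis eliminates: every such $Z$ lies in some monochromatic component $X' \in \mathcal{X} \setminus \{X\}$, and by assumption no such $X'$ is adjacent to both $u$ and $v$. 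Thus in every case $|E(T')| = |E(T)| + 1 \leq |V(T')| - 1 + \ell$ and $T'$ is connected, so $T' \in \bbT$. Since $|F'| < |F|$, this contradicts the inclusion-wise minimality of $F$, and therefore $|\mathcal{W}_X| = |X|$.
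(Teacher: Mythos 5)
Your proposal is correct and follows essentially the same strategy as the paper: assume some $Y \in \mathcal{W}_X$ has $|Y|\geq 2$, delete one of the (necessarily in-$F$) path edges inside $Y$, and show the resulting quotient is still in $\bbT$, contradicting minimality of $F$. The paper's version picks the edge $v_iv_{i+1}$ where $v_i$ is the first vertex of $Y$ on the path and then appeals directly to Observation~\ref{graph-class-prop2} (splitting a vertex of a $\bbT$-graph along a partition of its neighbourhood stays in $\bbT$), whereas you pick an arbitrary internal edge of $Y$ and verify membership in $\bbT$ by a direct edge count. These are two phrasings of the same idea. In fact, your write-up is more complete on one point the paper leaves tacit: the paper simply asserts that $N_T(t_Y)$ can be partitioned into the sets adjacent to $v_i$ and to $Y\setminus\{v_i\}$, without saying why no neighbour of $t_Y$ falls in both; you correctly identify that the only problematic case is $Y=X$, where an external witness set adjacent to both $u$ and $v$ would create a parallel pair of new edges, and you correctly observe that this is precisely what the third hypothesis forbids.

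One small inaccuracy worth flagging: your ``key structural observation'' that every edge of $G$ leaving $Y$ is incident to $u$ or $v$ is not literally true when $Y \subsetneq X$, since the edges $v_{a-1}v_a$ and $v_bv_{b+1}$ leave $Y$ at $v_a$ and $v_b$ rather than at the path endpoints. The correct version, which is what your subsequent case analysis actually uses, is that every edge from $Y$ to $V(G)\setminus X$ must be incident to $u$ or $v$ (by the degree-2 condition on the internal vertices of $X$). This does not affect the correctness of the argument because your case split between ``$Y$ avoids an endpoint'' and ``$Y=X$'' handles the edges to $v_{a-1}$ and $v_{b+1}$ correctly in the former case.
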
 
\begin{proof}
  Recall that $F$ is a minimal solution corresponding to the witness structure $\cal W$. Assume that $|{\cal W}_X| < |X|$. 
  This implies that there exists $Y \in {\cal W}_X$ such that $|Y| \geq 2$. Let $t_Y$ be a vertex in $T$ such that $Y = W(t_Y)$. Also, let $v_i$ be the smallest $i$ such that $v_i \in Y$. Since $|Y| \geq 2$, $v_{i+1}$ is also present in $Y$. We can partition neighbors of $t_Y$ into $N_1$ and $N_2$ such that $N_1$ is adjacent to $v_i$ and $N_2$ is adjacent to $Y \setminus \{v_i\}$. By Observation~\ref{graph-class-prop2}, $G/(F \setminus \{(v_i, v_{i+1})\})$ is also a graph in $\bbT$. This contradicts the minimality of $F$. 
 
 \end{proof}


Next, we show that each $X \in {\cal X}$ for which Lemma~\ref{lem:contract-path} and~\ref{lem:excludes-path} are not applicable must contain exactly one big witness set. Moreover, the unique big witness set (together with other vertices as singleton sets) forms one of its shatters. 

\begin{lemma}\label{lem:unique-big-witness}
For $X \in {\cal X}$ with $|X| \geq 2$, let ${\cal W}_X \subseteq {\cal W}$ such that $X=\cup_{Y \in {\cal W}_X} Y$. Furthermore, the set $X$ does not satisfy the conditions of Lemma~\ref{lem:contract-path} or~\ref{lem:excludes-path}. Then there is exactly one big witness set in ${\cal W}_X$.
\end{lemma}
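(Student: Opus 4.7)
The plan is to separate the claim into an upper bound (at most one big witness set) and a lower bound (at least one) on $|\mathcal{W}_X \cap B|$ using the $\mathcal{W}$-compatibility of the coloring $\phi$.

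For the upper bound I would exploit that every $Y \in \mathcal{W}_X$ is a subset of the monochromatic set $X$, so all $t_Y$ with $Y \in \mathcal{W}_X$ carry the same color. Let $H_X$ denote the subgraph of $T$ induced on $\{t_Y : Y \in \mathcal{W}_X\}$; it is connected because $G[X]$ is. If two witness sets $Y_1, Y_2 \in \mathcal{W}_X$ were big, then $t_{Y_1}, t_{Y_2} \in B \subseteq M \cup B$. Walking along an $H_X$-path between them I would pick a consecutive pair of marked vertices (i.e., two marked vertices joined by a $T$-path whose internal vertices are unmarked). Conditions~2 and~3 of Definition~\ref{def:compatibility-ws-tc} then force these two marked vertices to have distinct colors, contradicting monochromaticity of $X$.

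For the lower bound, suppose every $Y \in \mathcal{W}_X$ is a singleton; then $G[X] \cong H_X$. The same argument applied to the full marked set $M \cup B$ shows $H_X$ contains at most one marked vertex $t^*$. Every non-marked vertex has $T$-degree at most $2$ and touches only $T_S$-edges; so every edge of $H_X$ lies in $T_S$ (a non-marked endpoint never sees a non-spanning-tree edge), and $H_X$ is a subtree of $T_S$. I would next argue that $t^*$, if it exists, has $H_X$-degree at most $1$. Otherwise, walking from $t^*$ through a non-marked neighbor inside $H_X$, continuing through $H_X$ to its opposite endpoint, and then leaving via $T$-edges outside $H_X$, one produces a walk whose initial segment is non-marked. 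Because any cycle of $T$ uses a non-$T_S$ edge and therefore contains two marked vertices, this walk cannot close back to $t^*$ through non-marked vertices alone; hence it must meet some other marked vertex $t'$, and applying condition~3 to the pair $(t^*, t')$ with the non-marked initial segment yields $\phi(W(t^*)) \neq \phi(W(t_1))$, contradicting the fact that $t^*$ and its $H_X$-neighbor $t_1$ share the color of $X$.

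With $H_X$ a path and $G[X] \cong H_X$, each internal vertex of $G[X]$ has both of its $T$-neighbors inside $H_X$, so its $G$-degree equals $2$. Since $G$ is $2$-connected, each endpoint of the path has a $G$-neighbor outside $X$, lying in some witness set belonging to another monochromatic component. Depending on whether there exists a single $X' \in \mathcal{X} \setminus \{X\}$ hitting the neighborhoods of both endpoints, the preconditions of Lemma~\ref{lem:contract-path} or of Lemma~\ref{lem:excludes-path} are fulfilled, contradicting the hypothesis that $X$ satisfies neither. Hence a singleton-only $\mathcal{W}_X$ is impossible, and together with the upper bound this gives exactly one big witness set.

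The main obstacle is the case in which $H_X$ contains exactly one marked vertex $t^*$ of $H_X$-degree $\geq 2$. The argument there pivots on the structural observation that every cycle of $T$ must carry at least two marked vertices (both endpoints of any non-$T_S$ edge are automatically in $M$), which rules out closing a non-marked walk back to $t^*$ itself and forces condition~3 of Definition~\ref{def:compatibility-ws-tc} to bite on a genuinely distinct marked vertex.
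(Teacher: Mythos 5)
Your upper-bound argument (at most one big witness set) is essentially the paper's: two big witness sets in $\mathcal{W}_X$ give a consecutive pair of marked vertices on an $H_X$-path whose internal nodes are unmarked, and conditions~2--3 of Definition~\ref{def:compatibility-ws-tc} force a color clash inside the monochromatic $X$. For the lower bound you take a genuinely different route. The paper argues directly: ruling out the ``$X=V(G)$, all degree 2'' case via a monochromatic cycle in $T$, then extracting a vertex $v \in X$ with $\deg_G(v)\ge 3$, noting $t_v$ and a singleton neighbor $t_u$ cannot be cut vertices (Lemma~\ref{lemma:treel-witness-prop1}(2)), and building a $T$-cycle through $t_v,t_u,t_1$ that must carry two marked vertices. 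You instead characterize $H_X$ structurally: it is a subtree of $T_S$ because every $H_X$-edge has an unmarked endpoint, all unmarked vertices have $T$-degree $\le 2$, so $H_X$ is a path once you rule out a high-degree marked $t^*$; then the isomorphism $G[X]\cong H_X$ plus $2$-connectedness put $X$ squarely in the preconditions of Lemma~\ref{lem:contract-path} or Lemma~\ref{lem:excludes-path}. This is a cleaner unified picture that also absorbs the paper's ``$X=V(G)$ is a cycle'' sub-case automatically (a monochromatic cycle already has two adjacent marked vertices, contradicting the ``at most one marked'' step).

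There is one genuine gap in the walk argument. You claim the non-marked walk leaving $t^*$ through $t_1\in H_X$ ``must meet some other marked vertex $t'$'' because it cannot close into a cycle back to $t^*$. But a priori the walk could also dead-end at a leaf of $T$ without ever seeing a marked vertex, and then condition~3 has nothing to bite on. This possibility has to be excluded: if the walk reaches a $T$-leaf $t_w$, then $t_w$'s unique neighbor (the previous vertex on the walk, assuming $|V(T)|\ge 3$) is a cut vertex of $T$, and since $G$ is $2$-connected, Lemma~\ref{lemma:treel-witness-prop1}(2) forces that neighbor to be a big witness set, hence marked --- contradicting the walk being unmarked past $t^*$. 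You invoke $2$-connectedness later but not here, where it is actually needed to close the case analysis. (Similarly, revisiting a non-$t^*$ vertex is impossible because unmarked vertices have $T$-degree $2$, which you leave implicit.) A second, smaller point: as written your walk argument applies whenever $t^*$ has any $H_X$-neighbor, so it actually shows $t^*$ cannot exist at all (not merely that it has $H_X$-degree $\le 1$); this is harmless but your framing of the ``main obstacle'' case is slightly off. Finally, the statement ``each internal vertex of $G[X]$ has both of its $T$-neighbors inside $H_X$'' is only immediate for unmarked internal vertices (using $T$-degree $\le 2$); it would fail if $t^*$ were interior, so the degree bound on $t^*$ (or its nonexistence) is load-bearing and should be stated before this step.
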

\begin{proof}
Consider $X \in {\cal X}$ with $|X| \geq 2$. Assuming a contradiction, suppose ${\cal W}_X$ contains two big witness sets say $Y$ and $Y'$. Notice that there cannot be an edge between a vertex in $Y$ and a vertex in $Y'$ (see item 2 of Definition~\ref{def:compatibility-ws-tc}). This together with the connectedness of $X$ implies that there is a path from a vertex in $y \in Y$ and a vertex in $y' \in Y'$ which contains a neighbor of $y$ in some $Z \in {\cal W}_X \setminus \{Y,Y'\}$. But then from item 2 and 3 of Definition~\ref{def:compatibility-ws-tc} we have $\phi(Y) \neq \phi(Z)$, a contradiction. Therefore, $X$ can contain at most one big witness set from $\cal W$. 

Suppose $X$ does not contain any big witness set.
Consider the case when $deg_G(v) = 2$ for all $v \in X$ and $X = V(G)$. Since all the witness sets in $X$ are singleton, there exists a cycle in $T$ such that all the vertices on this cycle have same color. This contradicts the fact that $\phi$ is $\cal W$-compatible (see item 3 Definition~\ref{def:compatibility-ws-tc}). 
We now consider case when $X$ is a proper subset of $V(G)$ or it contains a vertex of degree $3$.
Since $X$ does not satisfies conditions of Lemma~\ref{lem:contract-path} or~\ref{lem:excludes-path}, it is not an induced path or it is an induced path but one of its internal vertex has degree other than $2$.
Since $X$ is connected, in either case there exists $v \in X$ such that $deg_G(v) \ge 3$.
If $X$ contains all singleton witness set then $deg_T(t_v) \ge 3$ where $W(t_v) = \{v\}$.
Let $u \in X$ be a vertex adjacent to $v$ and $W(t_u) = \{u\}$. Since $|W(t_v)| = |W(t_u)| = 1$, neither $t_v$ nor $t_u$ is a cut vertex in $T$ which implies $deg_T(t_u) > 1$.
Let $t_v, t_1$ are two neighbors of $t_u$.
There exists a path between $t_v, t_1$ which does not contain vertex $t_u$. This implies there exists a cycle in $T$ containing $t_v, t_u, t_1$.
There are at least two vertices marked on this cycle.
Hence, either $t_u$ is marked or $t_u$ is contained between two marked vertices.
In either case, it contradicts the fact that $X$ is color class of a coloring which is $\cal W$-compatible (see item 2, 3 Definition~\ref{def:compatibility-ws-tc}).

\end{proof}

\begin{lemma}\label{lem:unique-shatter}
Consider $X \in {\cal X}$ such that $|X| \geq 2$ and it contains a big witness set, and it does not satisfy conditions of Lemma~\ref{lem:contract-path} or~\ref{lem:excludes-path}. Let ${\cal W}_X \subseteq {\cal W}$ such that $X=\cup_{Y \in {\cal W}_X} Y$, and $W^*$ be the (unique) big witness set in $X$. Then $W^*$ is a connected vertex cover of $G[X]$ and it contains $\hat{X}$. 
\end{lemma}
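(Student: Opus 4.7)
The plan is to first establish the structural claim that every element of $T_X \setminus \{t_{W^*}\}$, where $T_X := \{t \in V(T) : W(t) \subseteq X\}$, is a degree-$1$ leaf of $T$ whose unique $T$-neighbour is $t_{W^*}$. From this claim both conclusions of the lemma follow at once: the only edges of $G[X]$ are inside $W^*$ or between $W^*$ and a singleton leaf, all of which are covered by $W^*$; and each such leaf singleton $\{y\}$ has its entire $G$-neighbourhood inside $W^*$ (since $t_y$'s only $T$-neighbour is $t_{W^*}$), so $y \notin \hat X$, giving $\hat X \subseteq W^*$. Connectedness of $W^*$ is inherited from its being a witness set.

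To prove the structural claim, pick $t' \in T_X \setminus \{t_{W^*}\}$ and take a shortest path $(t_{W^*} = s_0, s_1, \ldots, s_k = t')$ in $T_X$; by Lemma~\ref{lem:unique-big-witness} each $s_i$ with $i \geq 1$ is a singleton, and by monochromaticity of $X$ all $s_i$ share $t_{W^*}$'s $\phi$-colour. The goal is to force $s_1$ to be a leaf of $T$ with $t_{W^*}$ as sole neighbour, so that $k=1$ and $s_1 = t'$. If $s_1 \in M \cup B$, item~$2$ of Definition~\ref{def:compatibility-ws-tc} immediately contradicts monochromaticity. Otherwise $s_1$ is non-marked, so $\deg_T(s_1) \leq 2$ and $s_1$ is no endpoint of a non-tree edge. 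In the subcase $\deg_T(s_1) = 2$, I would walk in $T$ starting $t_{W^*}, s_1, \ldots$ and, at each subsequent non-marked vertex, follow its unique non-backward neighbour. This walk cannot die at a leaf: a non-marked singleton leaf's unique neighbour is forced by the $2$-connectivity of $G$ to be big (else the leaf-vertex has $G$-degree $\leq 1$), hence marked, and the walk would have stopped there; big leaves are themselves marked and stop the walk directly. A self-intersection is also impossible, since any cycle of $T$ must use some edge of $E(T) \setminus E(T_S)$ whose two endpoints lie in $M$, whereas such a closed loop along our walk could contain at most one marked vertex. Consequently the walk reaches a second marked vertex $t^*$, and item~$3$ of Definition~\ref{def:compatibility-ws-tc} applied to the path $(t_{W^*}, s_1, \ldots, t^*)$ yields $\phi(W(t_{W^*})) \neq \phi(W(s_1))$, again contradicting monochromaticity. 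The sole surviving possibility is $\deg_T(s_1) = 1$ with $t_{W^*}$ as its unique $T$-neighbour, whence $k = 1$ and $s_1 = t'$ is a leaf of $T$ adjacent to $t_{W^*}$, as desired.

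The main obstacle is the walk argument in the degree-$2$ subcase, where I must carefully rule out self-intersections (via the non-tree-edge observation that both endpoints of any edge in $E(T) \setminus E(T_S)$ lie in $M$) and premature termination at a non-marked leaf (via the $2$-connectivity of $G$, which forces any singleton leaf's sole neighbour to be big and thus marked, so the walk would stop one step earlier). Once these two traps are excluded, items~$2$ and~$3$ of Definition~\ref{def:compatibility-ws-tc} cleanly deliver the colour contradictions underpinning the structural claim, from which the lemma follows.
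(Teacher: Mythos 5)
Your proof is correct, and it follows a genuinely different route from the paper's. The paper argues by contradiction directly on the vertex-cover property: it supposes there is an uncovered edge $y_1y_2$ in $G[X]$ (or an edge from $X\setminus W^*$ leaving $X$, for the $\hat X$ part), uses the connectivity of $G[X]$ to get a path $P_1$ from $y_1$ to $W^*$ inside $X$ and the $2$-connectivity of $G$ to get a second path $P_2$ from $y_2$ to $W^*$ avoiding $y_1$, observes that these give rise to a cycle in $T$ through $t^*, t_1, t_2$, and then uses the existence of a second marked vertex on that cycle to contradict items 2/3 of Definition~\ref{def:compatibility-ws-tc}. You instead prove the sharper structural statement that every node of $T_X\setminus\{t_{W^*}\}$ is a leaf of $T$ hanging off $t_{W^*}$, via a walk argument: starting at $t_{W^*}$, a walk through unmarked vertices cannot die at an unmarked leaf (the $2$-connectivity step, showing a singleton leaf's sole $T$-neighbour must be big), cannot form a cycle free of a second marked vertex (since any cycle of $T$ uses a non-tree edge and hence two marked endpoints), and so must reach a marked vertex, triggering item 3 against monochromaticity. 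Your formulation has two advantages: it derives both conclusions of the lemma (vertex cover and $\hat X\subseteq W^*$) from one structural fact rather than running the argument twice, and it makes fully explicit the step that the segment of the cycle inside $T_X$ contains no marked interior vertex, which the paper's write-up glosses over. Both proofs rest on the same two pillars — $2$-connectivity of $G$ and the observation that a cycle in $T$ forces two marked vertices — but they organize the case analysis around different objects (an uncovered edge versus a walk out of $t_{W^*}$).
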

\begin{proof}
  Suppose $X$ contains a big witness set, say $W^*$. From Lemma~\ref{lem:unique-big-witness}, for each $Y \in {\cal W}_X \setminus \{W^*\}$ we have $|Y|=1$.
  We first prove that $W^*$ is a vertex cover of $G[X]$.
  Assume that $W^*$ is not a vertex cover of $G[X]$, then there is an edge $y_1y_2$ such that $y_1,y_2 \notin W^*$.
  For $i \in \{1, 2\}$, let $\{y_i\} = Y_i = W(t_i)$ .
  Since $G[X]$ is connected, there exists a path between $y_1, y_2$ and a vertex in $W^*$, which is contained in $X$.
  Without loss of generality, assume that there exist a path $P_1$ in $X$ from $y_1$ to $W^*$ which does not contain $y_2$.
  Since $G$ is $2$-connected, there exists a path, say $P_2$ from $y_2$ to a vertex in $W^*$, which does not contain $y_1$.
  Notice that there is a cycle in $T$ containing nodes $t^*, t_1, t_2$, where $W^* = W(t^*)$.
  At least two node of the vertices from this cycle must be marked, and  have different colors from each other.
  Hence, the path $P_2$ can not be contained in $X$. 
  We know that $t^*$ is contained in $M \cup B$. Let the other marked vertex in this cycle be $t$. The vertex $t$ is obtained by contracting some vertices on the path $P_2$. Notice that $t_1$ is vertex contained in the path between two vertices in $M \cup B$ and all the nodes in path from $t^*$ to $t_1$ has the same color. This contradicts the fact that $X$ is a color class in coloring which is $\cal W$-compatible (see item 2, 3 of Definition~\ref{def:compatibility-ws-tc}). Hence our assumption was wrong and no such edge $y_1y_2$ exits. Since, $W^*$ is a witness set, by definition, it is connected and therefore $W^*$ is a connected vertex cover of $G[X]$. Notice that all the above argument still holds if the path $P_1$ is simply an edge and $y_2$ is outside $X$. In other words, if there exists $y_1$ in $X \setminus W^*$ there is no edge $y_1 y_2$ such that $y_2$ is not contained in $X$. This implies that $\hat{X}$ is contained in $W^*$.
    
\end{proof}

Using Lemma~\ref{lem:excludes-path} to Lemma~\ref{lem:unique-shatter} we show how we can replace each $X \in {\cal X}$ with the sets of its shatter. Recall that we are given only $G$ and $\phi$, and therefore we know $\cal X$, but we do not know $\cal W$. In the Lemma~\ref{lem:replace-ws}, we show how we can find a $T'$-witness structure of $G$ for some $T' \in \bbT$, which has at least as many vertices as $T$ (without knowing $\cal W$).

\begin{lemma}\label{lem:replace-ws}
Given $\cal X$, we can obtain a $T'$-witness structure of $G$ in time $2^kn^{\OO(1)}$ time, where $T' \in \bbT$ and $|V(T')|\geq |V(T)|$. 
\end{lemma}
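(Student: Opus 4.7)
The plan is to process each monochromatic component $X \in \mathcal{X}$ independently and decide its partition into witness sets via a case analysis driven by Lemmas~\ref{lem:contract-path}--\ref{lem:unique-shatter}. If $|X|=1$ the set is left alone. Otherwise I would first check in polynomial time whether $G[X]$ is an induced path whose internal vertices have degree $2$ in $G$; if so, I would decide between Lemma~\ref{lem:contract-path} and Lemma~\ref{lem:excludes-path} by testing whether some other $X' \in \mathcal{X}$ is adjacent to both endpoints of $X$, and then either keep $X$ as a single witness set or split it into $|X|$ singletons accordingly. In the remaining case, Lemmas~\ref{lem:unique-big-witness} and~\ref{lem:unique-shatter} guarantee that in any target witness structure exactly one part of $X$ is a connected vertex cover of $G[X]$ containing $\hat{X}$ while the remaining parts are singletons, so I would compute a \emph{minimum} connected vertex cover $C$ of $G[X]$ subject to $\hat{X} \subseteq C$ and output $\{C\} \cup \{\{v\} : v \in X \setminus C\}$ as the parts produced from $X$.

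To enforce the constraint $\hat X \subseteq C$ I would attach a private pendant to every vertex of $\hat X$, which forces these vertices into every vertex cover, and then invoke Proposition~\ref{prop:cvc-algo} iteratively for target sizes $1, 2, \ldots, k+1$, returning the smallest feasible $C$. Since each big witness set of any valid $\mathcal W$ has size at most $k+1$ and the union of all big witness sets contains at most $2k$ vertices (by the first Observation of the paper), at most $k$ components land in the third case, and the total running time sums to $2^{k} n^{\OO(1)}$.

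Correctness has three ingredients. First, every part of the resulting partition $\mathcal{W}'$ induces a connected subgraph (singletons trivially, the path components by construction, and $C$ by definition of connected vertex cover), so $\mathcal{W}'$ is an $H$-witness structure of $G$ for some graph $H := T'$. Second, $|V(T')| \geq |V(T)|$ by case comparison: in the path cases $|\mathcal{W}'_X| = |\mathcal{W}_X|$ exactly, and in the third case $|\mathcal{W}'_X| = 1 + (|X|-|C|) \geq 1 + (|X|-|W^*|) = |\mathcal{W}_X|$ because $W^{*}$ is a feasible connected vertex cover containing $\hat X$ by Lemma~\ref{lem:unique-shatter} and $C$ is minimum. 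Third, $T' \in \bbT$: I would start from the quotient graph of $\mathcal X$ (call it $T_{\mathcal{X}}$), which lies in $\bbT$ by Lemma~\ref{lem:ws-calX-at-most-vertices}, and then obtain $T'$ from $T_{\mathcal X}$ by iteratively splitting off the singleton parts of $\mathcal{W}'_X$ from the node representing $X$, one at a time. Each such split fits the template of Observation~\ref{graph-class-prop2} --- in the third case the singleton's only $G$-neighbors are in $C \subseteq X$ (since $v \notin \hat X$), and in the path case the splits respect the degree-$2$ internal structure --- so each preserves membership in $\bbT$.

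The main obstacle I anticipate is the last step of the correctness argument, namely verifying that the sequence of single-vertex splits building $T'$ out of $T_{\mathcal X}$ is admissible under Observation~\ref{graph-class-prop2}; this is where the precise interactions between $\hat X$, the monochromatic components, and the $\bbT$-budget have to be pinned down. Once this verification is in hand, the remainder of the proof is bookkeeping on the counts of witness sets.
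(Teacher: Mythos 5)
Your proof takes essentially the same route as the paper: the same case split driven by Lemmas~\ref{lem:contract-path}--\ref{lem:unique-shatter}, the same pendant-attachment trick to force $\hat X$ into the connected vertex cover, and the same invocation of Proposition~\ref{prop:cvc-algo}. You actually spell out the $T' \in \bbT$ step (via repeated applications of Observation~\ref{graph-class-prop2}) and the $|V(T')| \geq |V(T)|$ count more explicitly than the paper does, but the argument is the same.
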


\begin{proof}
  Consider $X \in {\cal X}$. If $|X|=1$ then we let ${\cal W}_X=\{X\}$, which is the unique shatter of $X$.
  We now consider $X \in \cal X$ such that $|X| \ge 2$.
  If there is $X \in \cal X$ which satisfies the premise of Lemma~\ref{lem:contract-path} then contract all edges in $X$ and reduce $k$ by $|X| - 1$.
  If there exists $X \in \cal X$ which satisfies the premise of Lemma~\ref{lem:excludes-path} then replace $X$ in $\cal X$ with $|X|$ many singleton sets $\{v\}$ for each $v \in X$. 
  If there exists $X \in \cal X$ which does not satisfy the conditions of Lemma~\ref{lem:contract-path} and ~\ref{lem:excludes-path} then from Lemma~\ref{lem:unique-big-witness} we know that $X$ contains exactly one big-witness set, say $\hat W$.  
  Moreover, Lemma~\ref{lem:unique-shatter} implies that $\hat W$ is a connected vertex cover of $G[X]$ containing $\hat X$.
  In this case, we will find a shatter $W^*$ of $X$, which has size at most $|\hat W|$ as follows as follows. Let $G'$ be the graph obtained from $G[X]$ by adding a (new) vertex $v^*$ for each vertex $v \in \hat X$, and adding the edge $(v,v^*)$. Then we find a minimum sized connected vertex cover of $C$ of $G'$ by using the algorithm given by Proposition~\ref{lem:cvc-algo-cygan}.
  Notice that a minimum connected vertex cover of a graph does not contain any degree one vertex therefore, $\hat X \subseteq C$. From the definition of minimum shatter and the minimality of set $C$, it follows that ${\cal W}_X= \{C\} \cup \{\{x\} \mid x \in X \setminus C\}$ is a minimum shatter of $X$. Notice that apart from computing connected vertex cover, all other steps can be performed in polynomial time.
  Since the size of each witness set in $\cal W$ is bounded by $k+1$, therefore there exists a connected vertex cover of size at most $k + 1$.
  Moreover, we can compute connected vertex cover in time $2^{k+1}n^{\OO(1)}$ (Proposition~\ref{lem:cvc-algo-cygan}), and there are at most $n$ sets in $\cal X$. Therefore, the overall running time is bounded by $2^{k}n^{\OO(1)}$. 
    
\end{proof}

Now we are ready to present our randomized algorithm for \treelc\ when input graph is $2$-connected. 

\begin{theorem} \label{thm:2conn-algo}
  There is a Monte Carlo algorithm for solving
  \treelc\ on $2$-connected graphs running in time $\mathcal{O}((\ncol)^{\mathcal{O}(k + \ell)} \cdot n^{\mathcal{O}(1)})$, where $n$ is the number of vertices in the input graph. It does not return false positive and returns correct answer with probability at least $1 - 1/e$.
\end{theorem}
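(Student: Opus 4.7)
I will solve the problem by a color-coding argument built on Observation~\ref{graph-class-prop}(1), which asserts that every graph in $\mathbb{T}_\ell$ has chromatic number at most $c := \ncol$. Fix a hypothetical solution $F$, contraction $T = G/F \in \mathbb{T}_\ell$, and witness structure $\mathcal{W}$. By Lemma~\ref{lemma:treel-witness-prop1}(1) I may assume that each leaf of $T$ is a singleton witness set; and by Lemma~\ref{lemma:treel-witness-prop1}(2), exploiting $2$-connectivity of $G$, every cut vertex of $T$ lies in the set $B$ of big witness sets. Pulling back a proper $c$-coloring of $T$ through the witness map yields a $\mathcal{W}$-compatible coloring of $V(G)$, and the algorithm will try to guess such a coloring uniformly at random.

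The algorithm is: sample $\phi : V(G) \to [c]$ uniformly and independently; compute the family $\mathcal{X}$ of monochromatic components of $\phi$; apply Lemma~\ref{lem:replace-ws} (which invokes the \cvc\ procedure of Proposition~\ref{prop:cvc-algo} in time $2^{k} n^{\OO(1)}$) to refine $\mathcal{X}$ into a candidate contraction $F'$; and verify in polynomial time that $|F'| \le k$ and $G/F' \in \mathbb{T}_\ell$. If the verification succeeds, return $F'$; otherwise draw a fresh $\phi$. Repeat $N = (\ncol)^{\OO(k+\ell)}$ times and report \emph{no} if every sample fails. The verification step guarantees that a false positive is impossible.

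The core of the analysis is the bound $\Pr[\phi \text{ is } \mathcal{W}\text{-compatible}] \ge (\ncol)^{-\OO(k+\ell)}$, since then $N$ iterations push the failure probability below $1/e$. Condition~1 of Definition~\ref{def:compatibility-ws-tc} (monochromaticity on each witness set) requires $|F|$ independent color-agreements and so occurs with probability $c^{-|F|} \ge c^{-k}$. Conditioned on Condition~1, the induced coloring of $V(T)$ is uniform, and Conditions~2 and~3 reduce to $\OO(|M \cup B|)$ "these two vertices get different colors" constraints, each violated with probability $1/c$; the product bound yields $(1-1/c)^{\OO(|M \cup B|)}$. To conclude the probability estimate I need $|M \cup B| = \OO(k+\ell)$.

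The hardest part is exactly this bound on $|M \cup B|$. Directly, $|B| \le k$ and the number of non-tree endpoints is at most $2\ell$. The remaining contribution — degree-$\ge 3$ singleton vertices of $T$ — is a priori dangerous because it might scale with $|V(T)|$. The key observation is that, under the two assumptions above, any singleton vertex of $T$ with $T$-degree $\ge 3$ must sit inside a single nontrivial $2$-connected block of $T$ (otherwise its removal would disconnect $T$, putting it in $B$). Inside a block with $v$ vertices and $m$ edges at most $2(m-v)$ vertices have block-degree $\ge 3$, and by the cyclomatic identity $\sum_{\text{blocks}} (m-v+1) \le \ell$, so summing gives at most $2\ell$ such singletons. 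This pins $|M \cup B| = \OO(k+\ell)$ and closes both the probability argument and the running-time bound $(\ncol)^{\OO(k+\ell)} \cdot n^{\OO(1)}$.
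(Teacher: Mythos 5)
Your algorithm is identical to the paper's: sample a random $c$-coloring with $c = \ncol$, form the monochromatic components, refine them into a candidate witness structure via Lemma~\ref{lem:replace-ws} (and the \cvc\ routine of Proposition~\ref{prop:cvc-algo}), verify, and repeat $c^{\OO(k+\ell)}$ times; the verification step rules out false positives in both arguments.

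Where you genuinely depart from the paper is the central combinatorial bound $|M \cup B| = \OO(k+\ell)$. The paper deletes all leaves of $T$ and all non-spanning-tree edges to obtain an auxiliary tree $T'$, shows $T'$ has at most $k+2\ell$ leaves (using that each leaf of $T$ is a singleton adjacent to a big witness set), and invokes the fact that a tree has no more branch vertices than leaves, arriving at $|M\cup B| \le 2k+4\ell$. You instead use Lemma~\ref{lemma:treel-witness-prop1}(2) directly: since $G$ is $2$-connected, every cut vertex of $T$ is a big witness set, so a degree-$\ge 3$ singleton must live in a single nontrivial block of $T$; the handshake inequality gives at most $2(m-v)$ branch vertices per such block, and additivity of the cyclomatic number over blocks gives $\sum_{\text{blocks}}(m-v+1) = |E(T)|-|V(T)|+1 \le \ell$, so at most $2\ell$ such singletons, yielding $|M\cup B| \le k+4\ell$. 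This is a cleaner argument that avoids the auxiliary tree and the ``leaf adjacent to a big witness set'' step, at the cost of invoking a bit more block-decomposition machinery.

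There is, however, a gap in the probability calculation. After correctly noting that, conditioned on Condition~1, the induced coloring of $V(T)$ is uniform, you lower-bound the probability that Conditions~2 and~3 hold by $(1-1/c)^{\OO(|M\cup B|)}$, as though the $\OO(|M\cup B|)$ ``must-differ'' constraints were independent. They are not in general, and when the constraint graph contains a cycle --- which can happen here, since $T$ carries up to $\ell$ non-tree edges --- the product of the marginals is not a valid lower bound: a uniform $3$-coloring of $K_3$ is proper with probability $2/9 < (2/3)^3 = 8/27$. The fix is exactly the device you already state in your opening paragraph but do not carry into the estimate: fix a proper $c$-coloring of $T$ (Observation~\ref{graph-class-prop}(1)), pull it back through the witness map to a $\mathcal{W}$-compatible coloring $\phi^*$ of $V(G)$, and lower-bound the probability that the random $\phi$ agrees with $\phi^*$ on the $\OO(k+\ell)$ vertices of $G$ whose colors are actually constrained (the vertices of big witness sets, and the singletons in $M\cup B$ together with their path-neighbors). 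That target-matching event has probability $c^{-\OO(k+\ell)}$ and implies $\mathcal{W}$-compatibility, which is what the paper computes with the explicit count of $6k+8\ell$ relevant vertices. With this repair your proof closes.
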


\begin{proof}
  Let $(G,k)$ be an instance of \treelc, where $G$ is a $2$-connected graph. Furthermore, the Reduction Rules~\ref{rr:k-no-tco} and~\ref{rr:disconn-tco} are not applicable, otherwise we can correctly decide whether or not $(G,k)$ is a \yes\ instance. The algorithm starts by computing a random coloring $\phi: V(G) \rightarrow [\ncol]$, by choosing a color for each vertex uniformly and independently at random. Let $\cal X$ be the set of monochromatic connected components with respect to $\phi$ in $G$.
  The algorithm applies Lemma~\ref{lem:replace-ws} in time $2^{k}n^{\OO(1)}$ and tries to compute $T'$ such that $T' \in \bbT$ and $G$ is $k$-contractible to $T'$.
  It runs $(\ncol)^{6k + 8\ell}$ many iterations of two steps mentioned above.
  If for any such iteration it obtains a desired $T'$-witness structure of $G$ then it returns \yes.
  If none of the iterations yield \yes\ then the algorithm returns \no. This completes the description of the algorithm.

Observe that the algorithm returns \yes\ only if it has found a $T' \in \bbT$ such that $G$ is contractible to $T'$ using at most $k$ edge contractions. Therefore, when it outputs \yes, then indeed $(G,k)$ is a \yes\ instance of \treelc. 
We now argue that if $(G, k)$ is a \yes\ instance then using a random coloring the algorithm (correctly) returns the answer with sufficiently high probability. 
Let $T$ be a graph in $\bbT$, such that $G$ is $k$-contractible to $T$, and $\cal W$ be a $T$-witness structure of $G$.
Furthermore, let $T_S$ be a (fixed) spanning tree of $T$, and vertex set $M$, $B$ are set of vertices defined in Definition~\ref{def:compatibility-ws-tc}.
Let $\psi : V(G) \rightarrow [\ncol]$ be a coloring where colors are chosen uniformly at random for each vertex.
The total number of vertices contained in big witness sets of $\calW$ is at most $2k$. 
By our assumption, every leaf is a singleton witness set and it is adjacent to a big witness set. Here, we assume that the number of vertices in $T$ is at least $3$, otherwise we can solve the problem in polynomial time. This implies that no leaf is in $M \cup B$.
Consider graph $T'$ obtained from $T$ by deleting all the leaves and deleting edges in $E(T_{\ell}) \setminus E(T_S)$.
All the marked vertices of $T_{\ell}$ and all the paths connecting two marked vertices are also present in $T'$.
Notice that $T'$ is tree with at most $k + 2\ell$ leaves.
Since the number of vertices of degree three is at most the number of leaves in any tree, there are at most $k + 2\ell$ vertices of degree at least $3$.
There are at most $k$ vertices in $T$ which are big witness sets and at most $2\ell$ vertices incident to edges in $E(T_{\ell}) \setminus E(T_S)$.
Hence the total number of marked vertices is at most $2k + 4\ell$.
Since $T'$ is a tree, there are at most $2k + 4\ell$ vertices which lie on a path between two vertices in $M \cup B$ and are adjacent to one of these.
The number of vertices of $G$ which are marked vertices or vertices which are adjacent to it in $T'$ is at most $2(2k + 4\ell) + 2k$.
Therefore, the probability that $\psi$ is compatible with $\calW$ is at least $1 / (\ncol)^{6k + 8\ell}$.
Since the algorithm runs $(\ncol)^{6k + 8\ell}$ many iterations, probability that none of these colorings which is generated uniformly at random is compatible with $\calW$ is at most
$(1 - 1/(\ncol)^{6k + 8\ell}) ^{(\ncol)^{6k + 8\ell}} < 1/e$.
Hence, algorithm returns a solution on positive instances with probability at least $1-1/e$.
Each iteration takes $2^k \cdot n^{\mathcal{O}(1)}$ time and hence the total running time of the algorithm is $\mathcal{O}((\ncol)^{\mathcal{O}(k + \ell)} \cdot n^{\mathcal{O}(1)})$.
 
 \end{proof}


Next, we design reduction rules and a branching rule whose (exhaustive) application will ensure that the instance of \treelc\ we are dealing with is $2$-connected. Either we apply one of these reduction rules or branching rule, or we resolve the instance using the algorithm for \treelpc, where $\ell'<\ell$. This together with Theorem~\ref{thm:2conn-algo} gives us an algorithm for \treelc\ on general graphs. 

\begin{lemma} \label{lemma:k-yes-tco}
If for some $0 \leq \ell'< \ell$, $(G,k)$ is a {\rm \yes} instance of \treelpc\ then return that $(G,k)$ is a {\rm \yes} instance of \treelc.
\end{lemma}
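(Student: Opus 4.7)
The plan is to observe the monotonicity of the family $\mathbb{T}_{\ell'}$ in $\ell'$, namely that $\mathbb{T}_{\ell'} \subseteq \mathbb{T}_{\ell}$ whenever $\ell' \leq \ell$. This is immediate from the definition: if a graph $H$ can be turned into a tree by deleting some set of at most $\ell'$ edges, then that very same set (which has size at most $\ell' \leq \ell$) also witnesses that $H$ can be turned into a tree by deleting at most $\ell$ edges, so $H \in \mathbb{T}_\ell$. No modification of the deleted edge set is required, because the constraint ``at most $\ell'$ edges'' is strictly stronger than ``at most $\ell$ edges''.

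Given this containment, the lemma follows by a one-line chain of implications. Suppose $(G,k)$ is a \yes\ instance of \treelpc. Then, by definition of the problem, there exists $S \subseteq E(G)$ with $|S| \leq k$ such that $G/S \in \mathbb{T}_{\ell'}$. By the containment $\mathbb{T}_{\ell'} \subseteq \mathbb{T}_\ell$, we also have $G/S \in \mathbb{T}_\ell$, so the same set $S$ certifies that $(G,k)$ is a \yes\ instance of \treelc.

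There is no genuine obstacle in this proof; the content is purely bookkeeping. The reason the statement is recorded as a lemma rather than used tacitly is algorithmic: the overall procedure for \treelc\ is set up inductively, calling the algorithm for \treelpc\ for each $\ell' < \ell$ on the current instance, and this lemma is the formal justification that a \yes\ answer returned by any such inductive call can be transferred back to a \yes\ answer for the outer \treelc\ instance without further work.
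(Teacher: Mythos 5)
Your proof is correct and is exactly the intended (and essentially only) argument: $\mathbb{T}_{\ell'} \subseteq \mathbb{T}_\ell$ for $\ell' \leq \ell$, so any certificate $S$ for \treelpc\ is also one for \treelc. The paper states this lemma without an explicit proof, treating it as immediate, and your write-up supplies precisely that justification.
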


Our next reduction rule deals with vertices of degree of $1$.

\begin{reduction rule} \label{rr:deg-one-rule}
If there is $v\in V(G)$ such that $d(v)=1$ then delete $v$ from $G$. The resulting instance is $(G-\{v\}, k)$. 
\end{reduction rule}

If a connected graph $G$ is not $2$-connected graph then there is a cut vertex say, $v$ in $G$. Let $C_1, C_2, \ldots, C_t$ be the components of $G-\{v\}$. Furthermore, let $G_1= G[V(C_1) \cup \{v\}]$ and $G_2=G - V(C_1)$. Next, we try to resolve the instance (if possible) using the following lemma.

\begin{lemma} \label{rr:2-conn}
If there exists $\ell_1$ and $\ell_2$ with $\ell_1 + \ell_2 = \ell$, where $\ell_1, \ell_2 >0$, and $k_1$ and $k_2$ with $k_1 + k_2 = k$ such that $(G_1,k_1)$ is a {\rm \yes} instance of \treeloc\ and $(G_2,k_2)$ is a {\rm \yes} instance of \treeltc\ then return that $(G,k)$ is a {\rm\yes} instance of \treelc.
\end{lemma}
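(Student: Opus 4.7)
The plan is to paste together solutions for the two pieces. First I would pick $F_1 \subseteq E(G_1)$ with $|F_1| \le k_1$ and $G_1/F_1 \in \mathbb{T}_{\ell_1}$, and $F_2 \subseteq E(G_2)$ with $|F_2| \le k_2$ and $G_2/F_2 \in \mathbb{T}_{\ell_2}$, both guaranteed by hypothesis. Since $v$ is a cut vertex, every edge of $G$ lies either in $G_1$ or in $G_2$ (never both), so $E(G_1)$ and $E(G_2)$ partition $E(G)$ and the union $F := F_1 \cup F_2$ is a disjoint union with $|F| = |F_1| + |F_2| \le k_1 + k_2 = k$. It remains to show $G/F \in \mathbb{T}_{\ell}$, which by the characterization recalled in Section~\ref{sec:prelims-treelc} (``connected and $|E| \le |V| - 1 + \ell$'') reduces to two bookkeeping checks.

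Set $T_1 := G_1/F_1$ and $T_2 := G_2/F_2$, and let $t_i$ be the vertex of $T_i$ whose witness set in $G_i$ contains $v$, for $i \in \{1,2\}$. The key structural observation is that $G/F$ is isomorphic to the graph obtained from the disjoint union of $T_1$ and $T_2$ by identifying $t_1$ with $t_2$: every witness set of $T_i$ other than the one corresponding to $t_i$ is entirely contained on one side of the cut and is unaffected by the contractions on the other side, while the two witness sets at $t_1$ and $t_2$ merge, through the shared vertex $v$, into a single connected subset of $V(G)$.

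From this identification I immediately read off that $G/F$ is connected (two connected graphs glued at a vertex), that $|V(G/F)| = |V(T_1)| + |V(T_2)| - 1$, and that $|E(G/F)| = |E(T_1)| + |E(T_2)|$. Combining with the edge bounds $|E(T_i)| \le |V(T_i)| - 1 + \ell_i$ available since $T_i \in \mathbb{T}_{\ell_i}$, I obtain
\[
|E(G/F)| \le \bigl(|V(T_1)| - 1 + \ell_1\bigr) + \bigl(|V(T_2)| - 1 + \ell_2\bigr) = |V(G/F)| - 1 + \ell,
\]
so $G/F \in \mathbb{T}_\ell$ and $F$ is a valid solution of size at most $k$ for $(G,k)$. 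There is essentially no serious obstacle here; the only step deserving care is verifying the gluing description of $G/F$, which relies precisely on the cut-vertex assumption that no edge of $G$ straddles $G_1$ and $G_2$, so that the contractions inside $G_1$ and inside $G_2$ interact with each other only at the witness set containing $v$.
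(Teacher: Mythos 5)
Your proof is correct. The paper states Lemma~\ref{rr:2-conn} without proof, so there is no paper argument to compare against, but your argument is the natural one and it is sound: the cut-vertex assumption means $E(G_1)$ and $E(G_2)$ partition $E(G)$, so $F_1$ and $F_2$ are disjoint and $|F| = |F_1| + |F_2| \le k$; the connected components of $(V(G), F)$ are exactly the witness sets of $T_1$ and $T_2$, except that the two sets containing $v$ merge, so $G/F$ is $T_1$ and $T_2$ identified at $t_1 = t_2$; and the vertex/edge counts $|V(G/F)| = |V(T_1)| + |V(T_2)| - 1$ and $|E(G/F)| = |E(T_1)| + |E(T_2)|$ then give $|E(G/F)| \le |V(G/F)| - 1 + \ell$ via the bounds $|E(T_i)| \le |V(T_i)| - 1 + \ell_i$, which together with connectedness of the one-point union certifies $G/F \in \mathbb{T}_\ell$. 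The one thing worth making explicit if you wrote this up is that there is no edge between $V(C_1)$ and $V(G_2) \setminus \{v\}$ in $G$ (this is exactly what $v$ being a cut vertex buys you), which is what guarantees both that $E(G_1), E(G_2)$ cover $E(G)$ and that no spurious edges appear between the two sides of $G/F$ beyond those already present in $T_1$ and $T_2$.
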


Notice that if Lemma~\ref{rr:2-conn} is not applicable then one of $G_1$ or $G_2$ must be contracted to a tree. Let $k_1$ be the smallest integer such that $(G_1,k_1)$ is a \yes\ instance of \treec, and $k_2$ be the smallest integer such that $(G_2,k_2)$ is a \yes\ instance of \treec. Notice that $k_1$ and $k_2$ can be computed in (deterministic) time $4^kn^{\OO(1)}$ using the algorithm for \treec\ ~\cite{treePathContract}. We next proceed with the following branching rule. 

\begin{branching rule} \label{br:2-conn}
We branch depending on which of the graphs among $G_1$ and $G_2$ are contracted to a tree. Therefore, we branch as follows.
\begin{itemize}
\item Contract $G_1$ to a tree, and the resulting instance is $(G_2, k-k_1)$.
\item Contract $G_2$ to a tree, and the resulting instance is $(G_1, k-k_2)$.
\end{itemize}
\end{branching rule}

Note that the measure strictly decreases in each of the branches of the Branching Rule~\ref{br:2-conn} since Reduction Rule~\ref{rr:deg-one-rule} is not applicable. If we are unable to resolve the instance using Lemma~\ref{lemma:k-yes-tco} and~\ref{rr:2-conn}, and Reduction Rules~\ref{rr:disconn-tco} and~\ref{rr:deg-one-rule} and Branching Rule~\ref{br:2-conn} are not applicable then the input graph is $2$-connected. And, then we resolve the instance using Theorem~\ref{thm:2conn-algo}.



\begin{theorem}
For each $\ell \in \mathbb{N}$, there is a Monte Carlo algorithm for solving \treelc\ with running in time $\mathcal{O}((\ncol)^{\mathcal{O}(k + \ell)} \cdot n^{\mathcal{O}(1)})$. It does not return false positive and returns correct answer with probability at least $1 - 1/e$.
\end{theorem}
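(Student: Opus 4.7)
The plan is to prove the theorem by induction on $\ell$. The base case $\ell=0$ is the deterministic \FPT\ algorithm of Heggernes et al.~\cite{treePathContract} for \treec. For the inductive step, assume a Monte Carlo algorithm meeting the stated guarantees is available for \treelpc\ whenever $\ell'<\ell$, and use it to build one for \treelc.

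Given an instance $(G,k)$, first apply Reduction Rules~\ref{rr:k-no-tco}--\ref{rr:deg-one-rule} exhaustively in polynomial time. Next, for each $\ell'\in\{0,1,\ldots,\ell-1\}$, invoke the inductive algorithm on $(G,k)$; by Lemma~\ref{lemma:k-yes-tco}, return \yes\ if any of these accepts. Otherwise, if $G$ is not $2$-connected, pick any cut vertex $v$, form $G_1,G_2$ as in the text, and apply Lemma~\ref{rr:2-conn} by recursively invoking the inductive algorithms for every choice of $(\ell_1,\ell_2,k_1,k_2)$ with $\ell_1+\ell_2=\ell$, $\ell_1,\ell_2>0$, and $k_1+k_2=k$; if none of them certifies \yes, compute the smallest $k_1,k_2$ with $(G_i,k_i)$ a \yes-instance of \treec\ via the deterministic algorithm of~\cite{treePathContract} and fire Branching Rule~\ref{br:2-conn}. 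Finally, once $G$ is $2$-connected, apply Theorem~\ref{thm:2conn-algo}.

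The measure $\mu=k$ strictly decreases along both children of Branching Rule~\ref{br:2-conn}, since with degree-one vertices already removed both $k_1,k_2 \geq 1$; consequently the branching search tree has depth at most $k$ and at most $2^k$ leaves, each a $2$-connected instance handled by Theorem~\ref{thm:2conn-algo} in time $\mathcal{O}((\ncol)^{\mathcal{O}(k+\ell)}n^{\mathcal{O}(1)})$. The polynomially many $(\ell_1,\ell_2,k_1,k_2)$ splits inside Lemma~\ref{rr:2-conn} and the $\ell$ recursive smaller-$\ell'$ calls at each node fit within the same bound by the inductive hypothesis, so the overall running time stays $\mathcal{O}((\ncol)^{\mathcal{O}(k+\ell)}n^{\mathcal{O}(1)})$. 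Correctness on \no-instances is automatic because none of the primitives (Theorem~\ref{thm:2conn-algo}, the inductive calls, and the deterministic tree-contraction algorithm) ever produces a false positive.

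The main obstacle is the probability analysis: a naive union bound over the $N = 2^{O(k)} \cdot \mathrm{poly}(n,\ell)$ randomized subcalls would not preserve the $1-1/e$ guarantee, since an individual randomized call may fail with probability up to $1/e$. I would amplify each such invocation by performing $\Theta(\log N)$ independent repetitions and accepting whenever any repetition accepts, driving the per-call failure probability below $1/(Ne)$ while only inflating the running time by a factor absorbed into $n^{\mathcal{O}(1)}$. On a \yes-instance there exists at least one root-to-leaf path in the combined branching-and-inductive tree along which every subproblem is itself a \yes-instance; a union bound over the $O(N)$ randomized calls on this path then gives overall success probability at least $1-1/e$, completing the induction.
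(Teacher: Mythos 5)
Your algorithmic decomposition coincides with the paper's: apply the reduction rules, attempt Lemma~\ref{lemma:k-yes-tco}, then at a cut vertex try Lemma~\ref{rr:2-conn} over all parameter splits, otherwise fire Branching Rule~\ref{br:2-conn}, and resolve the $2$-connected leaves via Theorem~\ref{thm:2conn-algo}; the running-time bookkeeping (search tree of size $2^{\OO(k)}$, measure $\mu=k$ dropping at each branch because Reduction Rule~\ref{rr:deg-one-rule} guarantees $k_1,k_2\geq 1$) also matches. Where you diverge is the probability analysis, and you are more careful than the paper there. The paper disposes of it with the remark that ``the only randomized step (recursively) in our algorithm is when we employ Theorem~\ref{thm:2conn-algo}, in which case we directly resolve the instance,'' but this glosses over the fact that the inductive calls that drive Lemma~\ref{lemma:k-yes-tco} and Lemma~\ref{rr:2-conn} are themselves Monte Carlo: in particular, when Lemma~\ref{rr:2-conn} with $\ell_1,\ell_2>0$ is what certifies a \yes\ instance, two independent inductive calls must both succeed, and $(1-1/e)^2 < 1-1/e$; moreover there are up to $2^{\OO(k)}\cdot\POLY(n,\ell)$ randomized subcalls scattered over the search tree. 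Your $\Theta(\log N)$-fold repetition of each randomized subcall, treating the $\ell'$-algorithms as black boxes and union-bounding over the calls on a good root-to-leaf path, is the standard repair, and the $\OO(k+\log n)$ multiplicative overhead per $\ell$-level is absorbed into $(\ncol)^{\OO(k+\ell)} n^{\OO(1)}$ for each fixed $\ell$. So your proposal is correct, follows the paper's decomposition closely, and is in fact more rigorous than the paper's own treatment of the success probability.
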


\begin{proof}
Let $(G,k)$ be an instance of \treelc. If $G$ is $2$-connected then we resolve the instance using Theorem~\ref{thm:2conn-algo} with the desired probability bound. If $G$ is not connected then we correctly resolve the instance using Reduction Rule~\ref{rr:disconn-tco}. Moreover, the Reduction Rule~\ref{rr:disconn-tco} can be applied in polynomial time. Hereafter, we assume that $G$ is connected, but not $2$-connected. 

In this case, we proceed by either resolving the instance using Lemma~\ref{lemma:k-yes-tco} or Lemma~\ref{rr:2-conn}, or applying the Reduction Rule~\ref{rr:deg-one-rule}, or applying the Branching Rule~\ref{br:2-conn}. We prove the claim by induction on the measure $\mu=\mu(G,k)=k$. 

If $\ell=0$ then we can resolve the instance using the (deterministic) algorithm for \treelc\ in~\cite{treePathContract} in time $4^kn^{\OO(1)}$. We note here that though the deterministic algorithm presented in~\cite{treePathContract} has been mentioned to run in time $4.98^kn^{\OO(1)}$ but, it uses the algorithm for \textsc{Connected Vertex Cover} as a black-box, which has been improved in~\cite{cvc-algo-cygan}. This also improves the running time of the deterministic algorithm in~\cite{treePathContract}. Hereafter, we inductively assume that whenever we are dealing with an instance of \treelc, we have an algorithm for \treelpc\ with the desired runtime and success probability bound, where $0 \leq \ell' < \ell$. We note that this does not interfere with the probability computation since the only randomized step (recursively) in our algorithm is when we employ Theorem~\ref{thm:2conn-algo}, in which case we directly resolve the instance. 

 If $k \leq 0$ then we correctly resolve the instance using Reduction Rules~\ref{rr:k-no-tco} and~\ref{rr:k-yes-rr-tco}. If $(G,k)$ is a \yes\ instance of \treelpc, for some $0 \leq \ell' < \ell$ then we correctly conclude that $(G,k)$ is a \yes\ instance of \treelc. Moreover, we obtain the desired probability and runtime bound using the assumption of existence of an algorithm with desired properties for every $0 \leq \ell' <\ell$. If $k >0$, and there is a vertex of degree $1$ then we remove this vertex (in polynomial time) to obtain an equivalent instance using Reduction Rule~\ref{rr:deg-one-rule}. If none of the above are applicable the $G$ has a cut vertex say. We consider the following case. If Lemma~\ref{rr:2-conn} is applicable then we correctly resolve the instance in allowed running time with the desired success probability. This again relies on the existence of an algorithm for \treelpc\ with desired properties,  for every $0 \leq \ell' <\ell$. Otherwise, we know that Branching Rule~\ref{br:2-conn} must be applicable, where the measure drops at least by $1$ in each of the branches since Reduction Rule~\ref{rr:deg-one-rule} is not applicable. Moreover, when none of the Reduction Rules~\ref{rr:k-no-tco} to~\ref{rr:deg-one-rule} are applicable, we cannot resolve the instance using one of Lemma~\ref{lemma:k-yes-tco} and Lemma~\ref{rr:2-conn}, and Branching Rule~\ref{br:2-conn} is not applicable then the graph is $2$-connected, and we resolve the instance using Theorem~\ref{thm:2conn-algo}. Notice the number of nodes in the search tree is bounded by $2^{\OO(k)}$, all the reduction rules can be applied in polynomial time, and at the leaves of the search tree and at the internal nodes we require time which is bounded by $\mathcal{O}((\ncol)^{\mathcal{O}(k + \ell)} \cdot n^{\mathcal{O}(1)})$. Thus, we obtain the desired running time and probability bound. 
\end{proof}

\section{Derandomization}
In this section, we derandomize the algorithm presented in Section~\ref{sec:rand-fpt-algo-treelc}. Before proceeding forward we define the following important object of this section. 

\begin{definition}[Universal Family] A $(n, k, q)$-universal family is a collection $\calF$, of functions from $[n]$ to $[q]$ such that for each $S \subseteq [n]$ of size $k$ and a function $\phi : S \rightarrow [q]$, there exists function $f \in \calF$ such that $f|_S \equiv \phi$.
\end{definition}

Here, $f|_S$ denotes the function $f$ when restricted to the elements of $S$. For $q = 2$, the universal family defined above is called an $(n, k)$-universal set~\cite{Naor-splitters}. Hence, $(n, k, q)$-universal family is a generalization of $(n ,k)$-universal set. The main result of this section is the following theorem (Theorem~\ref{thm:uni-family}), which we use to derandomize the algorithm presented in Section~\ref{sec:rand-fpt-algo-treelc}.

\begin{theorem}\label{thm:uni-family} 
For any $n, k, q \ge 1$, one can construct an $(n, k, q)$-universal family of size $\mathcal{O}(q^k\cdot k^{\mathcal{O}(k)} \cdot \log n)$ in time $\mathcal{O}(q^k\cdot k^{\mathcal{O}(k)} \cdot n\log n)$.
\end{theorem}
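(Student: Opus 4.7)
The plan is to reduce the construction of a $(n,k,q)$-universal family to the classical construction of an $(n,k,k^2)$-\emph{splitter} (perfect hash family), namely a family $\mathcal{F}_1$ of functions $[n]\to[k^2]$ such that for every $S\in\binom{[n]}{k}$ some $f\in\mathcal{F}_1$ is injective on $S$. I would invoke the Naor--Schulman--Srinivasan construction (or any subsequent tightening), which yields such a family $\mathcal{F}_1$ of size $k^{\OO(1)}\log n$ constructible in time $k^{\OO(1)}\, n\log n$. This is the combinatorial workhorse; without it the bound $\OO(\log n)$ on the $n$-dependence is unattainable.

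Given $\mathcal{F}_1$, I would build $\calF$ as follows: for every $f\in\mathcal{F}_1$, every $k$-subset $T\subseteq[k^2]$, and every function $\psi:T\to[q]$, add to $\calF$ the function $F_{f,T,\psi}:[n]\to[q]$ defined by $F_{f,T,\psi}(x)=\psi(f(x))$ if $f(x)\in T$ and $F_{f,T,\psi}(x)=1$ otherwise. The correctness argument is then clean: given any $S\in\binom{[n]}{k}$ and any target $\phi:S\to[q]$, pick $f\in\mathcal{F}_1$ injective on $S$, set $T:=f(S)$ (which is of size exactly $k$ since $f$ is injective on $S$), and set $\psi:=\phi\circ(f|_S)^{-1}$; then for every $x\in S$ we have $f(x)\in T$ and $F_{f,T,\psi}(x)=\psi(f(x))=\phi(x)$, so $F_{f,T,\psi}|_S\equiv\phi$.

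For the size bound, the number of triples $(f,T,\psi)$ is
\[
|\mathcal{F}_1|\cdot \binom{k^2}{k}\cdot q^k \;=\; k^{\OO(1)}\log n \cdot k^{\OO(k)} \cdot q^k \;=\; \OO\bigl(q^k\cdot k^{\OO(k)}\cdot \log n\bigr),
\]
using $\binom{k^2}{k}\le k^{2k}$. For the running time, each $F_{f,T,\psi}$ is written down in $\OO(n)$ evaluations given $f$, and $T,\psi$ are enumerated by a straightforward loop; together with the $k^{\OO(1)}\, n\log n$ preprocessing to build $\mathcal{F}_1$, this totals $\OO(q^k\cdot k^{\OO(k)}\cdot n\log n)$, matching the claim.

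The only real obstacle is quoting (or, if needed, reproving) the splitter construction; everything else is bookkeeping on top of it. One minor subtlety I would double-check is that $\binom{k^2}{k}$ is absorbed into the $k^{\OO(k)}$ factor cleanly and that the splitter bound is stated in a form giving $k^{\OO(1)}$ (not $2^{\OO(k)}$) dependence on $k$, as this is what keeps the final bound in the form required. If a looser splitter (e.g.\ with $k^{\OO(k)}\log n$ size) is used, the claimed bound still goes through since it is already $k^{\OO(k)}\log n$ on that side.
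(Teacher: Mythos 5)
Your proof is correct, and it takes a genuinely different (and simpler) route than the paper's. Both arguments begin identically with an $(n,k,k^2)$-splitter to shrink the universe. From there the paper goes on to a \emph{three-level} construction: it layers a second $(k^2,k,\log k)$-splitter on top to carve the $k$-element image into $\log k$ blocks of size roughly $k/\log k$, builds a $(k^2,k/\log k,q)$-universal family for each block via the $k$-\textsc{Restriction} machinery of Naor et al., and then composes a function $f_a$, $f_b$, and a $\log k$-tuple $(g_1,\dots,g_{\log k})$ into one coloring. You instead stop after the first splitter and perform a direct enumeration: for each $f\in\mathcal{F}_1$, each $k$-subset $T\subseteq[k^2]$, and each $\psi:T\to[q]$, you add $F_{f,T,\psi}$, with a default color off $T$. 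The correctness argument you give (pick $f$ injective on $S$, set $T=f(S)$, pull back $\phi$) is exactly right, and the bookkeeping checks out: $\binom{k^2}{k}\le k^{2k}$ is indeed absorbed into $k^{\mathcal{O}(k)}$, and each function costs $\mathcal{O}(n)$ to tabulate, matching both the $\mathcal{O}(q^k k^{\mathcal{O}(k)}\log n)$ size and the $\mathcal{O}(q^k k^{\mathcal{O}(k)} n\log n)$ time. What the paper's extra machinery buys is the \emph{potential} for a sharper bound in the $k$-dependence (something like $q^k 2^{\mathcal{O}(k)}\log n$), since the $k$-\textsc{Restriction} framework and the $\log k$-way block split avoid the brute $\binom{k^2}{k}$ factor; but the theorem as stated only claims $k^{\mathcal{O}(k)}$, so your elementary enumeration already suffices, dispenses with the second splitter and Naor et al.'s Theorem~1 entirely, and is easier to verify.
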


Before proceeding to the proof of Theorem~\ref{thm:uni-family}, we state how we use it to derandomize the algorithm presented in Section~\ref{sec:rand-fpt-algo-treelc}. Let $(G,k)$ be an instance of \treelc. Assume that $(G,k)$ is a \yes\ instance of \treelc, and let $F$ be one of its solution. Furthermore, let $T=G/F$, where $T \in \bbT$ and $\cal W$ be the $T$-witness structure of $G$, and $\phi: V(G) \rightarrow [\ncol]$ be a $\cal W$-compatible coloring of $G$. Recall that our randomized algorithm starts by coloring vertices in $G$ uniformly and independently at random, and then uses this coloring to extract a witness structure out of each color classes. We then argued that any random coloring is ``equally good'' as that of $\phi$ with sufficiently high probability, which is given by a function of $k$ (and $\ell$). To derandomize this algorithm, we construct a family $\mathcal{F}$ of (coloring) functions from $[n]$ to $[\ncol]$. We argue that one of the colorings in the family that we compute is ``equally good'' as that of $\phi$. Recall that the number of vertices which we need to be colored in a specific way for a coloring to be $\cal W$-compatible is bounded by $6k+8\ell$ (see Definition~\ref{def:compatibility-ws-tc} and Theorem~\ref{thm:2conn-algo}). Let $S$ be the set of vertices in $G$ which needs to be colored in a specific way as per the requirements of Definition~\ref{def:compatibility-ws-tc}.
We can safely assume that $|S| = 6k + 8\ell$. If this is not the case we can add arbitrary vertices in $S$ to ensure this. Notice that any coloring $f$ of $G$ such that $f|_S=\phi|_{S}$ also satisfies the requirements of Definition~\ref{def:compatibility-ws-tc}. Let $\mathcal{F}$ be an $(n, 6k + 8\ell, \ncol)$-universal family constructed using Theorem~\ref{thm:uni-family}. Instead of using random coloring in the algorithm presented in Section~\ref{sec:rand-fpt-algo-treelc}, we can iterate over functions in $\calF$. Notice that we do not know $S$ but for any such $S$, we are guaranteed to find an appropriate coloring in one of the functions in $\calF$, which gives us the desired derandomization of the algorithm. 



In rest of the section, we focus on the prove of Theorem~\ref{thm:uni-family}. Overview of the proof is as follows: Let $S$ be a set of size $k$ in an $n$-sized universe $U$. We first \emph{reduce} this universe $U$ to another universe $U'$ whose size is bounded by $k^2$. We ensure that all elements of $S$ are mapped to different elements of $U'$ during this reduction. Let $Y$ be the range of $S$ in $U'$. We further partition $U'$ into $\log k$ parts such that $Y$ is almost equally divided among these partition. In other words, each partition contains (roughly) $k/\log k$ many elements of $Y$. For each of these parts, we explicitly store functions which represents all possible $q$-coloring of elements of $Y$ in this partition. Finally, we ``pull back'' these functions to obtain a coloring of $S$. 

\begin{definition}[Splitter~\cite{Naor-splitters}]\label{def:splitters} An $(n, k, q)$-splitter $\mathcal{F}$ is a family of functions from $[n]$ to $[q]$ such that for every set $S \subseteq [n]$ of size $k$ there exists a function $f \in \mathcal{F}$ that splits $S$ evenly. That is, for every $1 \le z, z' \le q$, $|f^{-1}(z) \cap S|$ and $|f^{-1}(z') \cap S|$ differ by at most 1.
\end{definition}


\begin{lemma}\label{lemma:splitter2} For every $1 \leq k,q \le n$ there is a family of $(n, k, q)$-splitter of size $\mathcal{O}(n^{\mathcal{O}(q)})$ which can be constructed in the same time.
\end{lemma}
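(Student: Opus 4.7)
The plan is to produce $\mathcal{F}$ by explicitly enumerating all ``threshold'' partitions of $[n]$ into $q$ ordered intervals, one for each choice of $q-1$ breakpoints from $\{0, 1, \ldots, n\}$, and then to show that for any fixed $k$-subset the correct splitting is realized by aligning the breakpoints with that subset. The key observation is that the elements of $[n]$ are already linearly ordered, so interval-based partitions are very flexible for splitting arbitrary subsets.

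First, I would define the family as follows: for every non-decreasing tuple $\mathbf{a} = (a_1, \ldots, a_{q-1})$ with $0 \le a_1 \le a_2 \le \cdots \le a_{q-1} \le n$, include in $\mathcal{F}$ the function $f_{\mathbf{a}} : [n] \to [q]$ given by $f_{\mathbf{a}}(x) = j$ iff $a_{j-1} < x \le a_j$, where by convention $a_0 := 0$ and $a_q := n$. The number of such tuples is at most $(n+1)^{q-1}$, so $|\mathcal{F}| = \mathcal{O}(n^{q-1}) = \mathcal{O}(n^{\mathcal{O}(q)})$, and the family can be enumerated within the same asymptotic time by iterating over all tuples lexicographically and writing down the succinct representation $\mathbf{a}$ of each function.

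Second, I would verify the splitter property. Fix any $S \subseteq [n]$ with $|S| = k$ and write its elements in sorted order as $s_1 < s_2 < \cdots < s_k$. Take $a_j := s_{\lfloor jk/q \rfloor}$ for $j = 1, \ldots, q-1$ (with the convention that $s_0 := 0$ if needed). Since $\lfloor jk/q \rfloor$ is non-decreasing in $j$, this is a valid tuple defining a function in $\mathcal{F}$. For this choice, $f_{\mathbf{a}}^{-1}(j) \cap S$ is exactly the set $\{s_i : \lfloor (j-1)k/q \rfloor < i \le \lfloor jk/q \rfloor\}$, which has cardinality $\lfloor jk/q \rfloor - \lfloor (j-1)k/q \rfloor$. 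An elementary check shows this quantity always lies in $\{\lfloor k/q \rfloor, \lceil k/q \rceil\}$, so for any two color classes $z, z' \in [q]$ the sizes of $f_{\mathbf{a}}^{-1}(z) \cap S$ and $f_{\mathbf{a}}^{-1}(z') \cap S$ differ by at most one, which is precisely the splitter condition of Definition~\ref{def:splitters}.

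The only delicate point is the floor arithmetic in the second step, which is standard but deserves a brief case split depending on whether $q$ divides $k$. Everything else is routine combinatorial enumeration, giving the claimed size bound and construction time of $\mathcal{O}(n^{\mathcal{O}(q)})$; I do not foresee any substantial obstacle beyond carefully tracking indices.
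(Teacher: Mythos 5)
Your proposal is correct and takes essentially the same approach as the paper: both construct the family from threshold (interval) partitions of the linearly ordered universe $[n]$ indexed by $q-1$ breakpoints. The only cosmetic difference is that the paper requires strictly increasing breakpoints (giving size $\binom{n}{q-1}$) while you allow non-decreasing ones, and you explicitly verify the splitter property via the floor-arithmetic argument, which the paper leaves implicit.
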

\begin{proof} Let $x_0 = 0$ and $x_{q} = n$. For every choice of $q - 1$ elements in $[n]$ such that $1 \le x_1 < x_2 < \dots < x_{q-1} \le n$ define a function $f :[n] \rightarrow [q]$ as follows.
  For $x \in [n]$ we set $f(x) = j$ if $x_{j-1} < x \le x_{j}$ where $j \in [q]$.  This family has size $\binom{n}{q -1}$, and can be constructed in time $\mathcal{O}(n^{\mathcal{O}(q)})$.
\end{proof}

Following is another well known result for construction of splitter when $q = k^2$. We use this result to \emph{reduce} the size of the universe.

\begin{proposition}[\cite{saurabh-book,Naor-splitters}]\label{prop:splitter1} For any $n, k \ge 1$ one can construct an $(n, k, k^2)$-splitter of size $\mathcal{O}(k^{\mathcal{O}(1)} \log n)$ in time $\mathcal{O}(k^{\mathcal{O}(1)} n \log n)$.
\end{proposition}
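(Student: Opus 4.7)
The plan is to prove this by translating the splitter condition into perfect hashing and then invoking the classical nested-splitter construction.

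First, I would observe that when $q = k^2$ and $|S| = k$, the splitting condition of Definition~\ref{def:splitters} is equivalent to \emph{injectivity} on $S$: the $k^2$ bucket sizes $|f^{-1}(z) \cap S|$ are nonnegative integers summing to $k$ that pairwise differ by at most one, so each must be $0$ or $1$. Consequently, an $(n,k,k^2)$-splitter is precisely an $(n, k^2)$-perfect hash family for $k$-subsets, i.e., a collection $\mathcal{F}$ of functions $[n] \to [k^2]$ such that every $k$-subset of $[n]$ is mapped injectively by some $f \in \mathcal{F}$. It therefore suffices to construct such a perfect hash family of size $k^{\mathcal{O}(1)} \log n$ within the claimed runtime.

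Second, I would build the perfect hash family in two stages. In stage one, a small ``coarse'' family $\mathcal{H}_1$ is constructed from $[n]$ to a moderate range $[R]$ with $R = k^{\mathcal{O}(1)}$, such that for every $k$-subset $S$ some $h \in \mathcal{H}_1$ is injective on $S$. This can be achieved via polynomial hashing over $\mathbb{F}_p$ for a prime $p \in [k^2, 2k^2]$: a pairwise-independent hash $h_{a,b}(x) = (ax + b) \bmod p$ is injective on a fixed $k$-subset with probability at least $1 - \binom{k}{2}/p \geq 1/2$, and one obtains explicit families of the required size via $\varepsilon$-biased sample spaces. In stage two, one composes $\mathcal{H}_1$ with a fixed, universe-independent perfect hash family from $[R]$ into $[k^2]$; since $R = k^{\mathcal{O}(1)}$ does not depend on $n$, the second-stage cost is absorbed in the $k^{\mathcal{O}(1)}$ factor. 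Taking all compositions yields the desired family.

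The main obstacle is the derandomization in stage one: a naive union bound over all $\binom{n}{k}$ subsets is far too weak to reach the $k^{\mathcal{O}(1)} \log n$ bound. The key idea, due to Naor, Schulman, and Srinivasan~\cite{Naor-splitters}, is to apply splitters recursively on shrinking universes, invoking the cheap construction of Lemma~\ref{lemma:splitter2} at the small-universe base case, where the $n^{\mathcal{O}(q)}$ size becomes just $k^{\mathcal{O}(1)}$. With these ingredients in place, correctness follows from injectivity of the composition, the overall family size is $k^{\mathcal{O}(1)} \log n$, and the runtime is $k^{\mathcal{O}(1)} n \log n$ after accounting for the $n \log n$ cost of evaluating the constructed functions on all of $[n]$.
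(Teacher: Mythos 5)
Your opening reduction is correct: with $q=k^2$ buckets and $|S|=k$, the ``differ by at most one'' condition forces each bucket to contain $0$ or $1$ elements of $S$, so an $(n,k,k^2)$-splitter is exactly a $k$-perfect hash family from $[n]$ into $[k^2]$. Note, though, that the paper does not prove this proposition; it is imported as a black box from Naor, Schulman, and Srinivasan and from the Cygan et al.\ textbook, so there is no paper proof to compare against.

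The gap is in your stage one. You take a prime $p\in[k^2,2k^2]$ and claim that $h_{a,b}(x)=(ax+b)\bmod p$ is pairwise independent over $[n]$, hence injective on a fixed $k$-set with probability at least $1-\binom{k}{2}/p\geq 1/2$. This is false once $n>p$: if $x\neq y$ but $x\equiv y\pmod p$, then $h_{a,b}(x)=h_{a,b}(y)$ for \emph{every} choice of $(a,b)$. So the collision event for such a pair has probability $1$, not $1/p$, and any $k$-subset of $[n]$ containing such a pair (for instance $\{x,x+p\}\subseteq S$) is never hashed injectively by \emph{any} member of your family --- the family is not a splitter at all. To get pairwise independence on $[n]$ via affine hashing you must take $p>n$, but then the natural multiplier family $\{h_a : a\in[p]\}$ has size $\Theta(n)$ rather than $k^{\mathcal{O}(1)}\log n$. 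The substance of the NSS theorem is precisely how to prune this $\Theta(n)$-sized family down to $k^{\mathcal{O}(1)}\log n$ while retaining coverage of all $k$-subsets; your appeal to $\varepsilon$-biased spaces and to a recursion bottoming out at Lemma~\ref{lemma:splitter2} gestures at this but does not carry it out. In fact, as written the recursion also over-counts: invoking Lemma~\ref{lemma:splitter2} with $q=k^2$ on a universe of size $k^{\mathcal{O}(1)}$ produces a family of size $\bigl(k^{\mathcal{O}(1)}\bigr)^{k^2-1}=k^{\mathcal{O}(k^2)}$, which is far larger than the promised $k^{\mathcal{O}(1)}$ factor. The missing ingredient is the actual universe-reduction/derandomization step of NSS, and without it the argument does not go through.
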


Next, we look at the \textsc{$k$-Restriction} problem defined by Naor et al.~\cite{Naor-splitters}. Before defining the problem, we define some terminologies that will be useful. For a fixed set of alphabets, say $\{1, 2, \dots, b\}$ and a vector \emph{vector} $V$, which is an ordered collection of alphabets, the length of $V$ is the size of the collection. We represent $n$ length vector $V$ as $(v_1, v_2, \dots, v_n)$. For a positive integer $i \in [n]$, $V[i]$ denotes the alphabet at the $i^{th}$ position of $V$. Similarly, for an (index) set $S \subseteq [n]$, $V[S]$ denotes the $|S|$ sized vector obtained by taking alphabet at $i^{th}$ position in $V$, for each $i \in S$. In other words, if $S = \{i_1, i_2, \dots, i_k\}$ for $i_1 < i_2 < \dots < i_k$, then $V[S] = (V[i_1], V[i_2], \dots , V[i_k])$. An input to the \textsc{$k$-Restriction} problem is a set $\mathcal{C} = \{C_1, C_2, \dots, C_m\}$ called as a $k$-restrictions, where $C_j \subseteq [b]^k$ for $j \in [m]$ and an integer $n$. Here, $[b]^k$ denotes the set of all possible vectors of length $k$ over $[b]$, and $m$ denotes the size of the $k$-restrictions. We say that a collection $\mathcal{V}$ of vectors \emph{obeys} $\mathcal{C}$ if for all $S \subseteq [n]$ which is of size $k$ and for all $C_j \in \mathcal{C}$, there exists $V \in \mathcal{V}$ such that $V[S] \in C_j$. The goal of \textsc{$k$-Restriction} problem is to find a collection $\mathcal{V}$ of as small cardinality as possible, which obeys $\mathcal{C}$. Let $c = \min_{j \in [m]}|C_i|$, and let $T$ be the time needed to check whether or not the vector $V$ is in $C_j$. We next state the result of Naor et al.~\cite{Naor-splitters}, which will be useful for proving Theorem~\ref{thm:uni-family}.

\begin{proposition}[Theorem~1\ \cite{Naor-splitters}]\label{prop:k-restri} For any \textsc{$k$-Restriction} problem with $b \leq n$, there is a deterministic algorithm that outputs a collection obeying $k$-restrictions, which has size at most $(k \log n + \log m)/\log (b^k / (b^k - c))$. Moreover, the algorithm runs in time $\mathcal{O}\big( \frac{b^k}{c}\binom{n}{k} \cdot m\cdot T \cdot n^{k} \big)$. Here, $b$ is the size of the alphabet set, $m$ is the size of the $k$-restrictions, $n$ is the size of the vectors in the output set, and $c$ is the size of the smallest collection in the $k$-restrictions. 
\end{proposition}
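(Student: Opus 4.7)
The plan is to prove the claim by first establishing the size bound via a probabilistic existence argument and then converting that argument into a deterministic algorithm through the method of conditional expectations with an explicit pessimistic estimator.

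For the existence result, I would pick a collection $\mathcal{V}=\{V_1,\dots,V_N\}$ of length-$n$ vectors over $[b]$ independently and uniformly at random. For any fixed $k$-subset $S\subseteq [n]$ and any fixed $C_j\in\mathcal{C}$, a single random vector $V$ satisfies $V[S]\in C_j$ with probability $|C_j|/b^k\ge c/b^k$, so the probability that none of the $N$ vectors witnesses $(S,C_j)$ is at most $\bigl((b^k-c)/b^k\bigr)^{N}$. Summing over the at most $\binom{n}{k}\cdot m$ such pairs shows the expected number of uncovered restrictions is at most $\binom{n}{k}\cdot m\cdot\bigl((b^k-c)/b^k\bigr)^N$. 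Taking logarithms and using $\log\binom{n}{k}\le k\log n$, this quantity drops strictly below $1$ as soon as $N\ge (k\log n+\log m)/\log\bigl(b^k/(b^k-c)\bigr)$, which is exactly the size bound in the statement. Hence some deterministic collection of this size obeys $\mathcal{C}$.

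To convert this into a deterministic algorithm, I would use a pessimistic estimator. Think of $\mathcal{V}$ as being specified one vector at a time, and within each vector one coordinate at a time. For any partial assignment $\alpha$ of the coordinates, define
\[\Phi(\alpha)\;=\;\sum_{(S,C_j)}\,\Pr\!\bigl[\mathcal{V}\text{ fails to cover }(S,C_j)\,\big|\,\alpha\bigr],\]
where the probability is over uniformly random completions of the remaining coordinates. The initial value $\Phi(\emptyset)$ is exactly the bound from the probabilistic argument and is strictly less than $1$ for $N$ as chosen. By linearity of expectation, at every coordinate step some choice of alphabet from $[b]$ does not increase $\Phi$; the algorithm always commits to such a choice. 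Once every coordinate of every vector is fixed, $\Phi$ equals the actual number of uncovered restrictions, which is a non-negative integer still strictly less than $1$, hence $0$. The resulting $\mathcal{V}$ therefore obeys $\mathcal{C}$.

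The main obstacle I expect lies in the running time accounting. The key observation is that $\Phi(\alpha)$ factorises: for each pair $(S,C_j)$, the conditional failure probability is a product over the $N$ vectors $V_i$ of the fraction of elements of $C_j$ whose entries are consistent with the currently fixed coordinates of $V_i$ at positions $S$. Each such fraction can be evaluated in $O(|C_j|\cdot T)\le O(b^k\cdot T)$ by scanning $C_j$, and recomputed incrementally as coordinates are fixed one at a time. Trying each of the $b$ candidate alphabets at each of the $n$ coordinates of each of the $N=O\bigl((b^k/c)(k\log n+\log m)\bigr)$ vectors, and summing over the $\binom{n}{k}\cdot m$ pairs contributing to $\Phi$, then yields the bound $\mathcal{O}\bigl((b^k/c)\binom{n}{k}\cdot m\cdot T\cdot n^{k}\bigr)$ claimed in the proposition, where the extra $n^k$ factor absorbs the cost of aggregating contributions across the coordinates within each vector. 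Making this amortisation precise (so that the naive $b^k\cdot T$ per-pair evaluation does not multiply into the bound) is the delicate step of the proof.
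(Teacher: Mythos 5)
First, a point of reference: the paper does not prove this proposition at all --- it is imported verbatim as Theorem~1 of Naor et al.~\cite{Naor-splitters} and used as a black box --- so there is no in-paper proof to compare yours against; the natural comparison is with the original argument. Your overall route (probabilistic existence of a collection of the stated size via a union bound, then derandomization by conditional expectations with a pessimistic estimator) is sound in spirit and close to that original, which also proceeds by conditional probabilities but \emph{greedily}: it builds the collection one vector at a time, each new vector chosen coordinate-by-coordinate so as to cover at least a $c/b^k$ fraction of the still-uncovered pairs $(S,C_j)$, and the size bound follows from the covering recurrence rather than from fixing $N$ in advance. Your existence computation and the ``$\Phi<1$ at the end, hence $0$'' termination argument are correct.

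There are, however, two genuine problems. First, the estimator as you describe it is not the failure probability: conditioned on a partial assignment, the probability that the pair $(S,C_j)$ remains uncovered is $\prod_{i}\bigl(1-c_i/b^{u_i}\bigr)$, where $c_i$ is the number of elements of $C_j$ consistent with the already-fixed $S$-coordinates of $V_i$ and $u_i$ is the number of unfixed $S$-coordinates of $V_i$; it is not ``a product over the vectors of the fraction of elements of $C_j$ consistent with the fixed coordinates''. This is repairable, but as written the quantity you track is wrong. Second, and more substantively, the running-time bound --- which is precisely the quantitative content for which the proposition is invoked --- is the part you explicitly leave open, and it does not follow from your sketch. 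Even with incremental maintenance, trying each of the $b$ letters at each of the $n$ coordinates of each of the $N=\Theta\bigl(\frac{b^k}{c}(k\log n+\log m)\bigr)$ vectors, and updating the $\binom{n-1}{k-1}\cdot m$ affected pairs by rescanning sets $C_j$ of size up to $b^k$, yields a bound of roughly $\frac{b^k}{c}\binom{n}{k}\, m\cdot b^{k}\cdot b\cdot \mathrm{poly}(k,\log n,\log m)$; since the only hypothesis is $b\le n$ and $T$ is merely the cost of one membership test, the surplus factor $b\cdot\mathrm{poly}(k,\log n,\log m)/T$ is not absorbed by the slack $T\cdot n^k/b^k$ when $b$ is close to $n$. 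So the size bound is established by your argument, but the claimed time bound $\mathcal{O}\bigl(\frac{b^k}{c}\binom{n}{k}\cdot m\cdot T\cdot n^{k}\bigr)$ is not; closing that gap essentially requires the one-vector-at-a-time greedy scheme of~\cite{Naor-splitters} (or an equally careful amortization), which your proposal acknowledges but does not supply.
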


Notice that a function from $[n]$ to $[q]$ can be seen as an $n$-length vector over the alphabet set $[q]$. Consider the case when each $C_j$ contains exactly one vector of length $k$ over $[q]$, \emph{i.e.} $\mathcal{C} = \{\{C\} \mid C \in [q]^k\}$, $m = q^k$, $c = 1$, and $T = \mathcal{O}(n)$. The output of \textsc{$k$-Restriction} on this input is exactly an $(n, k, q)$-universal family. Therefore, we obtain the following corollary. 

\begin{corollary} \label{cor:uni-family-brute} For any $n, k, q \ge 1$, one can construct an $(n, k, q)$-universal family of size $\mathcal{O}(q^k\cdot k \cdot (\log n + \log q))$ in time $\mathcal{O}(q^k \cdot n^{\mathcal{O}(k)})$.
\end{corollary}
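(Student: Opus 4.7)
The plan is to obtain this result as a direct instantiation of the \textsc{$k$-Restriction} meta-theorem of Naor et al.\ (Proposition~\ref{prop:k-restri}). The only work is to phrase ``being an $(n,k,q)$-universal family'' as a $k$-restriction and then unwind the bounds guaranteed by Proposition~\ref{prop:k-restri} for this particular choice of parameters.

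First I would set up the parameters. Take the alphabet to have size $b = q$, so that length-$k$ vectors range over $[q]^k$, and define the $k$-restrictions by $\mathcal{C} = \{\{C\} \mid C \in [q]^k\}$; that is, each $C_j \in \mathcal{C}$ is a singleton containing exactly one length-$k$ vector. With this choice, $m = |\mathcal{C}| = q^k$, the minimum restriction size is $c = \min_j |C_j| = 1$, and testing membership of a length-$k$ vector in a singleton takes time $T = \mathcal{O}(k)$. Identifying each $n$-length vector $V$ over $[q]$ with the function $f_V : [n] \to [q]$ defined by $f_V(i) = V[i]$, the definition of ``$\mathcal{V}$ obeys $\mathcal{C}$'' becomes: for every $S \subseteq [n]$ of size $k$ and every vector $C \in [q]^k$ there is some $V \in \mathcal{V}$ with $V[S] = C$. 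Under the identification, this is exactly the defining condition of an $(n,k,q)$-universal family, so any collection obeying $\mathcal{C}$ yields a universal family of the same cardinality, and conversely.

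Next I would plug the parameters into the bounds of Proposition~\ref{prop:k-restri}. The size guarantee is
\[
\frac{k \log n + \log m}{\log\bigl(b^k/(b^k-c)\bigr)} \;=\; \frac{k\log n + k\log q}{\log\bigl(q^k/(q^k-1)\bigr)}.
\]
For $q^k \ge 2$ the identity $\log\bigl(q^k/(q^k-1)\bigr) = \log\bigl(1 + 1/(q^k-1)\bigr) = \Theta(1/q^k)$ gives the claimed size bound of $\mathcal{O}(q^k \cdot k \cdot (\log n + \log q))$; the trivial case $q^k = 1$ is handled by returning any single function. For the running time, Proposition~\ref{prop:k-restri} gives $\mathcal{O}\bigl((b^k/c) \cdot \binom{n}{k} \cdot m \cdot T \cdot n^k\bigr) = \mathcal{O}\bigl(q^{k} \cdot n^{k} \cdot q^{k} \cdot k \cdot n^k\bigr)$, which is $\mathcal{O}(q^k \cdot n^{\mathcal{O}(k)})$ because the assumption $b \le n$ in Proposition~\ref{prop:k-restri} gives $q \le n$ and hence the second $q^k$ factor is absorbed into $n^{\mathcal{O}(k)}$.

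Since the proof is an almost mechanical instantiation of a black-box result, there is no genuine obstacle; the only things to be careful about are the logarithm estimate $\log(1 + 1/(q^k-1)) = \Theta(1/q^k)$ and the degenerate parameter regimes ($q^k = 1$, or $q$ outside the $b \le n$ range of Proposition~\ref{prop:k-restri}, which are both trivial to dispatch separately).
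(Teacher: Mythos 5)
Your proposal is correct and follows essentially the same route as the paper: the paper also obtains the corollary by instantiating Proposition~\ref{prop:k-restri} with $b=q$, $\mathcal{C}=\{\{C\} \mid C \in [q]^k\}$, $m=q^k$, $c=1$ and identifying functions $[n]\to[q]$ with $n$-length vectors. Your extra care with the estimate $\log\bigl(1+1/(q^k-1)\bigr)=\Theta(1/q^k)$ and the degenerate case $q^k=1$ is fine (the paper takes $T=\mathcal{O}(n)$ rather than $\mathcal{O}(k)$, which changes nothing), so there is nothing to fix.
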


Notice that we can not directly employ Corollary~\ref{cor:uni-family-brute} to construct the desired family, since its running time is $\mathcal{O}(q^k \cdot n^{\mathcal{O}(k)})$. Therefore, we carefully use splitter to construct an $(n, k, q)$-universal family to obtain the desired running time.

\emph{Proof of Theorem~\ref{thm:uni-family}.} 
  For the sake of clarity in the notations, we assume that $\log k$ and $k/\log k$ are integers.
  Let $\mathcal{A}$ be a $(n, k, k^2)$-splitter obtained by Proposition~\ref{prop:splitter1}.
  Let $\mathcal{B}$ be a $(k^2, k, \log k)$-splitter obtained by Lemma~\ref{lemma:splitter2}.
  Let $\mathcal{D}$ be a $(k^2, k/\log k, q)$-universal family obtained by Corollary~\ref{cor:uni-family-brute}. We construct $\mathcal{F}$ as follows. For every function $f_a$ in $\mathcal{A}$, $f_b$ in $\mathcal{B}$, and $\log k$ functions $g_1, g_2, \dots, g_{\log k}$ in $\cal D$, we construct a tuple $f=(f_a, f_b, g_1, g_2, \dots, g_{\log k})$, and add it to $\mathcal{F}$. We note here that $g_1, g_2, \dots, g_{\log k}$ need not be different functions.
  For $f \in \mathcal{F}$, we define $f : [n] \rightarrow [q]$ as follows. For $x \in [n]$, we have $f(x) = g_r(f_b(f_a(x)))$, where $r = f_b(f_a(x))$.

  We first argue about the size of $\cal F$ and the time needed to construct it. Notice that $|\mathcal{F}| \le |\mathcal{A}||\mathcal{B}||\mathcal{D}|^{\log k}$. We know $|\mathcal{A}| \le k^{\mathcal{O}(1)} \log n$, $|\mathcal{B}| \le \mathcal{O}(k^{\mathcal{O}(\log k)})$ and $|\mathcal{D}| \le q^{k / \log k} k^{\mathcal{O}(k / \log k)}$ by Proposition~\ref{prop:splitter1}, Lemma~\ref{lemma:splitter2}, and Corollary~\ref{cor:uni-family-brute}, respectively. This implies that $|\mathcal{F}| \in \mathcal{O}(q^{k} \cdot k^{\mathcal{O}(\log k)} \cdot \log n)$. Note that $\mathcal{A, B, D}$ can be constructed in time $\mathcal{O}(k^{\mathcal{O}(1)} n \log n)$, $\mathcal{O}(k^{\mathcal{O}(\log k)})$, and $\mathcal{O}(q^k \cdot k^{\mathcal{O}(k / \log k)})$, respectively. This implies that time required to construct $\mathcal{F}$ is bounded by $\mathcal{O}(q^{k} \cdot k^{\mathcal{O}(k)} \cdot n \log n)$.

  It remains to argue that $\mathcal{F}$ has the desired properties. Consider $S\subseteq [n]$ of size $k$ and $\phi: S \rightarrow [q]$. We prove that there exists a function $f \in \mathcal{F}$ such that $f|_S \equiv \phi$.
  By the definition of splitter, there exists $f_a \in \mathcal{A}$ such that $f_a$ evenly splits $S$ (see Definition~\ref{def:splitters}).
  Since $|S| < k^2$, for every $y \in [k^2]$, $|f_a^{-1}(y) \cap S|$ is either $0$ or $1$.
  Let $Y = \{y_1, y_2, \dots, y_{k}\}$ be a subset of $[k^2]$ such that $y_1 < y_2 < \dots < y_{k}$ and
  $|f_a^{-1}(y_i) \cap S| = 1$, for all $i \in [k]$.
  For $j = k/\log k$, we mark every $j^{th}$ element in set $Y$ marking $\log{k} - 1$ indices altogether. In other words, construct a subset $Y'$ of $Y$ of cardinality $\log{k} - 1$ such that $Y' = \{y_{1j}, y_{2j}, y_{3j} \dots, y_{(\log{k} - 1)j} \}$. We use the set $Y'$ to partition $[k^2]$ in a way that every partition contains almost $k/\log k$ many elements of $Y$. 
  Let $y_0 = 0$ and $y_{(\log k) j} = k^2$ and define set $Y_r = \{y \in Y \mid y_{r-1} < y \le y_{r}\}$ for $r \in [\log k]$. Recall that a $\mathcal{B}$ is $(k^2, k, \log k)$-splitter family obtained by Lemma~\ref{lemma:splitter2}.
  By construction, there exists a function $f_b$ which corresponds to subset $Y'$ of $\log{k} - 1$ many indices. In other words, there is a function $f_b$ such that $f_b^{-1}(r)$ contains all the elements in $Y_r$, for each $r$ in $[\log k]$.
  We note that size of $f_b^{-1}(r)$ could be as large as $k^2$. Recall that $\mathcal{D}$ is a $(k^2, k/\log k, q)$-universal family. Therefore, for every  $r \in [\log k]$ there exists $g_r \in \mathcal{D}$ such that $g_r|_{Y_r} \equiv \phi|_{Y_r}$.
  Consider a function $f=(f_a, f_b, g_1, g_2, \dots, g_{\log k})$ in $\mathcal{F}$ where $f_a, f_b$ and $g_r$ satisfies the property mentioned above. The function $f_a$ is bijective on $S$ and $f(S) = Y$. The function $f_b$ partitions $Y$ into $\log k$ many parts by mapping $Y$ into $Y_1, Y_2, \dots , Y_{\log k}$. For each $Y_r$ there exists a function $g_r$ which gives the desired coloring of elements in $Y_r$ and hence for the elements in $S$.
  Since we considering all possible combinations of $f_a, f_b$ and $\log k$ functions in $\mathcal{D}$, there exists a function $f$ such that $f|_S \equiv \phi$, which proves the theorem.  
\qed

\section{Non-existence of a Polynomial Kernel for \treelc} \label{sec:no-poly-kernel-treelc}
In this section, we show that \treelc\ does not admit a polynomial kernel unless \NP\ $\subseteq$ \CONPpoly. We note that \treec\
(\textsc{Tree Contraction}) does not admit a polynomial kernel unless \NP\ $\subseteq$ \CONPpoly~\cite{treePathContract}. We give a reduction from \treec\ to \treelc\ as follows.

\subparagraph*{Reduction.} Let $(G,k)$ be an instance of \treec. We create an instance $(G',k')$ of \treelc\ as follows. Initially, we have $G=G'$. Let $v^*$ be an arbitrarily chosen vertex in $V(G)$. For each $i \in [\ell]$, we add a cycle $(v^*,w^i_1, w^i_2, \ldots, w^i_{k+1})$ on $k+2$ vertices to $G'$, which pairwise intersect at $v^*$, and we set $k'=k$. This completes the description of the reduction. In the following lemma we establish equivalence between the two instances.

\begin{lemma}\label{lem:eq-treec-treelc}
$(G,k)$ is a \yes\ instance of \treec\ if and only if $(G',k')$ is a \yes\ instance of \treelc. 
\end{lemma}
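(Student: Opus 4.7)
The plan is to prove both directions of the equivalence by tracking the cyclomatic number $\mu(H) := |E(H)| - |V(H)| + 1$, using the characterization (noted in the preliminaries) that a connected graph $H$ lies in $\mathbb{T}_\ell$ if and only if $\mu(H) \le \ell$.

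For the forward direction, suppose $S \subseteq E(G)$ with $|S| \le k$ is a solution to $(G,k)$ in \treec, so that $G/S$ is a tree. I would use the same $S$ as a candidate solution for $(G', k')$. Since $S \subseteq E(G)$ avoids the newly added cycle edges, $G'/S$ consists of the tree $G/S$ with the $\ell$ cycles $C_1, \dots, C_\ell$ attached at the supernode containing $v^*$. A direct count gives $|V(G'/S)| = |V(G/S)| + \ell(k+1)$ and $|E(G'/S)| = (|V(G/S)| - 1) + \ell(k+2)$, so $G'/S$ is connected with $\mu(G'/S) = \ell$, hence $G'/S \in \mathbb{T}_\ell$.

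For the backward direction, suppose $S' \subseteq E(G')$ with $|S'| \le k$ satisfies $G'/S' \in \mathbb{T}_\ell$. I would set $S = S' \cap E(G)$ and show that $G/S$ is a tree. Write $S'_i = S' \cap E(C_i)$, so $|S| + \sum_i |S'_i| \le k$. The key structural observation is that the internal vertices $w^i_j$ of each $C_i$ have degree $2$ in $G'$, and each $C_i$ has length $k+2$: contracting any $j \le k-1$ of its edges shrinks $C_i$ to a cycle of length $k+2-j \ge 3$ which still contributes $1$ to the cyclomatic number, while contracting $k$ edges triggers a parallel-edge identification at the final triangle step and collapses $C_i$ to a single edge. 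Since the total budget is $k$, at most one cycle can collapse this way, and only when $S = \emptyset$ and $S'_j = \emptyset$ for all other $j$. If no cycle collapses then $\mu(G'/S') = \mu(G/S) + \ell$, so $\mu(G'/S') \le \ell$ forces $\mu(G/S) = 0$, and $G/S$ is a tree as required.

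The hard part will be the remaining degenerate case, where exactly one $C_i$ is fully collapsed and $S = \emptyset$. Here the cyclomatic count yields $\mu(G) + (\ell - 1) \le \ell$, i.e.\ $\mu(G) \le 1$, and I would need to argue that $G$ is in fact already a tree so that $S = \emptyset$ serves as a solution for $(G,k)$. Carefully ruling out the possibility that $G$ harbours a single long cycle while $G'$ is nonetheless contractible into $\mathbb{T}_\ell$ by exhausting the entire budget on some $C_i$ is where I expect the main technical subtlety to lie, and is the step I would spend most care on.
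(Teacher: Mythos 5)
Your forward direction (tracking the cyclomatic number $\mu = |E|-|V|+1$ and observing that the attached cycles each add exactly $1$ to $\mu$ while $G/S$ contributes $0$) is correct and matches what the paper asserts in one line. Your decomposition in the backward direction, splitting $S'$ into its trace on $E(G)$ and on each $E(C_i)$ and noting that the contributions to the cyclomatic count are additive, is also sound, and is carried out more carefully than in the paper, which simply asserts that a minimal solution $S$ satisfies $S\cap E(C_i)=\emptyset$ and leaves it at that.

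The point to emphasize, however, is that the ``degenerate case'' you flag is not a technicality you should expect to massage away: it is a genuine hole, and in fact it is a hole in the paper's own proof (and, as the construction is written, in the lemma itself). The paper's claim that minimality forces $S\cap E(C_i)=\emptyset$ fails precisely when $|S\cap E(C_i)|=k$: then, as you observe, $C_i$ collapses to a single edge and $\mu$ drops by one, so removing any edge of $S$ brings back the cycle and is \emph{not} a solution, i.e.\ $S$ really is inclusion-minimal (indeed even minimum) while still meeting $E(C_i)$. A concrete counterexample to the lemma as stated: take $G$ to be a chordless cycle of length, say, $10$ and $k=3$. Then $(G,k)$ is a \no\ instance of \treec\ (a $p$-cycle needs at least $p-2$ contractions to become a tree), but $\mu(G)=1$, so contracting $k=3$ consecutive edges of one added $(k+2)$-cycle $C_i$ reduces $\mu(G')$ from $\ell+1$ to $\ell$, making $(G',k')$ a \yes\ instance of \treelc. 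So the implication you are trying to prove in the backward direction is false for this reduction, and no amount of care in the ``$\mu(G)\le 1$'' case will rescue it. The natural repair is to make the attached cycles strictly longer than the budget allows to collapse, e.g.\ on $k+3$ (or $2k+2$) vertices; then collapsing any $C_i$ costs at least $k+1$ contractions, the degenerate branch of your case analysis disappears, and both your argument and the paper's one-line minimality argument go through cleanly.
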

\begin{proof}
In the forward direction, let $(G,k)$ be a \yes\ instance of \treec, and $S$ be one of its solution. Notice that $G'/S \in \mathbb{T}_\ell$, and $|S| \leq k'=k$. Therefore, $(G',k')$ is a \yes\ instance of \treelc. In the reverse direction, let $(G',k')$ be a \yes\ instance of \treelc, and $S$ be one of its (minimal) solution. Recall that for each $i \in [\ell]$ we have a cycle $C_i=(v^*,w^i_1, w^i_2, \ldots, w^i_{k+1})$ on $k+2$ vertices in $G'$, which pairwise intersect at $v^*$. This together with minimality of $|S|$ implies that $S \cap E(C_i)=\emptyset$. Furthermore, $G'[\{v\} \cup (\cup_{i \in [\ell]}V(C_i))]$ belongs to $\mathbb{T}_\ell \setminus \mathbb{T}_{\ell-1}$. Therefore, $G'[V(G)]/S$ must be a tree.
\end{proof}

\begin{theorem}\label{thm:no-poly-kernel-treelc}
\treelc\ does not admit a polynomial kernel unless {\rm \NP} $\subseteq$ {\rm \CONPpoly}.
\end{theorem}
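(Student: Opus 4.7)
The plan is to use the reduction just described together with the standard framework of polynomial parameter transformations (PPT) to transfer the kernel lower bound from \treec\ to \treelc. The starting point is the known result of Heggernes et al.~\cite{treePathContract} that \treec\ does not admit a polynomial kernel unless $\NP \subseteq \CONPpoly$, so it suffices to exhibit a PPT from \treec\ to \treelc.

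First, I would observe that for any fixed $\ell \in \mathbb{N}$ the reduction preceding Lemma~\ref{lem:eq-treec-treelc} runs in polynomial time: it adds exactly $\ell$ cycles of length $k+2$ through the single vertex $v^*$, which contributes $\ell(k+1)$ new vertices and $\ell(k+2)$ new edges. Since $\ell$ is a constant of the target problem (it appears in its definition, not as a separate input parameter), this is clearly polynomial in $|G|+k$. Second, the parameter is preserved exactly: $k' = k$, so in particular $k'$ is polynomially bounded in $k$. Together with the equivalence supplied by Lemma~\ref{lem:eq-treec-treelc}, this certifies that the reduction is a valid polynomial parameter transformation from \treec\ (parameterized by $k$) to \treelc\ (parameterized by $k$).

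The proof will then invoke the standard fact (see, e.g., Bodlaender et al., or \cite{saurabh-book}) that if problem $A$ PPT-reduces to problem $B$ and $A$ does not admit a polynomial kernel under some complexity assumption, then neither does $B$ under the same assumption. Applied here with $A=$\treec\ and $B=$\treelc, this immediately yields the desired conclusion that \treelc\ does not admit a polynomial kernel unless $\NP \subseteq \CONPpoly$.

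There is essentially no obstacle: the technical content already sits in Lemma~\ref{lem:eq-treec-treelc} (whose only subtle point is that a minimal solution cannot touch any edge of the added $\ell$ cycles through $v^*$, forcing $G'[V(G)]/S$ itself to be a tree and showing that the solution effectively lives in the original graph). The main theorem is therefore a short, one-paragraph argument: note that the reduction is a PPT, invoke the lower bound of~\cite{treePathContract} for \treec, and conclude.
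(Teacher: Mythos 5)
Your proposal is correct and follows essentially the same route as the paper: both treat the construction preceding Lemma~\ref{lem:eq-treec-treelc} as a polynomial parameter transformation from \treec\ to \treelc\ and then invoke the known kernel lower bound of Heggernes et al. for \treec. The only small point the paper makes explicit that you leave implicit is that the reduction also certifies \NPC ness of \treelc, which is one of the technical hypotheses of the PPT transfer theorem (the unparameterized target problem must lie in \NP\ and the source be \NPH); this is covered by your citation of the standard framework, so the arguments are substantively identical.
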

\begin{proof}
Follows from construction of an instance $(G',k')$ of \treelc\ for a given instance $(G,k)$ of \treec, Lemma~\ref{lem:eq-treec-treelc}, existence of no polynomial kernel for \treec, and \NPC ness of \treelc. Here, we note that the reduction from \treec\ to \treelc\ is also a proof for \NPC ness of \treelc. 
\end{proof}

\section{PSAKS for \treelc}
In this section, we design a PSAKS for \treelc, which complements the result that \treelc\ does not admit a polynomial kernel assuming {\rm \NP} $\not \subseteq$ {\rm \CONPpoly} (Section~\ref{sec:no-poly-kernel-treelc}). 

Let $(G,k)$ be an instance of \treelc. The algorithm starts by applying Reduction Rules~\ref{rr:k-no-tco} to~\ref{rr:deg-one-rule} (if applicable, in that order). Next, we state the following lemma which will be useful in designing a reduction rule which will be employed for bounding the sizes of induced paths. 

\begin{lemma} \label{lemma:no-long-path} 
Let $(G,k)$ be an instance of \treelc\, and $P = (u_0, u_1, \ldots, u_q, u_{q+1})$ be a path in $G$, where $q\geq k+2$, and for each $i \in [q+1]$ we have $deg(u_i)=2$. Then no minimal solution $F$ to \treelc\ in $(G,k)$ with $|F| \leq k$ contains an edge incident to $V(P) \setminus \{u_0,u_{q+1}\}$.
\end{lemma}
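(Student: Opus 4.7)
The plan is a contradiction argument: I assume $F$ is a minimal solution with $|F| \le k$ containing some edge incident to a vertex $u_i$ with $i \in \{1, \ldots, q\}$, and exhibit a strict subset $F' \subsetneq F$ that is still a solution. Since $\deg_G(u_i)=2$ with both neighbors on $P$, this edge must be a path edge $e_j = u_{j-1}u_j$ for some $j \in \{1, \ldots, q+1\}$. Let $\mathcal W$ be the $(G/F)$-witness structure of $G$ and let $W \in \mathcal W$ be the witness set containing $u_{j-1}$ and $u_j$.

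First I pin down the shape of $W$. Because every vertex $u_\ell$ with $\ell \in \{1, \ldots, q\}$ has only its two path neighbors in $G$, the set $W \cap V(P)$ is a contiguous subpath $\{u_a, u_{a+1}, \ldots, u_b\}$ with $a \le j-1 < j \le b$, so $b - a \ge 1$; moreover $W$ can contain non-path vertices only through $u_0$ or $u_{q+1}$, since these are the only path vertices with non-path neighbors. A pigeonhole argument rules out $(a,b) = (0, q+1)$: this would force all $q+1$ path edges into $F$, contradicting $|F| \le k < q+1$. By left-right symmetry I may therefore assume $b \le q$.

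Set $F' := F \setminus \{e_j\}$. Deleting $e_j$ splits $W$ into two connected pieces: a ``left'' part $W_L$ consisting of $\{u_a, \ldots, u_{j-1}\}$ together with any non-path vertices of $W$ attached to $u_0$, and a purely interior segment $W_R = \{u_j, \ldots, u_b\}$. The key claim is that $G/F'$ is obtained from $G/F$ by subdividing the edge between the vertices for $W$ and $W''$ using a new degree-$2$ vertex for $W_R$, where $W''$ is the witness set containing $u_{b+1}$; indeed, the vertex for $W_R$ has exactly two neighbors in $G/F'$, namely the vertex for $W_L$ (via $e_j$) and the vertex for $W''$ (via $u_bu_{b+1}$), while every other adjacency of the vertex for $W$ is inherited by the vertex for $W_L$, and every other witness set of $\mathcal W$ remains untouched. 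Once this subdivision picture is established, Observation~\ref{graph-class-prop}(2) gives $G/F' \in \mathbb{T}_\ell$, and since $F' \subsetneq F$ this contradicts the minimality of $F$.

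The main obstacle is verifying that the vertices for $W_L$ and $W''$ are \emph{not} themselves adjacent in $G/F'$: if they were, the subdivision would be replaced by a triangle that can push the cyclomatic number strictly above $\ell$. Suppose some edge $uv \in E(G)$ with $u \in W_L$ and $v \in W''$ existed besides $u_bu_{b+1}$. The degree-$2$ condition rules out $u \in \{u_a, \ldots, u_{j-1}\}$, because any such $u$ has both its neighbors already inside $W$; hence necessarily $a = 0$ and $u$ is either $u_0$ itself or a non-path vertex of $W$ attached to $u_0$. A symmetric analysis on $v$ forces $v$ to be either $u_{q+1}$ or a non-path vertex of $W''$, since any interior $v$ would force $u$ to be its path neighbor and thus lie back in $W$. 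In either case, connectivity of $W''$ together with the fact that interior path vertices only connect to other path vertices forces $W''$ to contain the whole segment $\{u_{b+1}, \ldots, u_{q+1}\}$. Consequently $F$ must contain every path edge in $\{e_{b+2}, \ldots, e_{q+1}\}$, which together with the $b$ path edges already inside $W$ gives $|F| \ge q \ge k+2$, contradicting $|F| \le k$. The triangle case is therefore impossible, the subdivision claim holds, and $F' \subsetneq F$ witnesses the contradiction with minimality.
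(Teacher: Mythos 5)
Your strategy matches the paper's at a high level—delete an $F$-edge inside $P$, argue that the quotient is a subdivision of $G/F$, and invoke Observation~\ref{graph-class-prop}(2)—but the paper first locates a \emph{boundary} edge $e_i\in F$ with $e_{i+1}\notin F$ (possible since not all of $e_2,\dots,e_q$ can lie in $F$), so that $u_i$ is a leaf of $G[W(t)]$ and only the singleton $\{u_i\}$ detaches. You instead delete an arbitrary $e_j\in F$, which obliges you to control a possibly larger detached block $W_R$ and to explicitly exclude a triangle between $W_L$ and $W''$; that extra generality is exactly where the two gaps appear.

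The first gap is the contiguity claim: asserting that $W\cap V(P)$ is a contiguous subpath because interior vertices have degree $2$ is not valid. Degree $2$ alone does not preclude $W$ from containing $u_0$ and $u_{q+1}$ joined through a non-path vertex, giving two separated path segments. This situation is in fact impossible for a minimal $F$ with $|F|\leq k$, but only via an edge-counting argument against the bound $|W|\leq k+1$ (compare the paper's $t=t'$ case), which your write-up omits; the pigeonhole you use to exclude $(a,b)=(0,q+1)$ also tacitly presumes contiguity. The second gap is in the triangle-avoidance step: the statement that every $u\in\{u_a,\dots,u_{j-1}\}$ has both its neighbours already inside $W$ is false for $u=u_a$ when $a\geq 1$, since $u_{a-1}\notin W$. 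The omitted case, where the offending edge is $u_a u_{a-1}$ with $u_{a-1}\in W''$, does lead to a contradiction (it forces $W''$ to contain $\{u_0,\dots,u_{a-1}\}$ or $\{u_{b+1},\dots,u_{q+1}\}$, pushing $|F|$ past $k$), but the reason you give does not cover it. Both holes are repairable by counting against $|F|\leq k$, yet as written the decomposition rests on an unproved structural claim and an incomplete case analysis.
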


\begin{proof}
Assume the contrary that $F$ contains at least one such edge. Observe that there are at least $k+1$ edges with endpoints in $V(P) \setminus \{u_0,u_{q+1}\}$. Therefore, there exists $i \in [q-1] \setminus \{1\}$ such that $u_{i-1}u_i \in F$ and $u_iu_{i+1} \notin F$, or $u_{i-1}u_i \notin F$ and $u_iu_{i+1} \in F$. Let us assume that there exists $i \in [q-1] \setminus \{1\}$ such that $u_{i-1}u_i \in F$ and $u_iu_{i+1} \notin F$ (other case is symmetric). Let $T=G/F$ with $V(T)=\{t_1,\cdots,t_p\}$, and $\mathcal{W}$ be the $T$-witness structure of $G$. Furthermore, let $t$ and $t'$ be the vertices in $T$ such that $u_{i-1}, u_{i} \in W(t)$ and $u_{i+1} \in W(t')$. If $t = t'$ then consider the following. Notice that $G[W(t)]$ is connected, $u_{i-1}, u_{i},u_{i+1} \in W(t)$, and $u_{i}u_{i+1} \notin F$. Therefore, $W(t)$ must contain the vertices of the sub-path $(u_{i+1},\ldots,u_q,u_{q+1})$ and the vertices of the subpath $(u_0,u_1,\ldots,u_{i-1},u_{i})$. But then, we have $|W(t)| > k+1$, a contradiction. Therefore, we have $t \neq t'$. Notice that $u_{i}$ is not a cut vertex in $G[W(t)]$, as there is exactly one edge incident on it. Therefore, $G[W(t) \setminus \{u_{i}\}]$ is connected. Let $\mathcal{W}' = (\mathcal{W} \setminus \{W(t)\} )\cup \{u_i\} \cup \{W(t) \setminus \{u_i\}\}$. Observe that $\mathcal{W}'$ is a partition of $V(G)$ which is a $G/F'$-witness structure of $G$, where $F'=F \setminus \{u_{i-1}u_i\}$. Here, $G/F'$ is the graph obtained by subdividing the edge $tt'$ in $T$, and by Observation~\ref{graph-class-prop}, $G/F'$ is also a graph in $\mathbb{T}_\ell$, which contradicts the minimality of $F$.
\end{proof}

Next, we design a reduction rule which will be useful in bounding length of induced paths whose internal vertices are of degree $2$.

\begin{reduction rule} \label{rr:no-long-path}
If $G$ has a path $P=(u_0,u_1,\dots,u_q,u_{q+1})$ such that $q > k+2$ and for all $i \in [q]$, we have $deg(u_i)=2$. Then contract the edge $u_{q-1} u_q$, \emph{i.e.} the resulting instance is $(G/\{u_{q-1} u_q\},k)$.
\end{reduction rule}

Note that Reduction Rule~\ref{rr:no-long-path} can be applied in polynomial time by searching for such a path (if it exists) in the subgraph induced on the vertices of degree $2$ in $G$. In the following lemma, we show that Reduction Rule~\ref{rr:no-long-path} is safe. 

\begin{lemma} 
Reduction Rule \ref{rr:no-long-path} is safe.
\end{lemma}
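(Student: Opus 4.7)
The plan is to establish both directions of the equivalence $(G,k) \in \treelc \iff (G/\{u_{q-1}u_q\}, k) \in \treelc$, the safety statement for a reduction rule on yes/no instances.

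For the forward direction, I would take any inclusion-wise minimal solution $F$ of size at most $k$ for $(G,k)$. Because the path $P$ has $q > k+2 \geq k+2$ internal degree-$2$ vertices, Lemma~\ref{lemma:no-long-path} applies and yields that $F$ is disjoint from every edge incident to $V(P) \setminus \{u_0, u_{q+1}\}$. In particular $u_{q-1}u_q \notin F$, and both $u_{q-1}$ and $u_q$ are singleton witness sets of the $(G/F)$-witness structure. Hence $u_{q-1}u_q$ survives as an edge of $G/F$, and part 3 of Observation~\ref{graph-class-prop} gives $(G/F)/\{u_{q-1}u_q\} \in \bbT$. Since contracting a set of disjoint edges is commutative, this graph equals $(G/\{u_{q-1}u_q\})/F$, so $F$ is a solution of size at most $k$ for the reduced instance.

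For the reverse direction, let $G' = G/\{u_{q-1}u_q\}$ and let $u^*$ be the vertex produced by the contraction, so that $G'$ contains the induced path $(u_0, u_1, \ldots, u_{q-2}, u^*, u_{q+1})$ whose $q-1 \geq k+2$ internal vertices still have degree $2$. Given a minimal solution $F'$ of size at most $k$ for $(G',k)$, Lemma~\ref{lemma:no-long-path} applied inside $G'$ forces $F'$ to avoid all these internal vertices, and in particular no edge of $F'$ is incident to $u^*$. Every edge of $F'$ is then also an edge of $G$ not incident to $u_{q-1}$ or $u_q$, so I would set $F := F'$ as the candidate solution in $G$. By commutativity of contraction, $(G/F)/\{u_{q-1}u_q\} = G'/F' \in \bbT$; also $G/F$ is connected since $G$ is.

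The main obstacle is to pass from ``$(G/F)/\{u_{q-1}u_q\} \in \bbT$'' back to ``$G/F \in \bbT$'', because contraction can in principle destroy more than one edge when parallel edges are merged. To rule this out, I would check that contracting $u_{q-1}u_q$ in $G/F$ removes exactly one vertex and exactly one edge. Since $u_{q-2}$ is also an internal vertex of $P$ avoided by $F$, it is a singleton in $G/F$ with neighbourhood $\{u_{q-3}, u_{q-1}\}$, so the two neighbours of the merged vertex after contracting $u_{q-1}u_q$ are the distinct witness sets $\{u_{q-2}\}$ and $W(u_{q+1})$, and no edge between them exists in $G/F$; therefore exactly one edge is lost. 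Using $|V(G/F)| = |V((G/F)/\{u_{q-1}u_q\})| + 1$ and $|E(G/F)| = |E((G/F)/\{u_{q-1}u_q\})| + 1$ together with the $\bbT$-inequality on the contracted graph gives $|E(G/F)| \leq |V(G/F)| - 1 + \ell$, which combined with connectedness places $G/F$ in $\bbT$, completing the argument.
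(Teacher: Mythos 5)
Your proof captures essentially the same mathematical content as the paper's: both directions rest on Lemma~\ref{lemma:no-long-path} (applied once in $G$ with $P$, once in $G'$ with the shortened path) together with the closure of $\bbT$ under edge contraction and subdivision, and your careful check that contracting $u_{q-1}u_q$ in $G/F$ destroys exactly one vertex and one edge is just an unpacked form of Observation~\ref{graph-class-prop}(2). However, you should note that Reduction Rule~\ref{rr:no-long-path} lives in the PSAKS section, where ``safe'' is defined via the lossy-kernelization framework (Definitions in Section~\ref{sec:prelims-treelc}): one must exhibit a \emph{solution lifting algorithm} and verify the ratio inequality $\frac{\treelshort(G,k,F)}{\textsc{OPT}(G,k)} \le \frac{\treelshort(G',k',F')}{\textsc{OPT}(G',k')}$, not merely establish a yes/no equivalence. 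Your argument in fact proves the stronger statement that solutions of the same cardinality transfer exactly in both directions, so the lossy conclusion follows immediately once you make the framing explicit: take the lifting $F := F'$ (and $F := E(G)$ when $|F'| \ge k+1$), observe that your reverse direction gives $\treelshort(G,k,F) \le \treelshort(G',k,F')$, and that your forward direction (applied to an optimum $F^*$) gives $\textsc{OPT}(G',k) \le \textsc{OPT}(G,k)$; dividing yields the required inequality. Without that last step, what you have written does not literally establish the lemma as the paper intends it.
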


\begin{proof}
Let $P=(u_0,u_1,\dots,u_q,u_{q+1})$ be a path in $G$ such that $q > k+2$ and for all $i \in [q]$, we have $deg(u_i)=2$. Furthermore, let $G'=G/\{u_{q-1} u_q\}$, $P'=(u_0,u_1,\dots,u_{q-2},u^*,u_{q+1})$, where $u^*$ is the vertex resulting after contracting the edge $u_{q-1} u_q$. We consider the instances $(G,k)$ and $(G',k)$ of \treelc, and show that $\frac{\treelshort(G, k, F)}{\textsc{OPT}(G, k)} \le \frac{\treelshort(G', k', F')}{\textsc{OPT}(G', k')}$. Here, $\treelshort$ is a shorthand notation for the parameterized minimization problem for \treelc. 

Consider a minimal set $F' \subseteq E(G')$ such that $T'=G'/F'$ is in $\bbT$. If $|F'| \geq k + 1$, then the solution lifting algorithm returns $E(G)$, otherwise it returns $F = F'$. If $|F'| \geq k + 1$ then $\treelshort(G, k, F) \leq  k + 1 =\treelshort(G', k, F')$. Otherwise, let $V(T')=\{t_1,\cdots,t_r\}$ and $\mathcal{W}'$ denote the $T'$-witness structure of $G'$. By Lemma~\ref{lemma:no-long-path}, $F'$ has no edge incident on vertices in $V(P) \setminus \{u_0,u_{q+1}\}$. Therefore, every vertex in $V(P') \setminus \{u_0,u_{q+1}\}$ is in a singleton set of $\mathcal{W}'$. Let $\mathcal{W}=(\mathcal{W}' \setminus \{u^*\}) \cup \{\{u_{q-1}\},\{u_{q}\}\}$ to be a partition of $V(G)$. Then, $\mathcal{W}$ is a $T$-witness structure of $G$ where $T$ is $G/F$, which is obtained from $T'$ by subdividing an edges. From Observation~\ref{graph-class-prop}, $T$ is in $\bbT$. Therefore, $\treelshort(G, k, F) \le \treelshort(G', k, F')$.  

Next, consider an optimum solution $F^*$ to \treelc\ in $(G, k)$. If $|F^*| \geq k + 1$ then $\textsc{OPT}(G, k) = k + 1$ and by definition, $\textsc{OPT}(G', k) \le k + 1= \textsc{OPT}(G, k)$. Otherwise, we have $|F^*| \le k$. Let $T=G/F^*$, and $\mathcal{W}$ be the $T$-witness structure of $G$. By Lemma~\ref{lemma:no-long-path}, $F^*$ has no edge incident on $V(P) \setminus \{u_0,u_{q+1}\}$. Therefore, every vertex in $V(P) \setminus \{u_0,u_{q+1}\}$ is in a singleton set in $\mathcal{W}$. Let $\mathcal{W}'=({\cal W} \setminus \{\{u_{q-1}\},\{u_q\}\}) \cup \{\{u^*\}\}$ be a partition of $V(G')$. Then, $\mathcal{W}'$ is a $T'$-witness structure of $G'$, where $T'=G'/F^*$. Finally, $T'$ is the graph obtained from $T$ by contracting an edge. Hence, $T'\in \bbT$, and $\textsc{OPT}(G', k) \le \textsc{OPT}(G, k)$. Hence, we have $\frac{\treelshort(G, k, F)}{\textsc{OPT}(G, k)} \le \frac{\treelshort(G', k', F')}{\textsc{OPT}(G', k')}$.
\end{proof}

\begin{lemma} \label{lemma:treel-cvc}
  Consider an instance $(G, k)$ of \treelc\ on which Reduction Rule \ref{rr:no-long-path} is not applicable. If $(G, k)$ is a {\rm \yes} instance of {\rm \treelc} then $G$ has a connected vertex cover of size at most $2(k + 3)(k + 2\ell)$.
\end{lemma}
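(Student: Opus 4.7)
The plan is the following. Since $(G,k)$ is a yes instance, fix a minimal $F \subseteq E(G)$ with $|F| \leq k$ such that $T := G/F \in \mathbb{T}_\ell$ and let $\mathcal{W}$ be its $T$-witness structure; by Lemma~\ref{lemma:treel-witness-prop1}(1) I may assume every leaf of $T$ has a singleton witness set, and since $|V(T)| \leq 2$ implies $|V(G)| \leq k+2$ trivially, I assume $|V(T)| \geq 3$. The candidate connected vertex cover is $C := \bigcup\{W(t) : t \text{ is a non-leaf of } T\}$. To check $C$ is a vertex cover: any edge of $G$ lies either inside some $W(t)$ (necessarily non-leaf, as leaf witness sets are singletons) or between $W(t)$ and $W(t')$ with $tt' \in E(T)$, and since $|V(T)| \geq 3$ two $T$-leaves cannot be $T$-adjacent, so at least one of $t, t'$ is non-leaf. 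For connectedness, set $T^- := T$ minus its leaves, which is connected; each $W(t)$ is connected in $G$, and adjacent $t,t'\in V(T^-)$ yield adjacent witness sets in $G$, so $G[C]$ is connected.

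To bound $|C|$, I would first observe that Reduction Rule~\ref{rr:deg-one-rule} being inapplicable forces every leaf of $T$ to be $T$-adjacent to a \emph{big} witness set---otherwise the unique vertex in the leaf's witness set would have $G$-degree at most one. A consequence is that no singleton non-leaf of $T$ has a leaf $T$-neighbor, so its degree in $T^-$ equals its degree in $T$; in particular, every leaf of $T^-$ is big, yielding $|L(T^-)| \leq k$. As $T^-\in\mathbb{T}_\ell$ (because $|E(T^-)| - |V(T^-)| + 1 = |E(T)| - |V(T)| + 1 \leq \ell$), the degree-sum inequality $\sum_v \deg(v) \leq 2|V| - 2 + 2\ell$ gives at most $k + 2\ell - 2$ vertices of degree $\geq 3$ in $T^-$.

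Next, contract each maximal path of degree-$2$ \emph{singleton} vertices of $T^-$; the result is a graph $\widetilde{T} \in \mathbb{T}_\ell$ whose vertex set is exactly (bigs of $T^-$) $\cup$ (degree-$\geq 3$ singletons of $T^-$), so $|V(\widetilde{T})| \leq 2k + 2\ell - 2$ and $|E(\widetilde{T})| \leq 2k + 3\ell - 3$. Each contracted edge of $\widetilde{T}$ corresponds to a ``singleton run'' in $T^-$ whose $G$-vertices form a path in $G$ with degree-$2$ interior vertices; invoking the inapplicability of Reduction Rule~\ref{rr:no-long-path} forces each such run to have length at most $k+3$. Combining, $|C|$ is at most the total big mass ($\leq 2k$, by Observation~1), plus the degree-$\geq 3$ singletons ($\leq k+2\ell-2$), plus the degree-$2$ singletons ($\leq (2k+3\ell-3)(k+3)$):
\[
|C| \;\leq\; 2k \;+\; (k + 2\ell - 2) \;+\; (2k+3\ell-3)(k+3) \;\leq\; 2(k+3)(k+2\ell),
\]
by a routine arithmetic check (the slack comes out to roughly $k\ell + \ell + 11$). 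The main delicacy will be pinning down the singleton-run length when both ends of the run in $T^-$ are big witness sets: in that case the $G$-vertices at the extremes of the run may have degree exceeding $2$ (owing to multiple edges into the adjacent big), forcing Reduction Rule~\ref{rr:no-long-path} to be applied to a carefully chosen truncated sub-path, with any resulting additive slack absorbed by the $2k$ budget reserved for big witness sets.
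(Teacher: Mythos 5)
Your proof follows essentially the same route as the paper: take the union of non-leaf witness sets as the connected vertex cover, then bound its size by splitting the non-leaves of $T$ into big witness sets, degree-$\geq 3$ singletons, and runs of degree-$2$ singletons, controlling the three pieces via Observation~1, a leaf/branch-vertex count, and the inapplicability of Reduction Rule~\ref{rr:no-long-path}, respectively. The only cosmetic difference is that you get the bound $k+2\ell-2$ on degree-$\geq 3$ singletons from a degree-sum inequality applied directly to $T^-\in\bbT$, while the paper first deletes the $\leq\ell$ excess edges, marks their endpoints together with the big vertices, and then counts leaves and branch vertices of the resulting tree $T'$; the two bookkeeping schemes are interchangeable and yield comparable constants. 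The delicacy you flag at the end is real: the $G$-vertices at the two extremes of a degree-$2$ singleton run can themselves have degree $>2$ when abutting a big witness set, so the run can contain up to $k+4$ singletons rather than $k+2$ — and the paper's own proof silently asserts $k+2$ internal vertices per path without addressing this. Both arguments nevertheless land inside the stated $2(k+3)(k+2\ell)$ because of the multiplicative slack built into that target (and because the $k+4$ worst case needs big ends, which cannot occur when $k=0$), but your heuristic that the extra is ``absorbed by the $2k$ budget reserved for big witness sets'' is not the right accounting — the $2k$ budget counts vertices inside the bigs, not run vertices; the extra is absorbed by the slack $k\ell+\ell+O(1)$ in the final inequality, and to make this airtight one should note that each unit of overshoot is charged to a (run, big-end) incidence and then bound those incidences.
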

\begin{proof} 
Let $(G, k)$ be a {\rm \yes} instance of {\rm \treelc}, $F$ be one of its solution, $T=G/F$, where $T \in \bbT$, and $\mathcal{W}$ be the $T$-witness structure of $G$. Let $L$ be the set of leaves in $T$, and $X= V(T)\setminus L$. If $|V(T)|\leq 2$ then the claim trivially holds since $|F| \leq k$. Otherwise, we have $|V(T)| \geq 3$. In this case, by Lemma~\ref{lemma:treel-witness-prop1} we can assume that each vertex in $L$ belongs to a singleton witness set in $\cal W$. Notice that for $t_i,t_j \in L$, where $t_i \neq t_j$, and $W(t_i) = \{u\}$ and $W(t_j) = \{v\}$ we have $t_it_j \not\in E(T)$ (since $|V(T)| \geq 3$), and therefore $uv \not\in E(G)$. As $T[X]$ is connected, it follows that $S = \bigcup_{t \in X} W(t)$ is a connected vertex cover of $G$. We now argue that $|S|$ is at most $2(k + 3)(k + 2\ell)$.
  
Let $X_1\subseteq X$ be the set comprising of vertices in $T$ such that for each $t \in X_1$ we have $|W(t)| > 1$, and $X_2 = X \setminus X_1$.
  Since Reduction Rule~\ref{rr:deg-one-rule} is not applicable on $G$, we can assume that every leaf in $T$ is adjacent to a vertex in $X_1$. Notice that any connected induced subgraph of $T$ is in $\bbT$.
  Fix a spanning tree of $T - L$, and let $F$ be the set of edges which are not in this spanning tree. Since, $T - L \in \bbT$ therefore, we have $|F| \le \ell$. Next, we create a set of marked vertices $M$. We add both the endpoints of edges in $F$ to $M$, and add vertices in $X_1$ to $M$.
  Consider a graph $T'$ obtained from $T - L$ by deleting edges in $F$ and contracting all vertices with  degree exactly two in the graph $T - L$.
  It is easy to see that $T'$ is a tree with all its leaves marked and every internal vertex of degree at least 3.
  Hence the number of vertices in $T'$ is at most twice the number of marked vertices. Since there are at most $k + 2\ell$ marked vertices, we get $|V(T')| \le 2(k + 2\ell)$. Every edge in $E(T')$ corresponds to a simple path (or an edge) in $T$. Recall that the number of internal vertices in each such path is bounded by $k + 2$ as Reduction Rule \ref{rr:no-long-path} is not applicable. Hence, $|X_2|$ is at most $2(k + 2)(k + 2\ell)$. Since, there are at most $k$ more vertices in $W(t)$ for $t \in X_1$, $|S|$ is at most $2(k + 2)(k + 2\ell) + k$. This concludes the proof of lemma.
 
\end{proof}

Before describing the next reduction rule, we define a partition of $V(G)$ into the following sets.
$$H = \{u \in V(G) \mid deg(u) \ge 2(k + 3)(k + 2\ell) + 1 \}$$
$$I = \{v \in V(G) \setminus H \mid N(v) \subseteq H \}$$ 
$$R = V(G) \setminus (H \cup I) $$


Vertices $v, u$ are said to be \emph{false twins} if $N(v) = N(u)$. We use Lemma~\ref{lemma:treel-witness-prop2} to reduce the number of vertices in $I$ which have many false twins.
Let $G$ be $k$-contractible to a graph $T$ in $\bbT$ and $\mathcal{W}$ be the $T$-witness structure of $G$.

\begin{lemma}\label{lemma:treel-witness-prop2} 
Consider sets $X,U \subseteq V(G)$ such that $U$ is an independent set in $G$ and for all $v \in U$ we have $X \subseteq N(v)$. If $|U| \ge k + \ell + 2$ then there is a vertex $t \in V(T)$ such that $X \subseteq W(t)$. 
\end{lemma}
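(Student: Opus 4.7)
The plan is to prove the lemma by contradiction. Suppose $X$ is not contained in a single witness set of $\mathcal{W}$. Then we can pick $x_1,x_2\in X$ lying in distinct witness sets $W(t_1)$ and $W(t_2)$ with $t_1\neq t_2$. For every $v\in U$, let $W(t_v)$ be the witness set containing $v$. Since $v$ is adjacent to $x_1\in W(t_1)$, either $t_v=t_1$ or the edge $vx_1$ witnesses $t_vt_1\in E(T)$; the same holds for $x_2$. Hence $t_v\in\{t_1,t_2\}\cup(N_T(t_1)\cap N_T(t_2))$.

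The first substep is to bound $c:=|N_T(t_1)\cap N_T(t_2)|$. Since $T\in\bbT$, the cyclomatic number $|E(T)|-|V(T)|+1$ of $T$ is at most $\ell$. If $t_1t_2\in E(T)$, the $c$ triangles $t_1t_2t_3^i$ (one per common neighbor $t_3^i$) are linearly independent in the cycle space of $T$, so $c\leq\ell$. If $t_1t_2\notin E(T)$, fixing one common neighbor and pairing it with each of the others gives $c-1$ independent $4$-cycles, so $c\leq\ell+1$. Either way $c\leq\ell+1$.

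Next I would split $U=U_1\cup U_{\geq 2}$ according to whether $|W(t_v)|=1$ or $|W(t_v)|\geq 2$, and bound each piece. For $v\in U_1$ the singleton $W(t_v)=\{v\}$ forces $t_v\neq t_1$: otherwise $W(t_1)=\{v\}$ and $x_1\in W(t_1)$ would give $v=x_1$, contradicting $x_1\in X\subseteq N(v)$ and the absence of self-loops; symmetrically $t_v\neq t_2$. Hence $t_v$ is a common neighbor, and distinct vertices of $U_1$ give distinct $t_v$, so $|U_1|\leq c\leq\ell+1$. For $v\in U_{\geq 2}$, the graph $G[W(t_v)]$ is connected on at least two vertices, so its independence number is at most $|W(t_v)|-1$; since $U$ is independent in $G$, we get $|U\cap W(t_v)|\leq |W(t_v)|-1$. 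Summing over the distinct big witness sets meeting $U$ and using $\sum_{t:|W(t)|\geq 2}(|W(t)|-1)\leq k$ (the total number of contracted edges), we obtain $|U_{\geq 2}|\leq k$. Therefore $|U|\leq k+\ell+1$, contradicting $|U|\geq k+\ell+2$.

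The main subtlety is the common-neighbor count in $T$: because $T$ is only $\ell$ edges away from a tree rather than a tree itself, the cycle-space argument is needed to get the sharp bound $c\leq\ell+1$. A secondary point is that the independence of $U$ combined with the connectedness of each big witness set saves one unit per big set, which is exactly what lets the sum over big witness sets fit inside the contraction budget $k$.
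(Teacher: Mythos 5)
Your proof is correct and follows essentially the same route as the paper: bound $|U\cap\text{(big witness sets)}|\leq k$ via independence and the contraction budget, and bound the singleton part by $\ell+1$ using the cycle rank of $T$. The paper states the $\ell+2$ singleton bound without justification and argues via the cyclomatic number of the induced subgraph on $\{t,t'\}$ plus the singleton nodes, whereas you count common neighbours of $t_1,t_2$ via explicitly independent triangles/$4$-cycles — same idea, your version just fills in the details the paper leaves implicit.
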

\begin{proof}
  We prove this by contradiction. Assume there exists $t \neq t'$ such that $X \cap W(t)$ and $X \cap W(t')$ are non-empty. Since $U$ is an independent set and $|U| \ge k + \ell + 2$, there are at least $\ell + 2$ vertices in $U$ which are not contained in any big witness sets.
  Consider the subgraph of $T$ (on at least $\ell+4$ vertices) induced on the vertices $\{t, t'\} \cup \{t_i \mid W(t_i) \text{ is a singelton witness set containing a vertex in } U \} $.
  After deleting any set of $\ell$ edges in $T$, there still exists a cycle in $T$. This is a contradiction the fact that $T \in \mathbb{T}_\ell$. 
\end{proof}

\begin{reduction rule}\label{rr:false-twins-treel} If there is a vertex $v \in I$ that has at least $k + \ell + 2$ false twins in $I$ then delete $v$, \emph{i.e.} the resulting instance is $(G - \{v\}, k)$.
\end{reduction rule}

\begin{lemma}
Reduction Rule \ref{rr:false-twins-treel} is safe.
\end{lemma}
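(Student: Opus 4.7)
The plan is to show that this rule, coupled with the natural lifting algorithm described below, is a strict $1$-safe preprocessing step. Concretely, I will establish both $\treelshort(G,k,F) \le \treelshort(G',k,F')$ for the lifted $F$ and $\textsc{OPT}(G',k) \le \textsc{OPT}(G,k)$, which together imply the required ratio bound. Write $G'=G-\{v\}$ and let $U$ be the set consisting of $v$ and its false twins in $I$, so $|U| \ge k+\ell+3$; $U$ is independent, and every $u \in U$ has neighborhood equal to the common set $X=N(v) \subseteq H$. The workhorse is Lemma~\ref{lemma:treel-witness-prop2}: in any witness structure coming from a contraction of size at most $k$, we have $|U|$ (or $|U \setminus \{v\}|$) at least $k+\ell+2$, so $X$ is forced to sit inside a single witness set.

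For the solution-lifting algorithm, given a minimal $F'$ for $(G',k)$, I take $F := F'$ whenever $|F'| \le k$ (and a trivial solution otherwise). To verify $G/F \in \bbT$, applying Lemma~\ref{lemma:treel-witness-prop2} inside $G'$ to the twins of $U \setminus \{v\}$ yields a node $t$ of $G'/F'$ with $X \subseteq W'(t)$. In $G/F$ the vertex $v$ is a singleton witness set whose only neighbors (in $G$) lie in $X \subseteq W'(t)$, so its image becomes a pendant attached to $t$; attaching a pendant preserves $\bbT$-membership, and hence $G/F \in \bbT$ with $|F|=|F'|$.

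For the harder direction $\textsc{OPT}(G',k) \le \textsc{OPT}(G,k)$, take an optimum $F^*$ with $|F^*| \le k$ (otherwise there is nothing to prove), set $T=G/F^*$, and let $\mathcal{W}$ be its witness structure. Lemma~\ref{lemma:treel-witness-prop2} in $G$ gives $t$ with $X \subseteq W(t)$. Write $t_v$ for the node with $v \in W(t_v)$. If $W(t_v)=\{v\}$, then $F^*$ has no edge incident to $v$, so $F^* \subseteq E(G')$, and $G'/F^*=T-v$ (deletion of a pendant) is still in $\bbT$. The main obstacle is the case $|W(t_v)| \ge 2$: connectivity of $G[W(t_v)]$ drags a neighbor of $v$, i.e., some vertex of $X$, into $W(t_v)$, forcing $t_v=t$. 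Here I will exploit the abundance of twins: since $|W(t)| \le k+1$ and $|U| \ge k+\ell+3$, at least one twin $u \ne v$ lies outside $W(t)$, and the same connectivity argument forces every such $u$ into a singleton witness set. I then form $\mathcal{W}'$ from $\mathcal{W}$ by swapping the roles of $v$ and $u$, i.e., $W'(t)=(W(t) \cup \{u\}) \setminus \{v\}$ with $\{v\}$ a new singleton. Because $N(u)=N(v)=X$, the map $v \leftrightarrow u$ is an isomorphism between $G[W(t)]$ and $G[W'(t)]$, so $\mathcal{W}'$ is a valid witness structure with the same witness-set sizes; hence there is an $F^{**} \subseteq E(G)$ with $|F^{**}|=|F^*|$ and $G/F^{**} \cong G/F^*$. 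Since $v$ is a singleton in $\mathcal{W}'$, we have $F^{**} \subseteq E(G')$, and the first case then applies to $F^{**}$, giving $\textsc{OPT}(G',k) \le |F^{**}|=|F^*|=\textsc{OPT}(G,k)$.

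Combining the two inequalities yields $\treelshort(G,k,F)/\textsc{OPT}(G,k) \le \treelshort(G',k,F')/\textsc{OPT}(G',k)$, so the rule together with the lifting algorithm is strict $1$-safe.
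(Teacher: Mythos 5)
Your proof is correct and follows essentially the same route as the paper: identical lifting algorithm ($F:=F'$ when $|F'|\le k$), the same application of Lemma~\ref{lemma:treel-witness-prop2} in both directions to force $X=N(v)$ into a single bag, the same case split on whether $W(t_v)=\{v\}$, and the same twin-swap to reduce the second case to the first. You are slightly more explicit than the paper in justifying why a twin $u\notin W(t)$ must lie in a singleton bag (connectivity of $W(t_u)$ would pull a vertex of $X\subseteq W(t)$ into $W(t_u)$), which is a welcome clarification but not a different argument.
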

\begin{proof} 
Let $v \in I$ such that $v$ has at least $k + \ell + 2$ false twins in $I$, and let $G'=G - \{v\}$. We consider instances $(G,k)$ and $(G',k)$ of \treelc, and show that $\frac{\treelshort(G, k, F)}{\textsc{OPT}(G, k)} \le \frac{\treelshort(G', k, F')}{\textsc{OPT}(G', k)}$. Here, $\treelshort$ is a shorthand notation for the parameterized minimization problem for \treelc. 

Consider a solution $F'$ to \treelc\ in $(G',k)$. If $|F'| \geq k + 1$ then the solution lifting algorithm returns $E(G)$, otherwise it returns $F = F'$. If $|F'| \geq k + 1$ then $\treelshort(G, k, F) \leq k+1 =\treelshort(G', k, F')$. Otherwise, $|F'| \le k$, and let $T'=G'/F$, where $T' \in \bbT$ with $\mathcal{W}'$ being the $T'$-witness structure of $G'$. Let $U$ be set of false twins of $v$ in $I$. Recall that $|U| \geq k + \ell + 2$. From Lemma~\ref{lemma:treel-witness-prop2}, there exists $t_i \in V(T')$ such that $N_{G'}(u_1) \subseteq W'(t_i)$ for $u_1$ in $U$. Let $T$ be the graph obtained from $T'$ by adding a new vertex $t_v$ as a leaf adjacent to $t_i$. Notice that $T\in \bbT$, which follows from the fact that $N_{G'}(u_1) = N_{G}(u_1) = N_G(v)$, and $N_G(u_1) \subseteq W'(t_i)$. Let $\mathcal{W} = \mathcal{W}' \cup \{\{v\}\}$ be a partition of $V(G)$. Then, $T$ is $G/F$ and $\mathcal{W}$ is the $T$-witness structure of $G$. Hence, $\treelshort(G, k, F) \le \treelshort(G', k, F')$.  

Next, consider an optimum solution $F^*$ to \treelc\ in $(G, k)$. If $|F^*| \geq k + 1$ then by definition, $\textsc{OPT}(G, k) \le k + 1= \textsc{OPT}(G, k)$. Otherwise, we have $|F^*| \le k$. Let $T=G/F^*$, and $\mathcal{W}^*$ denote the $T$-witness structure of $G$. By an argument analogous to the proof of $\treelshort(G, k, F) \le \treelshort(G', k', F')$, we know that there exists $t_j \in V(T)$ such that $N(v) \subseteq W(t_j)$. Let $t \in V(T)$ such that $v \in W(t)$. If $W(t) = \{v\}$ then $t$ is a leaf in $T$, which implies that $F^*$ is also a solution to \treelc\ in $(G',k)$, thus giving the desired relation. Otherwise, consider the following. Recall that $v$ has at least $k + \ell + 2$ false twins, and at least one of them, say $u$, belongs to a singleton witness set. That is, there exists a vertex $t'$ in $T$ such that $W(t') = \{u\}$. Let $\mathcal{W}'$ be the partition of $V(G)$ obtained from $\mathcal{W}^*$ by swapping the appearances of $u$ and $v$. Furthermore, let $F'$ be the set of edges obtained from $F$ by replacing each edge $xv$ with the edge $xu$, where for each $xv \in F$. Notice that $F'$ is also an optimal solution to \treelc\ in $(G, k)$, and a solution to \treelc\ in $(G', k)$. Therefore, $\textsc{OPT}(G', k) \le \textsc{OPT}(G, k)$. Hence, $\frac{\treelshort(G, k, F)}{\textsc{OPT}(G, k)} \le \frac{\treelshort(G', k, F')}{\textsc{OPT}(G', k)}$. 
\end{proof}

For $\alpha >1$, we let $d=\lceil {\frac{\alpha}{\alpha-1}}  \rceil$. Next, we state our last reduction rule. 

\begin{reduction rule} \label{rr:treel-large-nbd} If there are vertices $v_1, v_2, \cdots ,v_{k+\ell+2} \in I$ and $h_1, h_2, \cdots,$ $h_d \in H$ such that for all $i \in [k+\ell+2]$, we have $\{h_1,\dots, h_d \} \subseteq N(v_i)$ then contract all edges in $\tilde{E} = \{v_1h_i \mid i \in [d]\}$, and decrease $k$ by $d-1$. The resulting instance is $(G/\tilde{E}, k - d + 1)$.
\end{reduction rule}

We note that the \emph{lossy-ness} is introduced only in the Reduction Rule~\ref{rr:treel-large-nbd}. We have determined that $H' = \{h_1, h_2, \dots, h_d\}$ need to be in one witness bag but $G[H']$ may not be connected. To simplify the graph, we introduce additional vertex $v_1$ to the bag which contains $H'$. By doing this we are able to contract $H' \cup \{v_1\}$ into a single vertex. In the following lemma, we argue that the number of extra edge contracted in this process is $\alpha$ factor of the optimum solution. 

\begin{lemma} Reduction Rule~\ref{rr:treel-large-nbd} is $\alpha$-safe. 
\end{lemma}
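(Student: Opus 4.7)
I would define the solution lifting algorithm to output $F = F' \cup \tilde{E}$ whenever $|F'| \le k - d + 1$ and the trivial $F = E(G)$ otherwise. Since $\tilde{E}$ is a star centered at $v_1$ on $\{v_1, h_1, \ldots, h_d\}$, contracting it collapses these $d+1$ vertices to the distinguished vertex $u^*$ of $G'$, so $G/F = G'/F'$ and $F$ is feasible whenever $F'$ is. A uniform bookkeeping then shows $\treelshort(G, k, F) \le \treelshort(G', k - d + 1, F') + d$ in either branch (for the trivial branch, $\treelshort(G, k, E(G)) = k + 1 = (k - d + 2) + (d - 1)$).

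The crux of the argument is the complementary inequality $\textsc{OPT}(G, k) \ge \textsc{OPT}(G', k - d + 1) + d - 1$. I would apply Lemma~\ref{lemma:treel-witness-prop2} to the independent set $U = \{v_1, \ldots, v_{k + \ell + 2}\} \subseteq I$ (which is independent because vertices of $I$ have all neighbours in $H$) together with $X = H' := \{h_1, \ldots, h_d\}$. This produces a bag $W^*(t)$ of an optimum witness structure $\mathcal{W}^*$ for $(G, k)$ that contains all of $H'$, and I would then case on whether $v_1 \in W^*(t)$. In the good case, rewriting $W^*(t)$ as $(W^*(t) \setminus (\{v_1\} \cup H')) \cup \{u^*\}$ yields a witness structure on $G'$ for the same quotient graph, saving $d$ contractions. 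In the bad case, $v_1$ sits in a bag $W^*(t')$ adjacent to $W^*(t)$ in the quotient (forced by $v_1 h_1 \in E(G)$), and contracting $\tilde{E}$ compels a merge of $W^*(t)$ and $W^*(t')$; the resulting quotient on $G'$ is obtained from $G/F^*$ by contracting the edge $tt'$ and remains in $\bbT$ by part~3 of Observation~\ref{graph-class-prop}, so only $d - 1$ contractions are saved. This single-unit shortfall in the bad case is the genuine source of lossiness and is the main obstacle: it is unavoidable because an optimum solution in $G$ may legitimately keep $v_1$ outside the bag of $H'$.

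It then suffices to verify that the choice $d = \lceil \alpha / (\alpha - 1) \rceil$, equivalently $\alpha(d - 1) \ge d$, absorbs this shortfall into the factor $\alpha$. Writing $f = \treelshort(G', k', F')$, $a = \textsc{OPT}(G', k')$ and $b = \textsc{OPT}(G, k) \ge a + d - 1$, I would show that $(f + d) / b \le \max\{f/a, \alpha\}$ splits cleanly: when $f/a \ge \alpha$, the bound $\alpha(d - 1) \ge d$ rearranges to $(f + d)/b \le f/a$; when $f/a < \alpha$, the chain $f + d \le \alpha a + d \le \alpha(a + d - 1) \le \alpha b$ gives $(f + d)/b \le \alpha$. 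This establishes strict $\alpha$-safety.
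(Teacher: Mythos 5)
Your proof is correct and follows essentially the same route as the paper: define the lifting algorithm to return $F' \cup \tilde{E}$ (or $E(G)$ when $F'$ is too large), invoke Lemma~\ref{lemma:treel-witness-prop2} with $U = \{v_1,\dots,v_{k+\ell+2}\}$ and $X = \{h_1,\dots,h_d\}$ to force all of $H'$ into one bag of an optimum witness structure, split on whether $v_1$ lies in that bag (saving $d$ contractions in the good case, $d-1$ in the bad case via a merge that stays in $\bbT$ by Observation~\ref{graph-class-prop}), and then absorb the one-unit shortfall using $\alpha(d-1)\ge d$. The paper packages the final arithmetic as the mediant inequality $\frac{x+p}{y+q}\le\max\{\frac{x}{y},\frac{p}{q}\}$, while you verify the two cases directly, but the computations are equivalent.
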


\begin{proof}
Let $v_1, v_2, \cdots ,v_{k+\ell+2} \in I$ and $h_1, h_2, \cdots,$ $h_d \in H$ such that for all $i \in [k+\ell+2]$, we have $\{h_1,\dots, h_d \} \subseteq N(v_i)$. Furthermore, let $\tilde{E} = \{v_1h_i \mid i \in [d]\}$, $G'=G/{\tilde E}$, and $k'=k-d+1$. We consider instances $(G,k)$ and $(G',k')$ of \treelc, and show that $\frac{\treelshort(G, k, F)}{\textsc{OPT}(G, k)} \le \max \Big\{ \frac{\treelshort(G', k', F')}{\textsc{OPT}(G', k')}, \alpha \Big\}$. 

Consider a solution $F'$ of \treelc\ in $(G', k')$.  If $|F'| \geq k' + 1$, then the solution lifting algorithm returns $E(G)$, otherwise it returns $F = F' \cup \tilde{E}$. If $|F'| \geq k' + 1$ then $\treelshort(G', k', F') = k' + 1 = k - d$. In this case, $F = E(G)$ and  $\treelshort(G, k, F) \le k + 1=k'+d =\treelshort(G', k', F') + d - 1$. Next, consider the case when $|F'| \le k'$, and let $\mathcal{W}' = \{W'(t_1), W'(t_2), \dots , W'(t_q)\}$ be the $G'/F'$-witness structure of $G$. Let $w$ denote the vertex in $V(G')\setminus V(G)$ obtained by contracting the edges in $\tilde{E}$. Without loss of generality, assume that $w \in W'(t_1)$. Let $\mathcal{W}= (\mathcal{W}' \setminus \{W'(t_1)\}) \cup \{W_1\} $, where $W_1 = (W'(t_1)\setminus \{w\}) \cup \{v_1, h_1, h_2, \dots, h_d\} $. Note that $V(G) \setminus \{v_1, h_1, h_2, \dots, h_d\} = V(G') \setminus \{w\}$ and hence $\mathcal{W}$ is partition of $V(G)$. Furthermore, $G[W_1]$ is connected as $G'[W'(t_1)]$ is connected, and therefore, $E(G'[W_1 \setminus \{w\}]) \cup \tilde{E}$ contains a spanning tree of $G[W_1]$. Also, $|W_1| = |W'(t_1)| + d$, and any vertex which is adjacent to $w$ in $G'$ is adjacent to at least one vertex in $\{v_1, h_1, h_2, \dots, h_d\}$ in $G$. Thus, $\mathcal{W}'$ is a $G/F$-witness structure of $G$, where $G/F \in \bbT$. Therefore, $\treelshort(G, k, F) \le \treelshort(G', k', F') + d$.

Next, consider an optimum solution $F^*$ to \treelc\ in $(G, k)$, and let $T$ be $G/F^*$ with $\mathcal{W}$ being the $T$-witness structure of $G$. If $|F^*| \geq k + 1$, then $\textsc{OPT}(G, k) = k + 1= k'+d = \textsc{OPT}(G', k') + d -1$. Otherwise, we have $|F^*| \le k$, and there are at least $\ell + 3$ vertices, in $\{v_1, v_2, \dots , v_{k + \ell + 2}\}$ ($\subseteq I$) which are not in $V(F^*)$. That is, they are in singleton witness sets of $\mathcal{W}$. Then, by Lemma~\ref{lemma:treel-witness-prop2}, $\{h_1, h_2, \dots ,h_d\}$ are in the same witness set, say $W(t_i)$ where $t_i \in V(T)$. Consider the case when $v_1 \in W(t_i)$. Let $\tilde F$ be the edge set obtained from $F$ by replacing each edge $uv$ by $uw$, where $v \in \{v_1, h_1, \cdots, v_d\}$ and $u \notin \{v_1, h_1, \cdots, v_d\}$. Furthermore, let $F' = \tilde F \setminus \tilde{E}$. Notice that $|F'| \le |F^*| - d$, and $F'$ is solution to $(G', k')$. Therefore, $\textsc{OPT}(G', k') \leq |F^*| - d = \textsc{OPT}(G, k) - d$. Next, we consider the case when $v_1 \not\in W(t_i)$, and let $t_j \in V(T)$ be the vertex such that $v_1 \in W(t_j)$. Then, $t_i$ and $t_j$ are adjacent in $T$. Let $\mathcal{W}' = \mathcal{W} \cup \{W(t_{ij})\} \setminus \{W(t_i), W(t_j)\}$ of $V(G)$, where $W(t_{ij}) = W(t_i) \cup W(t_j)$. Clearly, $G[W(t_{ij})]$ is connected. Thus, $\mathcal{W}'$ is a $G/F$-witness structure of $G$, where $|F| = |F^*|+1$ as $|W(t_i)|-1+|W(t_j)|-1 = (|W(t_{ij})|-1)-1$. Furthermore, $F$ can be assumed to contain $\tilde{E}$, and therefore $F' = F \setminus \tilde{E}$ is solution to \treelc\ in $(G', k')$. This implies that $\textsc{OPT}(G', k') \le |F'| = |F^*|+1 - d = \textsc{OPT}(G, k) - d+1$. Thus, we have
\footnote{We use the bound, $\frac{x + p}{y + q} \le \max\{\frac{x}{y}, \frac{p}{q}\}$ for any positive real numbers $x, y, p, q$.}
$\frac{\treelshort(G, k, F)}{\textsc{OPT}(G, k)} \le \frac{\treelshort(G', k', F') + d}{\textsc{OPT}(G', k') + (d - 1)} \le \max \Big\{ \frac{\treelshort(G', k', F')}{\textsc{OPT}(G', k')}, \alpha \Big\}$. 
\end{proof}

\begin{lemma} \label{lemma:treel-approx-kernel} 
Let $(G, k)$ be an instance of \treelc\ where none of the Reduction Rules \ref{rr:no-long-path} to~\ref{rr:treel-large-nbd} are applicable. If $(G, k)$ is a \yes\ of \treelc\ then $|V(G)| \leq c[k(k + 2\ell)]^{d + 1}$, where $c$ is some fixed constant. 
\end{lemma}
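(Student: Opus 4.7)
The plan is to use Lemma~\ref{lemma:treel-cvc} to obtain a connected vertex cover $S$ of $G$ with $|S| \le 2(k+3)(k+2\ell)$, then to bound each part of the partition $V(G) = H \cup I \cup R$ separately. Fix any minimal solution $F^*$ to \treelc\ in $(G,k)$, let $T = G/F^*$, and let $\mathcal{W}$ be the associated witness structure used in the proof of Lemma~\ref{lemma:treel-cvc}, so that $S$ is exactly the union of witness sets corresponding to non-leaf vertices of $T$. By Lemma~\ref{lemma:treel-witness-prop1} we may assume every leaf bag of $T$ is a singleton.

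I would first show $H \subseteq S$. If $v \in V(G) \setminus S$, then $v$ sits in a singleton leaf bag $\{v\}$, so all its $G$-neighbours lie in the unique adjacent witness bag, which has size at most $k+1$; hence $\deg_G(v) \le k+1 < 2(k+3)(k+2\ell)+1$, and $v \notin H$. Consequently $|H| \le |S| = \mathcal{O}(k(k+2\ell))$. Next I would bound $|R|$. The part $R \cap S$ trivially satisfies $|R \cap S| \le |S|$. For $v \in R \setminus S$, $v$ is a singleton leaf with $N_G(v) \subseteq W(t') \subseteq S$ for a non-leaf $t'$; since $v \notin I$, at least one neighbour of $v$ lies in $S \setminus H$. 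Each vertex in $S \setminus H$ has degree at most $2(k+3)(k+2\ell)$, so charging each $v \in R \setminus S$ to such a neighbour gives $|R \setminus S| \le |S \setminus H| \cdot 2(k+3)(k+2\ell) = \mathcal{O}([k(k+2\ell)]^2)$.

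The dominant term is $|I|$, which I would bound by splitting according to neighbourhood size. For $v \in I$ with $|N_G(v)| < d$, group vertices by their neighbourhood (a subset of $H$ of size less than $d$); each group is a false-twin class in $I$ and, by Reduction Rule~\ref{rr:false-twins-treel}, contains at most $k+\ell+2$ members, contributing $\mathcal{O}(|H|^{d-1}(k+2\ell))$ in total. For $v \in I$ with $|N_G(v)| \ge d$, Reduction Rule~\ref{rr:treel-large-nbd} ensures that every $d$-subset of $H$ is contained in the neighbourhood of at most $k+\ell+1$ vertices of $I$; since each such $v$ has at least one $d$-subset of neighbours in $H$, double counting yields $|I_{\ge d}| \le \binom{|H|}{d}(k+\ell+1) = \mathcal{O}(|H|^d(k+2\ell))$. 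Substituting $|H| = \mathcal{O}(k(k+2\ell))$ and using $(k+\ell+2) \le \mathcal{O}(k(k+2\ell))$ gives $|I| = \mathcal{O}([k(k+2\ell)]^{d+1})$.

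Summing the three bounds yields $|V(G)| = |H| + |I| + |R| \le c[k(k+2\ell)]^{d+1}$ for an absolute constant $c$ (absorbing the $2^d$-style factors arising from $|H|^d$), since $d \ge 2$ makes the quadratic $R$-bound subdominant. The main subtlety is controlling $R \setminus S$, for which no dedicated reduction rule exists; the key observation is that the very definition of $H$ and $I$ forces every vertex of $R \setminus S$ to touch a low-degree CVC vertex, which is what ultimately makes the argument close.
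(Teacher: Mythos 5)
Your proof is correct and follows essentially the same approach as the paper: apply Lemma~\ref{lemma:treel-cvc} to get a bounded connected vertex cover, bound $|H|$, $|R|$, and $|I|$ separately with $|I|$ as the dominant term handled via Reduction Rules~\ref{rr:false-twins-treel} and~\ref{rr:treel-large-nbd}. The only cosmetic difference is that you bound $R$ by charging vertices of $R \setminus S$ to low-degree neighbours in $S \setminus H$, whereas the paper bounds $|E(G[R])|$ via the vertex cover $S\cap R$ and uses that $G[R]$ has no isolated vertex; both yield the same quadratic bound.
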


\begin{proof} Since Reduction Rule~\ref{rr:no-long-path} is not applicable, from Lemma~\ref{lemma:treel-cvc} it follows that $G$ has a connected vertex cover $S$ of size at most $2(k + 3)(k + 2\ell)$. The set $H$ consists of vertices of degree at least $2(k + 3)(k + 2\ell) + 1$, and hence every vertex in $H$ is included in any connected vertex cover of $G$, which is of size at most $2(k + 3)(k + 2\ell)$. This implies that $|H| \le 2(k + 3)(k + 2\ell)$. Every vertex in $R$ has degree at most $2(k + 3)(k + 2\ell)$. Therefore, if $S \cap R$ is a vertex cover of $G[R]$, then $|E(G[R])|$ is bounded by $4(k + 3)^2(k + 2\ell)^2$. Also, by the definitions of $I$ and $R$, every vertex in $R$ has a neighbour in $R$. Therefore, there are no isolated vertices in $G[R]$. Thus, $|R|$ is bounded by $8(k + 3)^2(k + 2\ell)^2$. Now, we bound the size of $I$. For every set $H' \subseteq H$ of size at most $d$, there are at most $k + \ell + 2$ vertices in $I$ which have $H'$ as their neighbourhood. Otherwise, Reduction Rule~\ref{rr:false-twins-treel} would have been applicable. Hence, there are at most $(k + \ell + 2) \cdot \binom{2(k + 3)(k + 2\ell)}{d-1} $ vertices in $I$ which have degree at most $d$. A vertex in $I$ which is of degree at least $d+1$, is adjacent to all vertices in at least one subset of size $d$ of $H$. For a such a subset $H'$ of $H$, there are at most $k + \ell + 2$ vertices in $I$ which have $H'$ in their neighbourhood since Reduction Rule~\ref{rr:treel-large-nbd} is not applicable. Thus, there are at most $(k + \ell + 2) \binom{2(k + 3)(k + 2\ell)}{d}$ vertices in $I$ of degree at least $d$. Hence, $|I| \leq c' [k(k + 2\ell)]^{(d + 1)}$, for some fixed $c'$. Since $H \cup R$ is $\hat c k^2(k + 2\ell)^2$ (where $\hat c$ is a constant) and $d > 1$, the claim follows.
  
\end{proof}
 
\begin{theorem}\treelc\ admits a strict PSAKS, where the number of vertices is bounded by
$c[k(k + 2\ell)] ^{(\lceil {\frac{\alpha}{\alpha-1}\rceil + 1)}}$, where $c$ is some fixed constant. 
\end{theorem}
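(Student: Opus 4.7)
The plan is to obtain the PSAKS by packaging the preprocessing pipeline that has already been developed in this section: first apply the earlier (strict, $1$-safe) reduction rules, then apply the $\alpha$-safe rule tailored to the desired ratio, and finally invoke the structural size bound of Lemma~\ref{lemma:treel-approx-kernel}. Concretely, given $\alpha>1$, I would set $d=\lceil\frac{\alpha}{\alpha-1}\rceil$, which fixes the parameter inside Reduction Rule~\ref{rr:treel-large-nbd}. The reduction algorithm then exhaustively applies Reduction Rules~\ref{rr:k-no-tco}--\ref{rr:deg-one-rule}, followed by Reduction Rule~\ref{rr:no-long-path}, then Reduction Rule~\ref{rr:false-twins-treel}, and finally Reduction Rule~\ref{rr:treel-large-nbd}, cycling until none is applicable. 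Each individual rule can be applied in polynomial time (searching for a degree-$2$ path, enumerating false twins of size $\leq k+\ell+2$, or enumerating $d$-tuples in $H$), and each application strictly decreases $|V(G)|+|E(G)|$, so the whole reduction terminates in polynomial time.

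For the approximation guarantee, I would use the fact (already proven in this section) that Rules~\ref{rr:no-long-path} and~\ref{rr:false-twins-treel} are strictly $1$-safe and Rule~\ref{rr:treel-large-nbd} is strictly $\alpha$-safe. The solution-lifting algorithm is obtained by composing the lifting maps of the individual rules in reverse order of application: given a solution $S'$ to the reduced instance, we successively lift it through each rule. Since strict safeness composes (the ``max'' in the strict $\alpha$-safe inequality absorbs lower-ratio rules), the composite preprocessing is strictly $\alpha$-approximate; in particular, if the lifted solution ever exceeds $k$ we simply return a trivial feasible solution $E(G)$, consistent with the minimization objective $\min\{|S|,k+1\}$.

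For the size bound, once no reduction rule is applicable I would invoke Lemma~\ref{lemma:treel-approx-kernel} to conclude that either the instance is a \no-instance of \treelc\ (in which case we output a trivial constant-size \no-instance, which is safe since $\textsc{OPT}=k+1$ on both sides), or $|V(G)|\leq c[k(k+2\ell)]^{d+1}=c[k(k+2\ell)]^{\lceil\frac{\alpha}{\alpha-1}\rceil+1}$. Strictly speaking we cannot decide \yes/\no\ in polynomial time, so in practice the reduction algorithm simply checks whether the reduced instance exceeds the claimed size bound; if it does, it outputs a canonical trivial \no-instance, and the lifting algorithm outputs $E(G)$, which is valid by the same $\textsc{OPT}=k+1$ argument.

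The only genuinely delicate step is verifying that strict $\alpha$-safeness is preserved under composition of the reduction rules, since Reduction Rule~\ref{rr:treel-large-nbd} can be applied multiple times and interleaved with the strict rules. Here I would appeal to the standard composition lemma for strict $\alpha$-approximate polynomial-time preprocessing algorithms (see~\cite{lossy}): if one rule is strictly $\alpha_1$-safe and another is strictly $\alpha_2$-safe, then their composition is strictly $\max\{\alpha_1,\alpha_2\}$-safe, and since our strict rules have ratio $1\leq\alpha$, the overall ratio stays $\alpha$. Combining everything yields a strict $\alpha$-approximate kernel of the claimed size for every $\alpha>1$, i.e., a PSAKS.
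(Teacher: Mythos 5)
Your proposal matches the paper's proof essentially step for step: fix $d=\lceil\frac{\alpha}{\alpha-1}\rceil$, exhaustively apply Reduction Rules~\ref{rr:no-long-path}--\ref{rr:treel-large-nbd}, invoke Lemma~\ref{lemma:treel-approx-kernel} for the size bound, and output $E(G)$ as a trivial solution if the reduced instance is too large (since then $\textsc{OPT}=k+1$). You spell out the composition-of-strict-rules argument slightly more explicitly than the paper does, but the approach and all the key ingredients are the same.
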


\begin{proof} Given $\alpha >1$, we choose $d=\lceil {\frac{\alpha}{\alpha-1}} \rceil$ and apply Reduction Rules \ref{rr:no-long-path}, \ref{rr:false-twins-treel} and \ref{rr:treel-large-nbd} on the instance as long as they are applicable. The reduction rules can be applied in $\mathcal{O}([k(k + 2\ell)]^{(d + 1)}n^{\OO(1)})$ time, where $n$ is the number of vertices in the input graph. If the number of vertices in resulting graph is more than $\mathcal{O}([k(k + 2\ell)]^{d + 1})$, then by Lemma~\ref{lemma:treel-approx-kernel} we have $\textsc{OPT}(G, k)=k + 1$ and the algorithm outputs $E(G)$ as a solution. Otherwise, it has $c[k(k + 2\ell)] ^{(\lceil {\frac{\alpha}{\alpha-1}\rceil + 1)}}$ vertices. 
\end{proof}



\bibliography{References-contract}


\end{document}